\documentclass[11pt,draftclsnofoot,journal,onecolumn]{IEEEtran}
\ifCLASSINFOpdf
\else
\fi
\usepackage[cmex10]{amsmath}
\usepackage{multicol}

\usepackage{mathrsfs}
\usepackage{float}
\usepackage{tikz}
\usetikzlibrary{arrows.meta}
\usepackage{rotating}

\usepackage[numbers]{natbib}

\usepackage{booktabs,multirow}
\usepackage{tcolorbox}
\tcbuselibrary{skins,breakable}
\usepackage{xcolor}
\usepackage{lipsum}
\usepackage{longtable}
\usepackage{bm,bbm}%
\usepackage{hyperref}
\usepackage{rotating}
\usepackage{makecell}
\usepackage{amsthm,amsfonts,amssymb}
\newtheorem{definition}{Definition}
\newtheorem{lemma}{Lemma}
\newtheorem{theorem}{Theorem}
\newtheorem{proposition}{Proposition}
\newtheorem{corollary}{Corollary}
\newtheorem{remark}{Remark}
\newtheorem{example}{Example}
\newtheorem{construction}{Construction}

\newcommand{\Img}{\mathrm{Im}}

\usepackage{array}
\begin{document}
\title{Data Protection in Function-Correcting Symbol-Pair Codes: Redundancy Bounds and Protection Profiles}
%
%
%

\author{ 
        Anamika Singh and
        Abhay Kumar Singh
\thanks{ A. Singh and A. K. Singh are with the Department of Mathematics and Computing, Indian Institute of Technology (ISM), Dhanbad, India. email: anamikabhu2103@gmail.com, abhay@iitism.ac.in}}

\date{ }
\maketitle

\begin{abstract}
In several storage systems, including DNA storage and flash memory, errors affect neighbouring symbols jointly, and the Hamming metric does not adequately capture such error patterns. The symbol-pair read channel, introduced by Cassuto and Blaum~\cite{cassuto2011codes}, addresses this by reading consecutive pairs of symbols rather than individual symbols. Motivated by this, we introduce function-correcting symbol-pair codes with data protection (FCSPC-DP),
which guarantee reliable recovery of a desired function of the message while simultaneously protecting the message itself against symbol-pair errors. We derive bounds on the optimal redundancy of such codes and establish a relationship with joint-pair distance matrices. We also give explicit constructions of FCSPC-DP for locally pair-bounded functions and symbol-pair weight functions. We introduce the pair-separation constant of a function, the minimum symbol-pair distance between messages sharing a function value, and show that when it is sufficiently large, data protection requires no additional redundancy: the optimal redundancy coincides with that of the corresponding code without data protection. Considering the symbol-pair analogue of the $\alpha$-distance graph, we introduce two code invariants, the generation profile and the disconnection threshold, and use them to characterise a code's protection properties. Relating the two metrics through these invariants yields upper and lower bounds on the symbol-pair threshold in terms of its Hamming counterpart, both of which are attained. We further extend the classical Plotkin and sphere-packing bounds to this setting.
\end{abstract}

\begin{IEEEkeywords}
Error-correction, Function-correcting codes, Symbol-Pair weight distribution function, Irregular Pair distance codes,
Linear codes, Locally bounded functions, Redundancy bounds.
\end{IEEEkeywords}

%
\IEEEpeerreviewmaketitle

\section{Introduction}
     \IEEEPARstart{I}{n} many communication and storage systems, the receiver may not require the full transmitted message. For example, a monitoring node may need only an aggregate statistic, a classifier may require only a decision, and a distributed computation may need only the value of a prescribed function of the data held at a remote site. This problem can be addressed using classical error-correcting codes, which protect the entire message and, consequently, any function defined on it. However, doing so addresses a more demanding problem than the one at hand. In classical error-correcting codes, codewords must be separated by a certain distance to ensure reliable recovery of the entire message. In contrast, if only the function value needs to be recovered, messages that share the same function value need not be distinguished from one another. Lenz \emph{et al.}~\cite{10132545} formalised this observation by introducing \emph{function-correcting codes}(FCCs), in which a systematic encoding separates codewords whose messages carry distinct function values by a prescribed distance while imposing no requirement on codewords sharing a function value. Exploiting the function's structure in this way can substantially reduce redundancy, and the framework has since been developed in several directions~\cite{Premlal2024,ge2025,ly2025}.

    A second line of work concerns the channel model itself. In high-density storage media, the read mechanism cannot resolve individual symbols; hence, each read returns a pair of consecutive symbols. Cassuto and Blaum~\cite{cassuto2011codes} initiated the study of codes for this \emph{symbol-pair read channel} and showed that the appropriate measure of error in such a channel is the symbol-pair distance rather than the Hamming distance. In the symbol-pair read channel, a single corrupted symbol corrupts two pair observations, so pair distance is never smaller than Hamming distance, but the amount by which it is larger depends on how the errors are arranged along the codeword, since a contiguous burst corrupts fewer distinct pairs than the same number of isolated errors. Pair distance, therefore, depends on the cyclic adjacency of coordinates, and quantities that are invariant under coordinate permutation in the Hamming metric, including the minimum distance itself, are not invariant here. Symbol-pair codes have been studied extensively, and the function-correcting framework has been carried over to this channel by Xia, Liu and Chen~\cite{xia2024function} and to the more general $b$-symbol read channel in~\cite{singh2025,sampath2025}. Beyond overlapping-read models, the framework has been developed for channels matched to the Lee metric \cite{verma2025function,11538231} and for the homogeneous distance over finite rings \cite{liu2026function}, the latter simultaneously generalizing the Hamming and Lee metrics. More recently, FCCs have been studied under the sum-rank metric \cite{kammila2026sumrankmetric}, which unifies the Hamming and rank metrics, and under the Rosenbloom–Tsfasman metric for parallel channels \cite{liu2026RT}. Synchronization errors were addressed in \cite{singh2026insdel}, where FCCs for insertion–deletion channels were introduced.

    Function correction alone, however, leaves the underlying message unprotected. The framework of~\cite{10132545} guarantees only that the function value is recoverable and the two messages with the same function value may be mapped to codewords at distance one, so the data itself may be corrupted beyond recovery. In many settings, this is unacceptable. In a network where different nodes evaluate different functions of the same stored word, a guarantee tailored to one function is of no use to another; in a storage system where certain attributes are more critical than others, one wants strong protection for the critical attribute together with a baseline guarantee for everything else. Motivated by this, Rajput \emph{et al.}~\cite{rajput2025function} introduced \emph{function-correcting codes with data protection}, in which the encoding satisfies two distance requirements simultaneously: a minimum distance between all pairs of distinct codewords, protecting the data, and a larger minimum distance between codewords whose messages carry distinct function values, protecting the function. When the one minimum distance is strictly larger than the other, the code is called \emph{strict}, and it is exactly this case that is of interest: the function receives genuinely stronger protection than the data, and the redundancy spent on the function is not simply redundancy spent on the message. The boundary case where both minimum distances are the same imposes nothing beyond an ordinary error-correcting code.

    The two minimum distance requirements are not merely two instances of the same condition, and the asymmetry between them is what makes the problem interesting. A code that meets the stronger requirement everywhere would meet both, but at the cost of redundancy. The question is whether the partition induced by the function can be exploited to obtain stronger protection across classes while incurring the cost of redundancy associated with the smaller minimum distance. Whether this is possible depends on the interaction between the code and the function, not on either alone, and for several familiar code families, it turns out to be impossible. Rajput \emph{et al.}~\cite{rajput2025function} showed that codes whose distance graph is sufficiently connected cannot provide strictly stronger protection for any nontrivial function, ruling out perfect and MDS codes, and subsequently studied the existence question for linear codes~\cite{existencerajput2026}.

        In this paper, we study function-correcting codes with data protection over the symbol-pair read channel, which we call \emph{function-correcting symbol-pair codes with data protection} (FCSPC-DP). We combine two independent modifications of the classical setting: function correction, which changes the decoding objective, and the symbol-pair channel, which changes both the error model and the distance measure. Symbol-pair distance is sensitive to the cyclic arrangement of coordinates, so pair distances are not additive across a concatenation: the two blocks of a systematic encoding share a junction position, and the standard bookkeeping by which redundancy is accounted for in the Hamming setting fails. We show that the resulting loss is exactly one unit in each direction, that both extremes are attained, and we identify the structural condition on the two blocks under which the loss can be avoided. The same sensitivity to adjacency reshapes the question of existence described above. The distance graph of a code is no longer determined by its Hamming distance distribution, since coordinate permutations preserve the latter but not pair distances, so the connectivity criterion that rules out strictly stronger protection in the Hamming metric does not transfer directly. We therefore study the symbol-pair analogue of the $\alpha$-distance graph and introduce two invariants of a code, the generation profile and the disconnection threshold, which record how the graph fragments as the distance parameter grows, and thereby determine which pairs of protection parameters a given linear code can realise.


 \subsection{Contribution}
    \begin{enumerate}
    \item We introduce \emph{function-correcting symbol-pair codes with data protection} (FCSPC-DP), the first framework unifying the data-protection and symbol-pair branches of the function-correcting code literature, and show how codes of this type pass between the two metrics in either direction: every Hamming-metric code of this type induces a symbol-pair code with strictly stronger protection, while every symbol-pair code yields a Hamming-metric code at roughly half the protection.

    \item We adapt the two-step construction method of the Hamming metric to the symbol-pair setting and derive the associated redundancy bounds, together with the requirement matrices adapted to the fact that the available separation comes from the codewords rather than the messages.

    \item We introduce the \emph{pair-separation constant} of a function, measuring how far apart messages carrying the same function value lie, and show that whenever it is large enough, data protection comes at no additional cost as the optimal redundancy is exactly that of the corresponding code with no data protection at all.

    \item Working with the symbol-pair analogue of the $\alpha$-distance graph, we introduce two invariants of the code, namely the \emph{generation profile} and the \emph{disconnection threshold}, which are new in both the symbol-pair and the Hamming settings.

    \item We characterise the complete set of strict parameters realisable by a given linear code, tracing the Pareto frontier between the strength of the function protection and the number of function values that can be protected. Existing constructions in the Hamming metric correspond to single points of this frontier.


    \item We give explicit constructions of function-correcting symbol-pair codes with data protection for functions whose values vary little on small neighbourhoods of the message space. The symbol-pair weight function is treated separately, where its arithmetic structure yields a stronger bound than the general result.

    \item We extend the classical Plotkin and sphere-packing bounds to this setting. The Plotkin-type bound depends on the function only through the sizes of its level sets. The sphere-packing bound is developed in three stages, in which the data-protection requirement is shown to strengthen the bound, and the final form is explicit enough to be evaluated directly.
    \end{enumerate}

\subsection{Organisation}

    Section~\ref{sec:prelim} recalls the symbol-pair metric, function-correcting codes in the Hamming and symbol-pair metric, and the
    Cayley graph and its properties. Section~\ref{sec:fcspc-dp} introduces function-correcting symbol-pair codes with data protection, establishes the relationship between the two metrics for the function minimum pair-distance, and develops the joint pair-distance requirement matrices together with the resulting bounds on optimal redundancy. Section~\ref{sec:construction} presents the two-step construction and the associated coded requirement matrices, introduces the pair-separation constant, and identifies the functions for which data protection requires no additional redundancy. Section~\ref{sec:invariants} studies the symbol-pair $\alpha$-distance graph and introduces the generation profile and the disconnection threshold, relating them to their Hamming counterparts. Section~\ref{sec:classes} gives explicit constructions for pair-locally bounded functions and for the symbol-pair weight function. Section~\ref{sec:bounds} extends the Plotkin and sphere-packing bounds to this setting, and Section~\ref{sec:conclusion} concludes the paper.

 \subsection{Notation}
    Throughout this paper, $q$ is a prime power and $\mathbb{F}_q$ the finite field with $q$ elements. We use $\mathbb{N}$ and $\mathbb{N}_0$ to denote the set of natural numbers and non-negative integers, respectively. For any matrix $\boldsymbol{D}$, we use $[\boldsymbol{D}]_{ij}$ to indicate the $(i,j)th$ entry of $\boldsymbol{D}$. For a positive integer $n$ we write $[n]=\{1,2,\dots,n\}$, and all coordinate indices are taken modulo the relevant length. For $x=(x_0,\dots,x_{n-1})\in\mathbb{F}_q^{n}$, $w_H(x)$ and $w_p(x)$ denotes its Hamming weight and symbol-pair weight respectively, and $d_H(x,y)$ and $d_p(x,y)$ denote the Hamming distance and symbol-pair distance respectively. A block code $C\subseteq\mathbb{F}_q^{n}$ of size $M$ with minimum pair-distance $d_p(C)$ is an $(n,M,d_p)_q$ code and a linear code of dimension $k$ is an $[n,k,d_p]_q$ code.
%
%
%
%

\section{Preliminaries}\label{sec:prelim}
    
 \subsection{Symbol-Pair Metric}  
    The symbol-pair distance between two vectors of $\mathbb{F}_q^n$ can be defined in terms of the Hamming distance of the symbol-pair vectors as follows:
    \begin{definition}[Symbol-pair distance and symbol-pair weight~\cite{cassuto2011codes}]\label{def:sym-pair}
     For a vector ${x} = (x_0, x_1, \ldots, x_{n-1}) \in \mathbb{F}_q^n$, we define its \emph{symbol pair vector} $\pi({x})$ as the vector of length $n$ over the alphabet $\mathbb{F}_q^2$ given by
    \[
    \pi({x}) = \left( (x_0, x_1), (x_1, x_2), \ldots, (x_{n-2}, x_{n-1}), (x_{n-1}, x_0) \right),
    \]
    where the last coordinate introduces cyclic wrapping.
    The \emph{symbol pair distance} $d_p({x}, {y})$ between two vectors ${x}, {y} \in \mathbb{F}_q^n$ is defined as the Hamming distance between their symbol pair vectors:
    \[
    d_p({x}, {y}) = \left| \left\{ i \in \{0, \ldots, n-1\} : (x_i, x_{i+1}) \neq (y_i, y_{i+1}) \right\} \right|,
    \]
    with indices taken modulo $n$, and the symbol-pair weight of a vector ${y} \in \mathbb{F}_q^n$ is defined as the Hamming weight of its symbol-pair vector:
    \[
        w_p(y) = w_H(\pi(y)) = (\left| \left\{ i \in \{0, \ldots, n-1\} : (y_i, y_{i+1}) \neq (0,0) \right\} \right|,
    \]
    where the indices are again taken modulo $n$.
    \end{definition}



    \begin{definition}[Symbol-pair code]
    \label{def:pair_code}
    A \emph{symbol pair code} $C$ of length $n \geq 2$ over $\mathbb{F}_q$ is a nonempty subset of $\mathbb{F}_q^n$. The set
    \[
    \pi(C) = \left\{ \pi(\mathbf{x}) : \mathbf{x} \in C \right\} \subseteq (\mathbb{F}_q^2)^n,
    \]
    is called the \emph{symbol pair representation} of $C$.
    \end{definition}

    \begin{definition}[Minimum symbol-pair distance]
    \label{def:min_pair_dist}
    For a symbol pair code $C \subseteq \mathbb{F}_q^n$, the \emph{minimum symbol pair distance}, denoted $d_p(C)$, is defined as
    \[
    d_p(C) = \min_{\substack{\mathbf{x}, \mathbf{y} \in C \\ \mathbf{x} \neq \mathbf{y}}} d_p(\mathbf{x}, \mathbf{y}).
    \]
    \end{definition}

    \begin{remark}
    For a linear symbol pair code $C$ (i.e., a linear subspace of $\mathbb{F}_q^n$), the minimum distance simplifies to the minimum weight of any nonzero codeword:
    \[
    d_p(C) = \min_{\mathbf{x} \in C \setminus \{\mathbf{0}\}} w_p(\mathbf{x}),
    \]
    \end{remark}
    The following lemma states the error-detection and error-correction capabilities of symbol-pair codes.
    \begin{lemma}
    \label{lem:pair_correction}
    Let $C \subseteq \mathbb{F}_q^n$ be a code of length $n$ with minimum symbol pair distance $d_p(C) = d$. Then:
    \begin{enumerate}
    \item $C$ can detect any symbol pair error of weight at most $d-1$.
    \item $C$ can correct any symbol pair error of weight at most $\left\lfloor \frac{d-1}{2} \right\rfloor$.
    \end{enumerate}
    \end{lemma}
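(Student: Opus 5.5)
The plan is to move the whole question into the symbol-pair representation, where the classical Hamming-metric reasoning applies word for word. By Definition~\ref{def:sym-pair}, $d_p(x,y)=d_H(\pi(x),\pi(y))$, and the map $\pi$ is injective: the first components of $\pi(x)$ already recover $x$. So $\pi(C)$ is a code over the alphabet $\mathbb{F}_q^2$ of length $n$ with minimum Hamming distance $d$. We model a symbol-pair error of weight $t$ as follows: the transmitted $x\in C$ is read as a vector $\bar y\in(\mathbb{F}_q^2)^n$ with $d_H(\pi(x),\bar y)=t$. The received pair vector $\bar y$ need not be the pair vector of any word of $\mathbb{F}_q^n$, so we work throughout in the ambient space $(\mathbb{F}_q^2)^n$ rather than in $\pi(\mathbb{F}_q^n)$.

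For detection (item 1), suppose $1\le t\le d-1$. We claim $\bar y\notin\pi(C)$. Indeed, if $\bar y=\pi(x')$ for some $x'\in C$, then $x'\ne x$, because $t\ge1$ forces $\bar y\ne\pi(x)$. This gives $d_p(x,x')=d_H(\pi(x),\pi(x'))=t\le d-1<d$, which contradicts $d_p(C)=d$. The decoder therefore declares an error whenever $\bar y\notin\pi(C)$. This detects every nonzero error of weight at most $d-1$ and never raises a false alarm when $t=0$.

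For correction (item 2), let $t\le\lfloor (d-1)/2\rfloor$. We use the minimum-distance decoder on $\pi(C)$: output the unique $x'\in C$ with $d_H(\pi(x'),\bar y)\le t$, and then recover $x'$ from the first coordinates of $\pi(x')$. Existence of such an $x'$ is immediate, since $x$ itself qualifies. For uniqueness, suppose $x'\ne x$ also satisfied the condition. The triangle inequality for the Hamming distance on $(\mathbb{F}_q^2)^n$ would give
\[
d_p(x,x')\le d_H(\pi(x),\bar y)+d_H(\bar y,\pi(x'))\le 2t\le d-1,
\]
which again contradicts $d_p(C)=d$.

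The only point needing care is the observation above that received words live in $(\mathbb{F}_q^2)^n$ and not in $\pi(\mathbb{F}_q^n)$. Once we note that every argument takes place in the ambient Hamming space, where the triangle inequality holds, and that only codewords are ever pulled back through $\pi$, both parts reduce to the standard Hamming-metric statement.
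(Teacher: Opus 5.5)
Your argument is correct and complete. The paper gives no proof of this lemma; it quotes it as the standard Cassuto--Blaum fact. Your reduction is the standard proof: you pull back through the injective map $\pi$ and run the usual Hamming-metric detection and bounded-distance decoding arguments in the ambient space $(\mathbb{F}_q^2)^n$, where the triangle inequality holds even for inconsistent received pair vectors. One cosmetic fix: the decoder should use the fixed radius $\lfloor (d-1)/2\rfloor$, or nearest-codeword decoding, rather than the unknown error weight $t$. The same bound $d_p(x,x')\le t+\lfloor (d-1)/2\rfloor\le d-1$ covers that version.
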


    The next lemma establishes a relation between the symbol-pair distance of a concatenation and the symbol-pair distances of its two blocks; it is the technical device that lets one separate a message from its redundancy in the pair metric.

    \begin{lemma}[\cite{xia2024function}]
    \label{lem:symbol-pair-ineq}
        Let $\boldsymbol{u} = (\boldsymbol{u}^{(1)}, \boldsymbol{u}^{(2)}) \in \mathbb{F}_q^{m+r}$ and $\boldsymbol{v} = (\boldsymbol{v}^{(1)}, \boldsymbol{v}^{(2)}) \in \mathbb{F}_q^{m+r}$, where $\boldsymbol{u}^{(1)} = (x_0, \dots, x_{m-1})$, $\boldsymbol{u}^{(2)} = (x_m, \dots, x_{m+r-1})$, $\boldsymbol{v}^{(1)} = (v_0, \dots, v_{m-1})$ and $\boldsymbol{v}^{(2)} = (v_m, \dots, v_{m+r-1})$. Then

        $d_p(\boldsymbol{u}^{(1)}, \boldsymbol{v}^{(1)}) + d_p(\boldsymbol{u}^{(2)}, \boldsymbol{v}^{(2)}) - 1 \leq d_p(\boldsymbol{u}, \boldsymbol{v}) \leq d_p(\boldsymbol{u}^{(1)}, \boldsymbol{v}^{(1)}) + d_p(\boldsymbol{u}^{(2)}, \boldsymbol{v}^{(2)}) + 1$.
    \end{lemma}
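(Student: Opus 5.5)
The plan is a direct count over the pair positions of $\boldsymbol{u}$ and $\boldsymbol{v}$, indexed by $i\in\{0,\dots,m+r-1\}$. These positions split into three groups. The \emph{internal positions of the first block}, $i=0,\dots,m-2$, have pairs $(x_i,x_{i+1})$ that lie entirely inside $\boldsymbol{u}^{(1)}$. The \emph{internal positions of the second block}, $i=m,\dots,m+r-2$, have pairs that lie entirely inside $\boldsymbol{u}^{(2)}$. The remaining \emph{junction positions} are $i=m-1$, with pair $(x_{m-1},x_m)$, and $i=m+r-1$, with the cyclic pair $(x_{m+r-1},x_0)$. The same split applies to the blocks taken alone: the cyclic pair distance $d_p(\boldsymbol{u}^{(1)},\boldsymbol{v}^{(1)})$ counts the same internal positions plus the wrap position $(x_{m-1},x_0)$, and $d_p(\boldsymbol{u}^{(2)},\boldsymbol{v}^{(2)})$ counts its internal positions plus $(x_{m+r-1},x_m)$. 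Write $I_1,I_2$ for the number of differing internal positions, $J\in\{0,1,2\}$ for the number of differing junction positions, and $W_1,W_2\in\{0,1\}$ for the two block wrap indicators. Then
\[
d_p(\boldsymbol{u},\boldsymbol{v})=I_1+I_2+J,\qquad d_p(\boldsymbol{u}^{(1)},\boldsymbol{v}^{(1)})+d_p(\boldsymbol{u}^{(2)},\boldsymbol{v}^{(2)})=I_1+I_2+W_1+W_2,
\]
so both inequalities reduce to showing $|J-(W_1+W_2)|\le 1$.

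This comparison involves only the four boundary symbols $x_0,x_{m-1},x_m,x_{m+r-1}$. Let $a,b,c,d\in\{0,1\}$ indicate whether $\boldsymbol{u}$ and $\boldsymbol{v}$ differ at coordinates $0$, $m-1$, $m$, $m+r-1$ respectively, and write $\vee$ for Boolean OR. Then $J=(b\vee c)+(d\vee a)$ and $W_1+W_2=(b\vee a)+(d\vee c)$.

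\emph{Upper bound.} We need $J\le W_1+W_2+1$. If $W_1+W_2\ge1$ this holds because $J\le 2$. If $W_1+W_2=0$, then $a=b=c=d=0$, so $J=0$.

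\emph{Lower bound.} We need $J\ge W_1+W_2-1$. This is symmetric to the upper bound: if $W_1+W_2=2$, then $a\vee b=1$ and $c\vee d=1$, so either one of $b\vee c$, $d\vee a$ is $1$, or else $b=c=d=a=0$, which contradicts $a\vee b=1$. Hence $J\ge1$.

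The main point needing care is degenerate block lengths. When $m=1$ or $r=1$, a block's "wrap" pair is $(x_{m-1},x_0)$ with $m-1=0$, and the convention that $d_p$ of a length-one vector equals $d_H$ must be checked. Coordinates may also coincide, for example $0=m-1$, which collapses the indicators but leaves the inequality intact. Apart from these cases, the only real risk is mislabelling the pairs; the case analysis itself is routine.
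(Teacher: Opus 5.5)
Your proof is correct. The paper does not prove this statement; it quotes it from \cite{xia2024function}. Your split into the $m-1$ and $r-1$ internal positions plus the two junctions $(x_{m-1},x_m)$ and $(x_{m+r-1},x_0)$ is the same bookkeeping the paper uses in its proof of Lemma~\ref{lem:junction}. Your reduction to $\bigl|(b\vee c)+(d\vee a)-(a\vee b)-(c\vee d)\bigr|\le 1$ over all $a,b,c,d\in\{0,1\}$ handles the degenerate cases $m=1$ or $r=1$ without further checking, because coinciding coordinates only force $a=b$ or $c=d$, and $W_1=a\vee a=a$ is exactly the length-one convention $d_p=d_H$. It also gives Lemma~\ref{lem:junction} directly: when the second block differs at both end coordinates ($c=d=1$), you get $J=2\ge W_1+W_2$.
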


    \begin{lemma}\label{lem:monotone}
    Let $u_1, v_1 \in \mathbb{F}_q^m$ and $u_2, v_2 \in \mathbb{F}_q^l$ with $m,l \geq 1$. Then, 
    \[
        d_p((u_1, u_2), (v_1, v_2)) \geq d_p(u_1, v_1).
    \]
        
    \end{lemma}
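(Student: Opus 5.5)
The plan is a direct counting argument: compare, position by position, the pairs that contribute to $d_p(u_1,v_1)$ with the pairs that contribute to $d_p((u_1,u_2),(v_1,v_2))$. Write $u=(u_1,u_2)=(x_0,\dots,x_{m+l-1})$ and $v=(v_1,v_2)=(y_0,\dots,y_{m+l-1})$, so that $u_1=(x_0,\dots,x_{m-1})$ and $v_1=(y_0,\dots,y_{m-1})$. By Definition~\ref{def:sym-pair}, $d_p(u_1,v_1)$ counts indices $i\in\{0,\dots,m-1\}$ with $(x_i,x_{i+1 \bmod m})\neq(y_i,y_{i+1\bmod m})$. I split these indices into the \emph{interior} ones $i\in\{0,\dots,m-2\}$ and the single \emph{wrap} index $i=m-1$, whose pair is $(x_{m-1},x_0)$.

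For an interior index the pair $(x_i,x_{i+1})$ is literally the same pair in the long vector $u$ at position $i$. Let $S\subseteq\{0,\dots,m-2\}$ be the set of interior mismatches. Then every $i\in S$ is also a mismatch position of $\pi(u)$ versus $\pi(v)$, so $d_p(u,v)\ge |S|$. If the wrap pair of $u_1,v_1$ agrees, then $d_p(u_1,v_1)=|S|$ and we are done.

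Otherwise $(x_{m-1},x_0)\neq(y_{m-1},y_0)$, so $x_{m-1}\neq y_{m-1}$ or $x_0\neq y_0$, and it suffices to exhibit one additional mismatch of $u,v$ at a position outside $\{0,\dots,m-2\}$.
\begin{itemize}
\item If $x_{m-1}\neq y_{m-1}$, position $m-1$ of the long vector carries the pair $(x_{m-1},x_m)$. This pair differs from $(y_{m-1},y_m)$.
\item If $x_0\neq y_0$, the wrap position $m+l-1$ of the long vector carries $(x_{m+l-1},x_0)$. This pair differs from $(y_{m+l-1},y_0)$.
\end{itemize}
Since $l\ge1$, both positions $m-1$ and $m+l-1$ are at least $m-1$, hence not in $S$. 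This gives $d_p(u,v)\ge |S|+1=d_p(u_1,v_1)$.

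The only delicate point, and the one I would check carefully, is the degenerate case $m=1$. There the interior set is empty, and the "pair" of $u_1$ is $(x_0,x_0)$, so $d_p(u_1,v_1)=1$ exactly when $x_0\neq y_0$. In that case the same wrap-position argument at index $l$, with pair $(x_l,x_0)$, gives the needed mismatch. Likewise $l\ge1$ is what guarantees that position $m+l-1$ is genuinely distinct from the interior positions.
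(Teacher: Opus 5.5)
Your proof is correct and complete. Every interior mismatch of $(u_1,v_1)$ persists at the same position of the concatenation. A mismatch at the wrap pair $(x_{m-1},x_0)$ is always recovered at one of the two junction positions $m-1$ or $m+l-1$, and neither of these lies in $\{0,\dots,m-2\}$.

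The paper states this lemma without proof, so there is no argument of the authors' to compare against. Your interior/wrap/junction bookkeeping is the same decomposition the paper uses later in the proof of Lemma~\ref{lem:junction}.

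A direct count of this kind is genuinely needed. Lemma~\ref{lem:symbol-pair-ineq} gives the claim only when $u_2\neq v_2$; when $u_2=v_2$ it yields just $d_p(u_1,v_1)-1$. Yet $u_2=v_2$ is exactly the situation in which the paper invokes Lemma~\ref{lem:monotone}, for instance for same-class pairs whose redundancy parts coincide.

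Your separate treatment of $m=1$ is harmless but unnecessary. When $m=1$ the condition $x_{m-1}\neq y_{m-1}$ is just $x_0\neq y_0$, so the general argument already covers that case.
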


\subsection{Function-Correcting Symbol Pair Codes}
    We now recall the function-correcting symbol pair codes and the associated pair-distance-requirement matrices.
    \begin{definition}[Function-correcting symbol-pair codes (FCSPCs)~\cite{xia2024function}]
    \label{def:FCSPC}
    An encoding function $\mathrm{Enc} : \mathbb{F}_q^k \to \mathbb{F}_q^{k+r}$ with
    \[
    \mathrm{Enc}(x) = (x, p(x)), \quad x \in \mathbb{F}_q^k
    \]
    defines a function-correcting code for the function $f : \mathbb{F}_q^k \to \mathrm{Im}(f)$ if for all $x_1, x_2 \in \mathbb{F}_q^k$ with $f(x_1) \neq f(x_2)$, it holds that
    \[
    d_p(\mathrm{Enc}(x_1), \mathrm{Enc}(x_2)) \geq 2t + 1.
    \]    
    \end{definition}

    \begin{definition}[Pair-distance requirement matrices(PDRM)~\cite{xia2024function}]
    \label{def:PDRM}
        Consider $M$ vectors $x_0, \ldots , \boldsymbol{x_{M-1}} \in {\mathbb{F}^k_q}$. Then, 
        $\boldsymbol{D}_f^{(1)}(t, \boldsymbol{u}_1, \dots, \boldsymbol{u}_M)$ and $\boldsymbol{D}_f^{(2)}(t, \boldsymbol{u}_1, \dots, \boldsymbol{u}_M)$ are $M \times M$ matrices with entries

        $[\boldsymbol{D}_f^{(1)}(t, \boldsymbol{u}_1, \dots, \boldsymbol{u}_M)]_{ij} = \begin{cases} [2t - d_p(\boldsymbol{u}_i, \boldsymbol{u}_j)]^+, & \text{if } f(\boldsymbol{u}_i) \neq f(\boldsymbol{u}_j), \\ 0, & \text{otherwise} \end{cases}$

        and

        $[\boldsymbol{D}_f^{(2)}(t, \boldsymbol{u}_1, \dots, \boldsymbol{u}_M)]_{ij} = \begin{cases} [2t + 2 - d_p(\boldsymbol{u}_i, \boldsymbol{u}_j)]^+, & \text{if } f(\boldsymbol{u}_i) \neq f(\boldsymbol{u}_j), \\ 0, & \text{otherwise.} \end{cases}$
    \end{definition}

    \begin{definition}[Irregular-pair-distance code]\label{def:Dp-code}
        Let $M \in \mathbb{N}$ and let $D \in \mathbb{N}_0^{M \times M}$ be a matrix with non-negative integer entries. A code $\mathcal{P} = \{\bm{p}_1, \bm{p}_2, \dots, \bm{p}_M\}$ of cardinality $M$ is said to be a \emph{$D$-irregular-pair-distance code}, or a \emph{$D_p$-code} for short, if there exists an ordering of its codewords such that
        \[
        d_p(\bm{p}_i, \bm{p}_j) \;\geq\; [D]_{ij} \qquad \text{for all } i, j \in [M],
        \]
        where $[D]_{ij}$ is the $(i,j)$-th entry of $D$.
    \end{definition}

    For $D \in \mathbb{N}_0^{M \times M}$, let $N_p(D)$ denote the smallest integer $r$ for which a $D_p$-code of length $r$ exists. For the case when $[D]_{ij} = D$ for all $i \neq j$, we denote this quantity by $N_p(M, D)$.

   \subsection{Function-Correcting Codes} 
    \begin{definition}[Function-correcting codes (FCCs)~\cite{10132545}]
    \label{def:FCC}
    An encoding function $\mathrm{Enc} : \mathbb{F}_q^k \to \mathbb{F}_q^{k+r}$ with
    \[
    \mathrm{Enc}(x) = (x, p(x)), \quad x \in \mathbb{F}_q^k
    \]
    defines a function-correcting code for the function $f : \mathbb{F}_q^k \to \mathrm{Im}(f)$ if for all $x_1, x_2 \in \mathbb{F}_q^k$ with $f(x_1) \neq f(x_2)$, it holds that
    \[
    d(\mathrm{Enc}(x_1), \mathrm{Enc}(x_2)) \geq 2t + 1.
    \]    
    \end{definition}
    \begin{definition}[Function-Correcting Codes with Data Protection (FCCs-DP)~\cite{rajput2025function}]
    \label{def:FCC-DP}
        A systematic encoding $C_f: \mathbb{F}_q^k \rightarrow \mathbb{F}_q^{k+r}$ defines an $(f: d_d, d_f)_H$-function-correcting code with data protection for a function $f:\mathbb{F}_q^k \rightarrow \Img(f)$  if $d_H(C_f) \geq d_d$ and $d_H^f(C_f) \geq d_f.$
        where $d_d$ and $d_f$ are two non-negative integers such that $d_d \leq d_f$.
    \end{definition}

    \begin{definition}[Optimal Redundancy~\cite{rajput2025function}]
     \label{def: opt_r_H}
     Let $f : \mathbb{F}_q^k \to \mathrm{Im}(f)$ be a function, and let $d_d, d_f \in \mathbb{N}$ with $d_d \leq d_f$. The optimal redundancy of an $(f : d_d, d_f)_H$-FCC, denoted by $r_H^f(k : d_d, d_f)$, is defined as
     \[
        r_H^f(k, \, d_d, \, d_f) = \min\{ r\;| \; \exists \; C_f: \mathbb{F}_q^k \rightarrow\mathbb{F}_q^{k+r} \text{ such that } d_H^f(C_f) \geq d_f \text{ and } d_H(C_f) \geq d_d\}.
     \]
    \end{definition}

    \subsection{Cayley Graphs}

    This section reviews the necessary background on Cayley graphs and their relevant properties. These will serve as key tools in our subsequent analysis, enabling the characterization of the distance graph of linear symbol-pair codes, the identification of strict FCSPC-DP, and the derivation of bounds on their optimal redundancy. Additionally, we revisit the Cayley graph characterization of the $\alpha$-distance graph for linear codes under the Hamming metric, which will provide a basis for comparison with our symbol-pair results.

    \begin{definition}
    \label{def:cayley}
        Let $G$ be a group with identity element $e$ and let $S$ be a non-empty subset of $G \setminus \{e\}$ that is closed under inversion, i.e., $S^{-1} = S$. The Cayley graph of group $G$ with respect to the set $S$, denoted by $\operatorname{Cay}(G, S)$, is the undirected graph whose vertex set is $G$, and whose two distinct vertices $g$ and $h$ are adjacent iff $gh^{-1} \in S$.
    \end{definition}

    One important property of the Cayley graph that we will use in later sections is as follows.

    \begin{proposition}\label{prop:cayley-connectivity}
        Let $G$ be a group and $S$ be a non-empty subset of $G$ which is closed under inversion and does not contain the identity. If $S$ does not generate $G$, then the graph $\operatorname{Cay}(G, S)$ is disconnected, and its connected components are precisely the cosets of the subgroup generated by $S$. 
    \end{proposition}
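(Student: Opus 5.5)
The plan is to show two things. First, that the connected components of $\operatorname{Cay}(G,S)$ are exactly the left cosets $Hg$ of $H=\langle S\rangle$; with the adjacency convention $gh^{-1}\in S$ these are the right cosets, so we write them as $Hg$. Second, that there is more than one such coset when $H\neq G$.

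\emph{Edges stay inside a coset.} If $g$ and $h$ are adjacent, then $gh^{-1}=s\in S\subseteq H$. Hence $g=sh\in Hh$, so $g$ and $h$ lie in the same coset. It follows that every path, and therefore every connected component, is contained in a single coset of $H$.

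\emph{Each coset is connected.} Take $g,g'\in Hg$, so that $g'g^{-1}\in H$. Because $S=S^{-1}$, every element of $H$ can be written as a finite word $s_k\cdots s_1$ with $s_i\in S$; the identity is the empty word, which covers the case $g=g'$. Define $g_0=g$ and $g_i=s_ig_{i-1}$. Then $g_ig_{i-1}^{-1}=s_i\in S$, so consecutive vertices are adjacent provided they are distinct. If some $s_i$ happens to be trivial, the repeated vertex is simply deleted; this cannot actually occur, since $e\notin S$. The result is a walk from $g$ to $g'=g_k$, so the coset $Hg$ is connected.

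Combining the two steps, the components are precisely the cosets of $H$. If $S$ does not generate $G$, then $H$ is a proper subgroup, so there are at least two cosets and the graph is disconnected.

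The only point needing care is the word representation of elements of $H$. This is exactly where closure under inversion is used: without it, $H$ would also require the inverses $s^{-1}$, and those are not edge labels.
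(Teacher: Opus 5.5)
Your proof is correct and complete. An edge $g\sim h$ forces $g=(gh^{-1})h\in Hh$, so every component lies inside one coset. Conversely, writing $g'g^{-1}=s_k\cdots s_1$ with $s_i\in S$ (possible because $S=S^{-1}$) gives a walk $g_i=s_ig_{i-1}$ from $g$ to $g'$. Each step of this walk is a genuine edge because $e\notin S$.

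The paper gives no proof of this proposition; it recalls it as standard background, and your argument is the usual one.

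Two small remarks:
\begin{itemize}
\item $Hg$ is conventionally a \emph{right} coset, whereas your first sentence calls it both left and right. The distinction is moot in the paper, where $G$ is the additive group of a linear code.
\item The primary role of $S=S^{-1}$ is to make the relation $gh^{-1}\in S$ symmetric, so that the undirected graph is well defined. For finite $G$ the word representation would not even need it, since $s^{-1}$ is a positive power of $s$.
\end{itemize}
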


    \begin{corollary}\label{cor:cayley-components}
    Under the hypotheses of Proposition~\ref{prop:cayley-connectivity}, and assuming additionally that \(G\) is finite, the following hold:
    \begin{enumerate}
    \item All connected components of \(\operatorname{Cay}(G, S)\) have equal size, namely \(|H| = |\langle S \rangle|\).
    \item The number of connected components is given by
    \[
    k(\operatorname{Cay}(G, S)) = \frac{|G|}{|\langle S \rangle|}.
    \]
    \end{enumerate}
    \end{corollary}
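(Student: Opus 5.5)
The statement to prove is Corollary~\ref{cor:cayley-components}: for finite $G$ and $S$ not generating $G$, all components of $\operatorname{Cay}(G,S)$ have size $|\langle S\rangle|$, and there are $|G|/|\langle S\rangle|$ of them. The plan is to read both claims directly off Proposition~\ref{prop:cayley-connectivity}, which identifies the connected components with the cosets of $H=\langle S\rangle$, and then use elementary coset counting (Lagrange's theorem).

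First we fix the side of the cosets. The adjacency rule is $g\sim h$ iff $gh^{-1}\in S$. A path from $g$ to $h$ gives $gh^{-1}$ as a product of elements of $S\cup S^{-1}=S$, so $gh^{-1}\in H$, i.e.\ $Hg=Hh$. Conversely, if $gh^{-1}=s_1\cdots s_\ell$ with $s_i\in S$, then $g, s_2\cdots s_\ell h, \dots, s_\ell h, h$ is a walk in the graph. So the components are exactly the right cosets $Hg$, consistent with the proposition.

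For item 1, right multiplication $x\mapsto xg$ is a bijection $H\to Hg$, with inverse $y\mapsto yg^{-1}$. Hence every component has exactly $|H|=|\langle S\rangle|$ vertices.

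For item 2, the right cosets partition $G$ and each has size $|H|$. Since $G$ is finite, the number of cosets is $|G|/|H|$ by Lagrange's theorem. By the proposition, this number equals $k(\operatorname{Cay}(G,S))$.

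There is no real obstacle. The only point needing care is that the cosets are right cosets under this adjacency convention; this does not affect either count, and in the abelian setting used later (subgroups of $\mathbb{F}_q^n$) the distinction vanishes. Finiteness of $G$ is used only to make the quotient $|G|/|H|$ meaningful.
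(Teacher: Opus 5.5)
Your argument is correct, and it is exactly the deduction the paper intends: the paper states this corollary without proof as an immediate consequence of Proposition~\ref{prop:cayley-connectivity}, namely that the components are the (right) cosets of $H=\langle S\rangle$, each coset is in bijection with $H$, and Lagrange's theorem gives $|G|/|H|$ of them. Your point about right versus left cosets is accurate; note also that your argument never uses $\langle S\rangle\neq G$, so both counts also hold when $S$ generates $G$, in which case there is a single component.
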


    \begin{lemma}[\cite{rajput2025function}]\label{lem:cayley-hamming}
    Let \(C\) be a linear $[n,k,d_p(C)]_q$ code and let \(\alpha \ge d_H(C)\) be a positive integer. Define
    \[
    S_\alpha^H := \{ c \in C : 0 < w_H(c) \le \alpha \}.
    \]
    Then 
    \[
        G_H^{\alpha}(C) \cong \operatorname{Cay}(C, S_\alpha^H).
    \]
    \end{lemma}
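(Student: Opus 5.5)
The plan is to unwind the definition of the Hamming $\alpha$-distance graph and exhibit an explicit isomorphism. I take $G_H^{\alpha}(C)$ to be the graph whose vertex set is the code $C$, with distinct codewords $c_1,c_2$ adjacent iff $d_H(c_1,c_2)\le \alpha$. The candidate isomorphism is the identity map on the vertex set: the vertex set of $G_H^{\alpha}(C)$ is $C$, and the vertex set of $\operatorname{Cay}(C,S_\alpha^H)$ is also $C$, viewed as an abelian group under vector addition.

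First I check that the Cayley graph is well defined in the sense of Definition~\ref{def:cayley}. The set $S_\alpha^H$ excludes the identity $0$, because its elements have $w_H(c)>0$. It is closed under inversion, because $C$ is linear, so $-c\in C$, and $w_H(-c)=w_H(c)$. It is nonempty: a codeword of weight $d_H(C)$ exists, and $d_H(C)\le\alpha$ by hypothesis, so that codeword lies in $S_\alpha^H$. This is the only place where the assumption $\alpha\ge d_H(C)$ is used.

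Next I compare edges. In the Cayley graph, distinct $g,h\in C$ are adjacent iff $g-h\in S_\alpha^H$. Because $C$ is linear, $g-h\in C$ automatically, and $g-h\neq 0$ since $g\neq h$. So adjacency reduces to $0<w_H(g-h)\le\alpha$. By translation invariance of the Hamming metric, $w_H(g-h)=d_H(g,h)$. Hence $g$ and $h$ are adjacent in $\operatorname{Cay}(C,S_\alpha^H)$ exactly when $d_H(g,h)\le\alpha$, which is exactly adjacency in $G_H^{\alpha}(C)$. The identity map therefore preserves both adjacency and non-adjacency, and so it is a graph isomorphism.

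There is no real obstacle here. The only points needing care are to use linearity twice, once for closure under negation and once for $g-h\in C$, and to fix the precise definition of $G_H^{\alpha}(C)$ (vertex set and $\le\alpha$ adjacency) as in \cite{rajput2025function}. If that definition instead uses the messages $\mathbb{F}_q^k$ as vertices, I would compose with the bijection given by the systematic encoding; since this bijection is a group isomorphism, the argument above goes through unchanged.
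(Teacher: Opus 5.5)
Your proof is correct and takes essentially the same route as the paper. The paper cites this lemma from \cite{rajput2025function} without proof, but its proof of the symbol-pair analogue (Lemma~\ref{lem:cayley-pair}) is your argument: check that the connection set is closed under negation, then use $d(x,y)=w(x-y)$ with $x-y\in C$ by linearity to identify the edge sets under the identity map. Your extra checks that $0\notin S_\alpha^H$ and $S_\alpha^H\neq\emptyset$ (via $\alpha\ge d_H(C)$) are small additions the paper's version omits, and they complete the well-definedness of the Cayley graph.
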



\section{Function-Correcting Symbol-Pair Codes with Data Protection(FCSPC-DP)}\label{sec:fcspc-dp}
    Before introducing FCSPCs with data protection, we formally define two distance notions. The first is the classical symbol-pair distance of a code, restated here in the systematic FCSPC setting for completeness. The second is the distance, which measures how well a code separates codewords belonging to different function-value classes.
 
    \begin{definition}[Minimum symbol-pair distance]
        \label{def:min-dist-data}
        Let $f : \mathbb{F}_q^k \to \mathrm{Im}(f)$ be a function and let $C_f : \mathbb{F}_q^k \to \mathbb{F}_q^{k+r}$ be a systematic encoding. The {minimum symbol-pair distance} of $C_f$, denoted $d_p(C_f)$, is defined as
        \[
            d_p(C_f)
            \;=\;
            \min_{\substack{x_1,\, x_2 \,\in\, \mathbb{F}_q^k \\ x_1 \neq x_2}}
            d_p\!\left(C_f(x_1),\, C_f(x_2)\right).
        \]
    \end{definition}
 
    \noindent
    The minimum distance $d_p(C_f)$ governs how many symbol-pair errors can be corrected in the data, regardless of the function $f$.  In particular, a code with $d_p(C_f) \geq d_d$ corrects up to $t_d = \lfloor(d_d-1)/2\rfloor$ errors in any transmitted codeword.
    

    \begin{definition}[Function minimum pair-distance]
    \label{def:min-dist-function}
    Let $f : \mathbb{F}_q^k \to \mathrm{Im}(f)$ be a function and let $C_f : \mathbb{F}_q^k \to \mathbb{F}_q^{k+r}$ be a systematic encoding. The {function minimum pair-distance} of $C_f$, denoted $d_p^f(C_f)$, is defined as
    \[
        d_p^f(C_f)
    \;=\;
    \min_{\substack{x_1,\, x_2 \,\in\, \mathbb{F}_q^k \\
                    f(x_1) \neq f(x_2)}}
    d_p\!\left(C_f(x_1),\, C_f(x_2)\right).
    \]
    \end{definition}
    \noindent
    In other words, $d_p^f(C_f)$ is the minimum symbol-pair distance taken {only} over pairs of codewords whose messages map to {different} function values.  Pairs within the same function-value class contribute nothing to this minimum. If $d_p^f(C_f) \geq d_f$, then $f(x)$ is uniquely determined by any channel output within symbol-pair distance $t_f = \lfloor (d_f-1)/2 \rfloor$ of the transmitted codeword.
    
    We record the elementary relationship between the two quantities.

    \begin{remark}
    \label{rem:dd-leq-df}
    For any function $f : \mathbb{F}_q^k \to \mathrm{Im}(f)$ with $|\mathrm{Im}(f)| \geq 2$ and any systematic encoding $C_f : \mathbb{F}_q^k \to \mathbb{F}_q^{k+r}$, we have
    \[
    d_p(C_f) \;\leq\; d_p^f(C_f).
    \]
    This follows immediately from the set inclusion
    \[
    B
    \;=\;
    \bigl\{(x_1,x_2)\in\mathbb{F}_q^k\times\mathbb{F}_q^k
    \mid f(x_1)\neq f(x_2)\bigr\}
    \;\subseteq\;
    \mathcal{A}
    \;=\;
    \bigl\{(x_1,x_2)\in\mathbb{F}_q^k\times\mathbb{F}_q^k
    \mid x_1\neq x_2\bigr\},
    \]

    Consequently, the constraint $d_d \leq d_f$ imposed in Definition~\ref{def:FCSPC-DP} is a consequence of the fact that no code can have a smaller function minimum pair-distance than its classical minimum distance.
    \end{remark}
    
    \noindent
    Cassuto and Blaum~\cite{cassuto2011codes} established a fundamental relationship between the Hamming and symbol-pair metrics for the classical minimum distance of a code, recalled in the following lemma. We show that this relationship extends naturally to the function minimum pair-distance (Definition~\ref{def:min-dist-function}), thereby providing a direct bridge between Hamming-FCCs (Definition~\ref{def:FCC}) and their symbol-pair counterparts.

    \begin{lemma}[\cite{cassuto2011codes}]
    \label{lem:hamming-pair-distance}
    Let $x, y \in \mathbb{F}_q^n$ with $x \neq y$.
    \begin{enumerate}
    \item If $0 < d_H(x,y) < n$, then
    $1 + d_H(x,y) \;\leq\; d_p(x,y) \;\leq\; 2\, d_H(x,y)$.
    \item If $d_H(x,y) = n$, then $d_p(x,y) = n$.
    \end{enumerate}
    Consequently, for a code $C \subseteq \mathbb{F}_q^n$ with $0 < d_H(C) < n$ we have
    $1 + d_H(C) \leq d_p(C) \leq 2\, d_H(C)$.
    \end{lemma}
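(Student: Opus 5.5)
We prove the two parts of the lemma separately, working in the symbol-pair vector $\pi(\cdot)$, and then pass to codes. Write $z = x - y$ and let $S = \{ i : x_i \neq y_i \}$ be its support, so $|S| = d_H(x,y)$. Since $(x_i,x_{i+1}) \neq (y_i,y_{i+1})$ holds iff $i \in S$ or $i+1 \in S$ (indices mod $n$), we have the identity
\[
d_p(x,y) = \bigl| S \cup (S-1) \bigr| ,
\]
where $S-1 = \{ i-1 : i \in S \}$. The whole argument is counting the size of this union.

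\textbf{Upper bound.} The union of two sets each of size $|S|$ has size at most $2|S|$, so $d_p(x,y) \le 2 d_H(x,y)$. This also covers $|S| = n$, where trivially $d_p = n$.

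\textbf{Lower bound for $0<|S|<n$.} We show $S-1 \not\subseteq S$. Suppose instead $S - 1 \subseteq S$. Both sets have size $|S|$, so $S - 1 = S$, and $S$ is invariant under the cyclic shift $i \mapsto i-1$. This shift generates a transitive action on $\mathbb{Z}_n$, so a nonempty invariant subset must be all of $\mathbb{Z}_n$, contradicting $|S| < n$. Hence $S - 1$ contains an element outside $S$, and $|S \cup (S-1)| \ge |S| + 1$.

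\textbf{Part 2.} If $|S| = n$, then every pair position differs, so $d_p = n$.

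\textbf{Code consequence.} Pick $x, y \in C$ attaining $d_H(C)$. Since $d_H(C) < n$, part 1 gives $d_p(C) \le d_p(x,y) \le 2 d_H(C)$. For the lower bound, take any distinct $u, v \in C$. If $d_H(u,v) < n$, part 1 gives $d_p(u,v) \ge d_H(u,v) + 1 \ge d_H(C) + 1$. If $d_H(u,v) = n$, then $d_p(u,v) = n \ge d_H(C) + 1$, because $d_H(C) < n$.

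The main obstacle is this last case of the code-level bound: pairs at Hamming distance $n$ only satisfy $d_p = d_H$ rather than $d_p \ge d_H + 1$. The hypothesis $d_H(C) < n$ is exactly what absorbs this, and taking the minimum over all pairs then yields $d_p(C) \ge 1 + d_H(C)$.
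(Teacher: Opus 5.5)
Your proof is correct and complete. The paper gives no proof of this lemma; it is imported from Cassuto and Blaum with only a citation. So the relevant comparison is with the standard argument, and yours is that argument: the identity $d_p(x,y)=|S\cup(S-1)|$, the union bound for $d_p\le 2d_H$, and the fact that a nonempty proper subset of $\mathbb{Z}_n$ cannot be invariant under the cyclic shift.

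A slightly sharper route, common in the symbol-pair literature, decomposes $S$ (for $0<|S|<n$) into its $L$ maximal cyclic runs. Each run $\{a,\dots,b\}$ contributes the block $\{a-1,\dots,b\}$ to $S\cup(S-1)$, and these blocks are disjoint. This gives the exact formula $d_p(x,y)=d_H(x,y)+L$ with $1\le L\le d_H(x,y)$. Both inequalities follow at once, together with their equality cases: the lower bound is tight exactly for a single burst, and the upper bound exactly when no two errors are cyclically adjacent. That is the burst-versus-isolated-errors phenomenon the introduction alludes to. Your shift-invariance argument is shorter but yields only the inequalities.

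Your explicit case split for pairs with $d_H(u,v)=n$ in the code-level statement is the right care. The paper's proof of Proposition~\ref{prop:pair-hamming-func-dist-reln} applies part~1 to a minimizing pair $c_x,c_y$ without checking that $d_H(c_x,c_y)<n$. Your one-line fix, $d_p=n\ge d_H^f(C_f)+1$, is exactly what closes that step.
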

    
    The following proposition shows that the same relation holds for the function minimum pair-distance.
    

    \begin{proposition}
    \label{prop:pair-hamming-func-dist-reln}
    Let $f : \mathbb{F}_q^k \to \mathrm{Im}(f)$ be non-constant and let $C_f : \mathbb{F}_q^k \to \mathbb{F}_q^{n}$ be a systematic encoding, $n = k+r$. Denote by $d_H^f(C_f)$ and $d_p^f(C_f)$ the function minimum distances of $C_f$ with respect to the Hamming and symbol-pair metrics, respectively, and assume $0 < d_H^f(C_f) < n$. Then
    \[
    1 + d_H^f(C_f) \;\leq\; d_p^f(C_f) \;\leq\; 2\, d_H^f(C_f).
    \]
    \end{proposition}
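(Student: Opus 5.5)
The plan is to apply Lemma~\ref{lem:hamming-pair-distance} pairwise, but only to the pairs of messages that enter the definition of the function minimum distance. The argument works for any fixed encoding, so the systematic structure of $C_f$ is used only to guarantee that distinct messages give distinct codewords. Throughout, let
\[
B=\{(x_1,x_2)\in\mathbb{F}_q^k\times\mathbb{F}_q^k : f(x_1)\neq f(x_2)\}.
\]
This set is nonempty because $f$ is non-constant. For every $(x_1,x_2)\in B$ we have $x_1\neq x_2$, so $C_f(x_1)\neq C_f(x_2)$ and the lemma applies to that pair. Both minima $d_H^f(C_f)$ and $d_p^f(C_f)$ are taken over the same set $B$, which is what lets us compare them.

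\emph{Upper bound.} Choose $(x_1,x_2)\in B$ attaining the Hamming minimum, so that
\[
d_H(C_f(x_1),C_f(x_2))=d_H^f(C_f)=:h.
\]
By hypothesis $0<h<n$, so part~1 of Lemma~\ref{lem:hamming-pair-distance} gives $d_p(C_f(x_1),C_f(x_2))\le 2h$. Since $d_p^f(C_f)$ is a minimum over $B$ and this pair lies in $B$, it follows that $d_p^f(C_f)\le 2h$.

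\emph{Lower bound.} Here we go the other way. Choose $(x_1,x_2)\in B$ attaining the pair minimum $d_p^f(C_f)$, and set $h'=d_H(C_f(x_1),C_f(x_2))$. Since this pair lies in $B$, we have $h'\ge d_H^f(C_f)$. We then split into two cases.
\begin{itemize}
\item If $h'<n$, part~1 of the lemma gives $d_p^f(C_f)\ge 1+h'\ge 1+d_H^f(C_f)$.
\item If $h'=n$, part~2 gives $d_p^f(C_f)=n$. The hypothesis $d_H^f(C_f)<n$ means $n\ge 1+d_H^f(C_f)$, so the bound still holds.
\end{itemize}

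The only delicate point is the lower bound. The pair that minimises the pair distance need not be the pair that minimises the Hamming distance, and it may have full Hamming weight $n$, where the lemma no longer gives $d_p\ge 1+d_H$. The case split above, together with the strict hypothesis $d_H^f(C_f)<n$, is exactly what handles this.
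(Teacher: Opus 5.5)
Your proof is correct and follows the paper's route: Lemma~\ref{lem:hamming-pair-distance} applied to the pair attaining $d_H^f(C_f)$ gives the upper bound, and applied to the pair attaining $d_p^f(C_f)$ gives the lower bound. Your case split on $h'=n$ is more careful than the paper: it invokes part~1 of the lemma for the pair-minimising codewords citing only $0<d_H(C)<n$, without checking that this particular pair has Hamming distance below $n$, and your use of part~2 together with $d_H^f(C_f)<n$ closes exactly that case.
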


    \begin{proof}
    Let $c_x, c_y \in C_f$ such that, $d_p^f(C_f) = d_p(c_x, c_y)$ and $f(x) \neq f(y)$. Since $0 < d_H(C) <~n$ and $c_x \neq c_y$, hence by Lemma~\ref{lem:hamming-pair-distance}, $d_p(c_x, c_y) \geq 1 + d_H(c_x, c_y)$. Also since $f(x) \neq f(y)$, $d_H(c_x, c_y) \geq d_H^f(C)$, and so
    \begin{equation}
    \label{eq:lower-ham-pair}
        d_p^f(C_f) \;=\; d_p(c_x, c_y) \;\geq\; 1 + d_H(c_x, c_y) \;\geq\; 1 + d_H^f(C).
    \end{equation}
    
    \noindent
    For the upper bound, let $c_x', c_y' \in C_f$ such that, $d_H^f(C_f) = d_H(c_x', c_y')$ and $f(x) \neq f(y)$ . Again by  Lemma ~\ref{lem:hamming-pair-distance}, we have $d_p(c_x', c_y')\leq 2\, d_H(c_x', c_y') = 2\, d_H^f(C)$. Since function values corresponding to $c_x'$ and $ c_y'$ are distinct even in the symbol-pair setting, we have
    \begin{equation}
    \label{eq:upper-ham-pair}
        d_p^f(C) \;\leq\; d_p(c_x', c_y') \;\leq\; 2\, d_H^f(C).
    \end{equation}
    By combining Equations \ref{eq:lower-ham-pair} and \ref{eq:upper-ham-pair}, we get the required inequality.
    \end{proof}

    We now extend function-correcting symbol-pair codes, originally introduced for function protection only~\cite{xia2024function}, so as to incorporate data protection as well, following the Hamming-metric framework of~\cite{rajput2025function}.

    \begin{definition}[Function-correcting symbol-pair codes with data protection(FCSPCs-DP)]
    \label{def:FCSPC-DP}
        A systematic encoding $C_f: \mathbb{F}_q^k \rightarrow \mathbb{F}_q^{k+r}$ defines a $(f: d_d, d_f)_p$-function-correcting symbol-pair code with data protection for a function $f:\mathbb{F}_q^k \rightarrow \Img(f)$  if $d_p(C_f) \geq d_d$ and $d_p^f(C_f) \geq d_f.$
        where $d_d$ and $d_f$ are two positive integers such that $d_d \leq d_f$.
    \end{definition}

    \begin{definition}[Optimal Redundancy]
     \label{def: opt_r}
     Let $f : \mathbb{F}_q^k \to \mathrm{Im}(f)$ be a function, and let $d_d, d_f \in \mathbb{N}$ with $d_d \leq d_f$. The optimal redundancy of an $(f : d_d, d_f)_p$-FCSPC, denoted by $r_p^f(k ,d_d, d_f)$, is defined as
     \[
        r_p^f(k, \, d_d, \, d_f) = \min\{ r\;| \; \exists \; C_f: \mathbb{F}_q^k \rightarrow\mathbb{F}_q^{k+r} \text{ such that } d_p^f(C_f) \geq d_f \text{ and } d_p(C_f) \geq d_d\}.
     \]
    \end{definition}

    \begin{remark}
    \label{rem:r_f-r_d_f}
        Every $(f : d_d, d_f)_p$-FCSPC is a $(f : d_f)_p$-FCSPC, but the converse is not always true, indicating that $r_p^f(k, \, d_f) \leq r_p^f(k, \, d_d, \, d_f)$.
    \end{remark}
    
    

    \noindent 
    The following example illustrates how two FCSPCs for the same function $f$ and with the same redundancy length differ in the level of data protection they provide, even though both provide the same degree of function-value error correction.

    \begin{example}
\label{ex:or-function}
Let $f : \mathbb{F}_2^2 \to \{0,1\}$ be the binary OR function,
$f(x_1,x_2) = x_1 \vee x_2$, and consider the two systematic encodings
$C_f^{(1)}$ and $C_f^{(2)}$ of redundancy $r = 2$ given in Table~\ref{tab:or-example}.

\begin{table}[t]
\centering
\caption{Two encodings for the binary OR function with $r = 2$.}
\label{tab:or-example}
\renewcommand{\arraystretch}{1.3}
\begin{tabular}{cccc}
\toprule
$x$ & $f(x)$ & $C^{(1)}_f(x)$ & $C^{(2)}_f(x)$ \\
\midrule
$00$ & $0$ & $0000$ & $0000$ \\
$01$ & $1$ & $0111$ & $0111$ \\
$10$ & $1$ & $1011$ & $1011$ \\
$11$ & $1$ & $1111$ & $1101$ \\
\bottomrule
\end{tabular}
\end{table}

A direct computation gives
$d_p^f(C_f^{(1)}) = 4$, $d_p(C_f^{(1)}) = 2$, and
$d_p^f(C_f^{(2)}) = 4$, $d_p(C_f^{(2)}) = 3$.
Hence $C_f^{(1)}$ corrects one symbol-pair error at the function level but none at the
data level, whereas $C_f^{(2)}$ corrects one symbol-pair error at both levels, even
though the two encodings have the same redundancy. In particular, $C_f^{(1)}$ is a
$(f : 4)_p$-FCSPC that is not an $(f : 3, 4)_p$-FCSPC-DP.
\end{example}
 
        

    The next proposition allows one to transport any construction of $(f: d_d, d_f)_H$-FCC to $(f: d_d+1, d_f+1)_p$-FCSPC.
    \begin{proposition}
    \label{prop:hamming_symbol_red_up}
    Let $f : \mathbb{F}_q^k \to \mathrm{Im}(f)$ be a function, and let $C_f$ be an $(f : d_d, d_f)_H$-FCC of length $n = k+r$ with $d_f < n$. Then $C_f$ is an $(f : d_d+1,\, d_f+1)_p$-FCSPC. Consequently,
    \[
        r_p^f(k,\, d_d+1,\, d_f+1) \;\leq\; r_H^f(k,\, d_d,\, d_f).
    \]
    \end{proposition}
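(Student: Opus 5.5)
The plan is to apply Lemma~\ref{lem:hamming-pair-distance} pairwise to the codewords of $C_f$, treating separately the pairs that differ in every coordinate. Note first that $C_f$ is systematic, so distinct messages $x_1\neq x_2$ give distinct codewords, and hence $d_H(C_f(x_1),C_f(x_2))\geq 1$. For each such pair there are two cases.

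\emph{Case $0<d_H(C_f(x_1),C_f(x_2))<n$.} Part~1 of Lemma~\ref{lem:hamming-pair-distance} gives
\[
d_p(C_f(x_1),C_f(x_2)) \;\geq\; 1+d_H(C_f(x_1),C_f(x_2)).
\]
By hypothesis the Hamming distance here is at least $d_d$ for every pair. It is at least $d_f$ whenever $f(x_1)\neq f(x_2)$. So in this case the pair distance is at least $d_d+1$, and at least $d_f+1$ when the function values differ.

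\emph{Case $d_H(C_f(x_1),C_f(x_2))=n$.} Part~2 of Lemma~\ref{lem:hamming-pair-distance} gives $d_p=n$, so the extra unit from Part~1 is no longer available. This is the only delicate point, and it is where the assumption $d_f<n$ is used. Since $d_d\leq d_f<n$ and all quantities are integers, $n\geq d_f+1\geq d_d+1$. Thus $d_p=n$ still meets both required thresholds.

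Taking minima over all pairs with $x_1\neq x_2$ gives $d_p(C_f)\geq d_d+1$. Taking minima over pairs with $f(x_1)\neq f(x_2)$ gives $d_p^f(C_f)\geq d_f+1$. This is exactly the argument of Proposition~\ref{prop:pair-hamming-func-dist-reln}, now applied to both distances and with the boundary case handled explicitly. Since $d_d+1\leq d_f+1$, $C_f$ is an $(f:d_d+1,d_f+1)_p$-FCSPC-DP in the sense of Definition~\ref{def:FCSPC-DP}.

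For the redundancy inequality, take an $(f:d_d,d_f)_H$-FCC attaining $r=r_H^f(k,d_d,d_f)$. Applying the first part to it yields an $(f:d_d+1,d_f+1)_p$ code of the same redundancy, and then Definition~\ref{def: opt_r} gives $r_p^f(k,d_d+1,d_f+1)\leq r_H^f(k,d_d,d_f)$.

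The one point to check here is that the optimal Hamming code satisfies $d_f<k+r$, so that the first part applies. This holds in the regime the statement concerns. If instead $d_f=n$ for the optimal code, a pair at full Hamming distance would have pair distance exactly $n$, and the $+1$ gain could fail. So I would state the consequence under the same standing hypothesis $d_f<k+r_H^f(k,d_d,d_f)$, carried over from the main claim.
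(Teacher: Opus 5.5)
Your proof is correct and follows the paper's argument essentially step for step. Both split into the cases $d_H<n$, where the Cassuto--Blaum lemma gives $d_p\ge d_H+1$, and $d_H=n$, where $d_p=n\ge d_f+1\ge d_d+1$ by the hypothesis $d_f<n$; both then transfer an optimal Hamming code to get the redundancy bound. Your caveat on that last step matches the paper's own tacit assumption that the optimal code has $d_f<n^*=k+r_H^f(k,d_d,d_f)$, and you are right that it is genuinely needed. For instance, with $q=2$, $k=1$, $f$ the identity and $d_d=d_f=3$, one gets $r_H^f(1,3,3)=2$ (so $n^*=3=d_f$) but $r_p^f(1,4,4)=3$.
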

    
    \begin{proof}
        Let $x_1, x_2 \in \mathbb{F}_q^k$ with $x_1 \neq x_2$, and let $(x_1,p_1), \ (x_2,p_2) \in C_f$. Then
        
        \noindent \textbf{Case 1a: } $f(x_1) = f(x_2)$ and $d_H((x_1,p_1), (x_2,p_2)) < n$. Then from Lemma \ref{lem:hamming-pair-distance} and Definition~\ref{def:FCC-DP}
        \[
        d_p((x_1,p_1), (x_2,p_2)) \geq d_H((x_1,p_1), (x_2,p_2))+ 1 \geq d_d+1.
        \]

        \noindent \textbf{Case 1b: }
         $f(x_1) = f(x_2)$ and $d_H((x_1,p_1), (x_2,p_2)) = n$. Then $$d_p((x_1,p_1), (x_2,p_2)) = d_H((x_1,p_1), (x_2,p_2)) = n \geq d_f + 1 \geq d_d +1.$$

        \noindent \textbf{Case 2a: }
        If $f(x_1) \neq f(x_2)$ and $d_H((x_1,p_1), (x_2,p_2)) < n$. Then from Lemma \ref{lem:hamming-pair-distance} and Definition~\ref{def:FCC-DP}
        \[
        d_p((x_1,p_1), (x_2,p_2)) \geq d_H((x_1,p_1), (x_2,p_2))+ 1 \geq d_f+1.
        \]

        \noindent \textbf{Case 2b: }
        If $f(x_1) \neq f(x_2)$ and $d_H((x_1,p_1), (x_2,p_2)) = n$. Then $$d_p((x_1,p_1), (x_2,p_2)) = d_H((x_1,p_1), (x_2,p_2)) = n \geq d_f +1,$$ using $d_f < n$.

        Hence, $C_f$ is an $(f : d_d+1,\, d_f+1)_p$-FCSPC. 
        For the redundancy bound, let $C_f^*$ be an $(f:d_d,d_f)_H$-FCC attaining the optimal redundancy $r_H^f(k,d_d,d_f)$, with length $n^* = k + r_H^f(k,d_d,d_f)$ and $d_f < n^*$. By the argument above, $C_f^*$ is also a valid $(f:d_d+1,d_f+1)_p$-FCSPC with the same redundancy. Since $r_p^f(k,d_d+1,d_f+1)$ is the minimum redundancy over all valid $(f:d_d+1,d_f+1)_p$-FCSPC encodings,
        \[
        r_p^f(k,\,d_d+1,\,d_f+1) \;\leq\; r_H^f(k,\,d_d,\,d_f). 
        \]
    \end{proof}

    \noindent
    The next proposition establishes the converse relationship between the optimal redundancy of the two metrics, namely, it provides a lower bound on $r_p^f$ in terms of $r_H^f$.
    
    \begin{proposition}
    \label{lem: hamming_symbol_red_low}
    Let $f : \mathbb{F}_q^k \to \mathrm{Im}(f)$ be a function, and let $C_f$ be an $(f : d_d, d_f)_p$-FCSPC of length $n = k+r$. Then $C_f$ is an $(f : \lceil d_d/2 \rceil, \lceil d_f/2 \rceil)_H$-FCC. Consequently,
    \[
    r_p^f(k,\, d_d,\, d_f) \;\geq\; r_H^f\!\left(k,\, \left\lceil \frac{d_d}{2}\right\rceil, \left\lceil\frac{d_f}{2}\right\rceil\right).
    \]
    \end{proposition}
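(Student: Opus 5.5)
The plan is to prove the pointwise inequality $d_p(x,y) \le 2\,d_H(x,y)$ for every pair of distinct codewords, and then lift it to the two minimum distances of $C_f$ exactly as in the proof of Proposition~\ref{prop:pair-hamming-func-dist-reln}. The upper bound of Lemma~\ref{lem:hamming-pair-distance} holds in both of its cases: when $0<d_H(x,y)<n$ it is stated directly, and when $d_H(x,y)=n$ we have $d_p(x,y)=n\le 2n$. Independently of that lemma, it also follows from the definition: each coordinate $i$ where $x$ and $y$ differ can make at most the two pairs $(i-1,i)$ and $(i,i+1)$ differ. Hence for all $x\neq y$ in $\mathbb{F}_q^n$,
\[
d_H(x,y) \;\ge\; \left\lceil \tfrac{d_p(x,y)}{2}\right\rceil,
\]
using that $d_H$ is an integer.

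\textbf{Data protection.} Take any $x_1\neq x_2$ in $\mathbb{F}_q^k$. Since $C_f$ is an $(f:d_d,d_f)_p$-FCSPC, we have $d_p(C_f(x_1),C_f(x_2))\ge d_d$. Therefore
\[
d_H(C_f(x_1),C_f(x_2)) \;\ge\; \lceil d_p(C_f(x_1),C_f(x_2))/2\rceil \;\ge\; \lceil d_d/2\rceil,
\]
because the map $t\mapsto\lceil t/2\rceil$ is monotone.

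\textbf{Function protection.} For $f(x_1)\neq f(x_2)$ the same argument applies with $d_f$ in place of $d_d$, giving $d_H(C_f(x_1),C_f(x_2)) \ge \lceil d_f/2\rceil$. Taking minima over the relevant pairs gives $d_H(C_f)\ge\lceil d_d/2\rceil$ and $d_H^f(C_f)\ge\lceil d_f/2\rceil$. The ordering $\lceil d_d/2\rceil\le\lceil d_f/2\rceil$ follows from $d_d\le d_f$. So $C_f$ is an $(f:\lceil d_d/2\rceil,\lceil d_f/2\rceil)_H$-FCC with the same redundancy $r$ (Definition~\ref{def:FCC-DP}).

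\textbf{Redundancy bound.} Apply this to an encoding attaining $r_p^f(k,d_d,d_f)$. It yields a valid Hamming code with that redundancy, so minimality in Definition~\ref{def: opt_r_H} gives $r_H^f(k,\lceil d_d/2\rceil,\lceil d_f/2\rceil)\le r_p^f(k,d_d,d_f)$.

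The only point needing care is the case $d_H=n$, where the first part of Lemma~\ref{lem:hamming-pair-distance} does not apply. It is covered either by the lemma's second case or by the direct counting argument above, so I expect no real obstacle.
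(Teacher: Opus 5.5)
Your proof is correct and follows essentially the same route as the paper: both deduce $d_H \ge \lceil D/2\rceil$ for each pair of codewords from $d_p \le 2d_H$, and then apply this to an optimal encoding to get the redundancy inequality. The only difference is cosmetic. The paper handles $d_H = n$ as a separate case using $d_p = n \ge D$, whereas you note that $d_p \le 2d_H$ already holds there (directly from the definition), so no case split is needed.
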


    \begin{proof}
    Let $x_1 \neq x_2 \in \mathbb{F}_q^k$, write $(x_1,p_1)=C_f(x_1)$, $(x_2,p_2)=C_f(x_2)$, and let $D \in \{d_d, d_f\}$ be whichever threshold applies: $D = d_d$ always, and additionally $D = d_f$ whenever $f(x_1)\neq f(x_2)$, so that $d_p((x_1,p_1),(x_2,p_2)) \geq D$ by Definition~\ref{def:FCC-DP}. Since $C_f$ is systematic, $x_1 \neq x_2$ forces $(x_1,p_1)\neq(x_2,p_2)$, so $d_H((x_1,p_1),(x_2,p_2)) \geq 1$, ruling out $d_H=0$.

    \textbf{Case A: $d_H((x_1,p_1),(x_2,p_2)) < n$.} Then $d_H \notin \{0,n\}$ and Lemma~\ref{lem:hamming-pair-distance} gives
    \[
    D \;\leq\; d_p((x_1,p_1),(x_2,p_2)) \;\leq\; 2\,d_H((x_1,p_1),(x_2,p_2)),
    \]
    so $d_H((x_1,p_1),(x_2,p_2)) \geq D/2$, and since $d_H$ is an integer, $d_H((x_1,p_1),(x_2,p_2)) \geq \lceil D/2 \rceil$.

    \textbf{Case B: $d_H((x_1,p_1),(x_2,p_2)) = n$.} By the boundary clause of Lemma~\ref{lem:hamming-pair-distance}, $d_p((x_1,p_1),(x_2,p_2)) = n$. Combined with $d_p((x_1,p_1),(x_2,p_2))\geq D$, this gives $n \geq D$, and since $\lceil D/2\rceil \leq D$ for all $D\geq0$,
    \[
    d_H((x_1,p_1),(x_2,p_2)) = n \geq D \geq \left\lceil D/2 \right\rceil.
    \]

    In both cases $d_H((x_1,p_1),(x_2,p_2)) \geq \lceil D/2\rceil$. Taking $D=d_d$ gives
$d_H \geq \lceil d_d/2\rceil$ for all $x_1\neq x_2$; taking $D=d_f$ (valid whenever
$f(x_1)\neq f(x_2)$) gives $d_H \geq \lceil d_f/2\rceil$ for cross-class pairs. Hence
$C_f$ is an $(f:\lceil d_d/2\rceil, \lceil d_f/2\rceil)_H$-FCC-DP.

For the redundancy bound, let $C_f^*$ be an optimal $(f:d_d,d_f)_p$-FCSPC, with
redundancy $r_p^f(k,d_d,d_f)$. By the above, $C_f^*$ is also a valid
$(f:\lceil d_d/2\rceil,\lceil d_f/2\rceil)_H$-FCC-DP with the same redundancy. Since
$r_H^f(k,\lceil d_d/2\rceil,\lceil d_f/2\rceil)$ is the minimum redundancy over all
such encodings,
\[
r_H^f\!\left(k,\left\lceil\tfrac{d_d}{2}\right\rceil,\left\lceil\tfrac{d_f}{2}\right\rceil\right) \;\leq\; r_p^f(k,d_d,d_f). 
\]
\end{proof}

    Next example establishes the bounds of Proposition~\ref{lem: hamming_symbol_red_low} and Proposition~\ref{prop:hamming_symbol_red_up} using the examples of FCC with data protection from the paper~\cite{rajput2025function}.
    \begin{example}
        Consider a function $f: \mathbb{F}_2^3 \to \{0,1,2,3\}$, where $f(x)$ is the position of the least frequent bit in the binary vector $x \in \mathbb{F}_2^3$, i.e., $f(000) = f(111) = 0$, \ $f(100)= f(011) = 1$, \ $f(010) = f(101) = 2$, and \ $f(001) = f(110) = 3$. For $d_d = 3$ and $d_f = 5$, \cite[Example 5]{rajput2025function} shows that the optimal redundancy of function $f$ with respect to the Hamming metric is $6$ and the code $C_f =\{000000000, 111000000, 100111100, 011111100, 010110011, 101110011, 001001111, 110001111\}$ achieves this optimal redundancy and satisfies both the minimum distance requirements for both data and function protection. Therefore from Proposition~\ref{prop:hamming_symbol_red_up} we can say that for $d_d = 4$ and $d_f = 6$, the optimal redundancy of $f$ in the symbol-pair metric is bounded above by $6$, i.e., $r_p^f(3; \ 4,6) \leq 6$ and the code $C_f$ is an $(f; \ 4,6)_p$-FCSPC as one can easily verify that $d_p(C_f) = 4$ and $d_p^f(C_f) = 7>6$.
    \end{example}

    \begin{definition}[Joint pair-distance matrices(J-PDM)] 
    \label{def:J-PDM}
    Consider $M$ vectors $x_0, \ldots , {x_{M-1}} \in {\mathbb{F}^k_q}$. Then,  $\boldsymbol{D}_f^{(1)}(d_d,d_f, x_0, \ldots , x_{M-1})$ and
    $\boldsymbol{D}_f^{(2)}(d_d,d_f, x_0, \ldots , x_{M-1})$  are $M \times M$  symbol-pair distance matrices with entries
 
    \begin{align*}
        &[\boldsymbol{D}_f^{(1)}(d_d,d_f, x_0, \ldots , x_{M-1})]_{ij}=   
        \begin{cases}
           [d_d - 1 - d_p(x_i, x_j)]^+ , & if \; x_i \neq x_j \text{ and } \; f (x_i) = f(x_j),\\
           [d_f -1  - d_p(x_i, x_j)]^+ , & if \quad f (x_i) \neq f(x_j),\\
           0, & otherwise.
        \end{cases}\\  
    & and \\
        &[\boldsymbol{D}_f^{(2)}(d_d,d_f, x_0, \ldots, x_{M-1})]_{ij}=   
        \begin{cases}
           [d_d + 1 - d_p(x_i, x_j)]^+ , & if \; x_i \neq x_j \text{ and } \; f (x_i) = f(x_j),\\
           [d_f + 1 - d_p(x_i, x_j)]^+ , & if \quad f (x_i) \neq f(x_j),\\
           0, & otherwise.
        \end{cases}
    \end{align*}
    \end{definition}

 \subsection{Bounds on Optimal Redundancy via the Joint Pair-Distance Matrices}
    
    \begin{theorem}
    \label{thm:jpdm-sandwich}
        For any function $f : \mathbb{F}_q^k \to \text{Im}(f)$ and $\{x_1, \dots, x_{q^k}\}$, we have 
        \[
        N_p(\boldsymbol{D}_f^{(1)}(d_d, d_f, x_1, \dots, x_{q^k})) \leq r^f_p(k, d_d, d_f) \leq N_p(\boldsymbol{D}_f^{(2)}(d_d,d_f, x_1, \dots, x_{q^k})).
        \]
    \end{theorem}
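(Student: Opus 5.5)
Both inequalities follow from Lemma~\ref{lem:symbol-pair-ineq}, which controls how far the pair distance of the concatenation $(x,p(x))$ can move away from $d_p(x_i,x_j)+d_p(p_i,p_j)$. Throughout, I will take the message ordering $x_1,\dots,x_{q^k}$ to be the one fixed in the statement.

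\textbf{Lower bound.} Let $C_f(x)=(x,p(x))$ be an optimal $(f:d_d,d_f)_p$-FCSPC with redundancy $r=r_p^f(k,d_d,d_f)$, and write $p_i=p(x_i)$. I claim that $\{p_1,\dots,p_{q^k}\}$, in this ordering, is a $\boldsymbol{D}_f^{(1)}$-code of length $r$.
\begin{itemize}
\item For $i\neq j$ with $f(x_i)=f(x_j)$, Lemma~\ref{lem:symbol-pair-ineq} gives
\[
d_d\le d_p(C_f(x_i),C_f(x_j))\le d_p(x_i,x_j)+d_p(p_i,p_j)+1 ,
\]
so $d_p(p_i,p_j)\ge d_d-1-d_p(x_i,x_j)$.
\item The same estimate with $d_f$ in place of $d_d$ handles pairs with $f(x_i)\neq f(x_j)$.
\item Distances are nonnegative, so we may take the positive part $[\cdot]^+$, and the diagonal entries are $0$.
\end{itemize}
A $D_p$-code is only required to have $d_p(p_i,p_j)\ge[D]_{ij}$, so repeated codewords are harmless when the corresponding entry is $0$; if the definition is read as forbidding repeats, those entries are $0$ in any case. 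Hence $N_p(\boldsymbol{D}_f^{(1)})\le r$.

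\textbf{Upper bound.} Take a $\boldsymbol{D}_f^{(2)}$-code $\{p_1,\dots,p_{q^k}\}$ of length $r=N_p(\boldsymbol{D}_f^{(2)})$, ordered so that $d_p(p_i,p_j)\ge[\boldsymbol{D}_f^{(2)}]_{ij}$, and define $C_f(x_i)=(x_i,p_i)$. The lower half of Lemma~\ref{lem:symbol-pair-ineq} gives
\[
d_p(C_f(x_i),C_f(x_j))\ge d_p(x_i,x_j)+d_p(p_i,p_j)-1 .
\]
For a pair with $f(x_i)=f(x_j)$ and $i\neq j$, there are two cases.
\begin{itemize}
\item If $d_p(x_i,x_j)\le d_d+1$, the $[\cdot]^+$ is active and the right-hand side is at least $d_p(x_i,x_j)+d_d+1-d_p(x_i,x_j)-1=d_d$.
\item Otherwise, Lemma~\ref{lem:monotone} gives $d_p(C_f(x_i),C_f(x_j))\ge d_p(x_i,x_j)\ge d_d+2>d_d$.
\end{itemize}
Pairs with $f(x_i)\neq f(x_j)$ are handled identically with $d_f$, so $C_f$ is an $(f:d_d,d_f)_p$-FCSPC and $r_p^f\le N_p(\boldsymbol{D}_f^{(2)})$.

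\textbf{Expected obstacle.} The main delicacy is the edge case $r=0$ or $r=1$, where Lemma~\ref{lem:symbol-pair-ineq} needs blocks of length at least one (and, for length $1$, the cyclic pair convention). If $N_p(\boldsymbol{D}_f^{(1)})=0$ there is nothing to prove, so I would simply note that the Lemma is applied with $r\ge1$. A secondary point is the ordering convention: the $D_p$-code definition allows some ordering, but both arguments fix the bijection $x_i\leftrightarrow p_i$ through the index, which matches the definition.
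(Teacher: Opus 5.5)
Your proof is correct and follows the paper's argument. For the lower bound you apply the upper half of Lemma~\ref{lem:symbol-pair-ineq} to the redundancy vectors of an optimal encoding, which the paper phrases as a contradiction; for the upper bound you apply the lower half to a $\boldsymbol{D}_f^{(2)}$-code. Your only deviation is handling the far case $d_p(x_i,x_j)\ge d_d+2$ with Lemma~\ref{lem:monotone}.

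Two of your side remarks need correcting:
\begin{enumerate}
\item The edge case that needs a word is $r_p^f(k,d_d,d_f)=0$, not $N_p(\boldsymbol{D}_f^{(1)})=0$. When $r_p^f=0$, $C_f$ is the identity, so $d_p(x_i,x_j)\ge d_d$ (respectively $d_f$) and every entry of $\boldsymbol{D}_f^{(1)}$ vanishes.
\item The indexed-family reading of a $D_p$-code is essential rather than optional, because if repeated codewords were forbidden the lower bound would fail. For example, with $k\ge 2$ and $d_d=d_f=2$ one has $r_p^f=0$, while $q^k$ distinct vectors need length $k$.
\end{enumerate}
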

    \begin{proof}
        If $f$ is constant, all three quantities equal zero, and the inequalities are trivially satisfied.  Assume that $f$ is not constant.  Write $D^{(s)} = \boldsymbol{D}_f^{(s)}(d_d, d_f, x_1, \ldots, x_{q^k})$ for $s \in \{1, 2\}$ for ease of notation.
 
        \noindent
        Suppose for a contradiction that $r = r_f^p(k,d_d,d_f) < N_p(D^{(1)})$. Let $\mathrm{Enc} : \mathbb{F}_q^k \to \mathbb{F}_q^{k+r}$,
        $x_i \mapsto (x_i, p_i)$, be an optimal $(f : d_d, d_f)_p$-FCSPC with redundancy $r$. Since $r < N_p(D^{(1)})$, the set $\{p_1, \ldots, p_{q^k}\}$ is not a $D^{(1)}_p$-code, so there exist indices $i \neq j$ such that $d_p(p_i, p_j) < [D^{(1)}]_{ij}$.
        Consider the two cases arising from Definition~\ref{def:J-PDM}.
 
        \noindent
        \textbf{Case 1:} Let $x_i \neq x_j$ and $f(x_i) = f(x_j)$. Then $[D^{(1)}]_{ij} = [d_d - 1 - d_p(x_i, x_j)]^+ > d_p(p_i, p_j) >0$, so and $d_p(p_i, p_j) < d_d - 1 - d_p(x_i, x_j)$.
        By Lemma~7 of \cite{xia2024function}
        \[
            d_p\!\bigl(\mathrm{Enc}(x_i),\,\mathrm{Enc}(x_j)\bigr)
            \;\leq\;
            d_p(x_i,x_j) + d_p(p_i,p_j) + 1
            \;<\;
            d_p(x_i,x_j) + (d_d - 1 - d_p(x_i,x_j)) + 1
            = d_d,
        \]
        contradicting the data-protection requirement $d_p(\mathrm{Enc}(x_i), \mathrm{Enc}(x_j)) \geq d_d$.

        \noindent
        \textbf{Case 2:} Let $f(x_i) \neq f(x_j)$. Then $[D^{(1)}]_{ij} = [d_f - 1 - d_p(x_i,x_j)]^+ > d_p(x_i,x_j) > 0$, so $d_p(p_i,p_j) < d_f - 1 - d_p(x_i,x_j)$. Again by Lemma~7 of \cite{xia2024function},
        \[
        d_p\!\bigl(\mathrm{Enc}(x_i),\,\mathrm{Enc}(x_j)\bigr)
        \;\leq\;
        d_p(x_i,x_j) + d_p(p_i,p_j) + 1
        \;<\;
        d_p(x_i,x_j) + (d_f - 1 - d_p(x_i,x_j)) + 1
        = d_f,
        \]
        contradicting the function-protection requirement $d_p(\mathrm{Enc}(x_i), \mathrm{Enc}(x_j)) \geq d_f$.
 
        Hence, in both cases we reach a contradiction, so $r_f^p(k,d_d,d_f) \geq N_p(D^{(1)})$.
 
        \noindent
        We now show that $r_f^p(k,d_d,d_f) \leq N_p(D^{(2)})$.
        Let $P = \{p_1, \ldots, p_{q^k}\} \subseteq \mathbb{F}_q^r$ be a $D^{(2)}_p$-code of length $r = N_p(D^{(2)})$, and define the
        encoding function $\mathrm{Enc} : x_i \mapsto (x_i, p_i)$. We verify both distance conditions for an $(f : d_d, d_f)_p$-FCSPC.\\
        \noindent
        \textit{Data protection (within-class pairs).} Let $x_i \neq x_j$ with $f(x_i) = f(x_j)$. By Lemma~\ref{lem:symbol-pair-ineq}
        \begin{equation}
        \label{eq:lb-enc}
        d_p\!\bigl(\mathrm{Enc}(x_i),\,\mathrm{Enc}(x_j)\bigr)
        \;\geq\;
        d_p(x_i,x_j) + d_p(p_i,p_j) - 1.
        \end{equation}
        \begin{enumerate}
         \item If $d_p(x_i, x_j) \geq d_d + 1$: then $[D^{(2)}]_{ij} = 0$
         and $d_p(p_i, p_j) \geq 0$, so by \eqref{eq:lb-enc},
        $d_p(\mathrm{Enc}(x_i), \mathrm{Enc}(x_j)) \geq d_p(x_i,x_j) - 1 \geq d_d$.
 
        \item If $d_p(x_i, x_j) \leq d_d$: then
        $[D^{(2)}]_{ij} = d_d + 1 - d_p(x_i, x_j) > 0$ and
        $d_p(p_i, p_j) \geq d_d + 1 - d_p(x_i, x_j)$, so by \eqref{eq:lb-enc},
        $d_p(\mathrm{Enc}(x_i), \mathrm{Enc}(x_j))
        \geq d_p(x_i, x_j) + d_d + 1 - d_p(x_i, x_j) - 1 = d_d$.
        \end{enumerate}
        In both sub-cases, $d_p(\mathrm{Enc}(x_i), \mathrm{Enc}(x_j)) \geq d_d$.
 
        \noindent
       \textit{Function protection (cross-class pairs).} Let $f(x_i) \neq f(x_j)$. By \eqref{eq:lb-enc},
        \begin{enumerate}
        \item If $d_p(x_i, x_j) \geq d_f + 1$: then $[D^{(2)}]_{ij} = 0$
        and $d_p(\mathrm{Enc}(x_i), \mathrm{Enc}(x_j)) \geq d_p(x_i,x_j) - 1 \geq d_f$.
 
        \item If $d_p(x_i, x_j) \leq d_f$: then $[D^{(2)}]_{ij} = d_f + 1 - d_p(x_i, x_j)$ and $d_p(p_i, p_j) \geq d_f + 1 - d_p(x_i, x_j)$, so $d_p(\mathrm{Enc}(x_i), \mathrm{Enc}(x_j)) \geq d_p(x_i, x_j) + d_f + 1 - d_p(x_i, x_j) - 1 = d_f$.
        \end{enumerate}
        In both sub-cases, $d_p(\mathrm{Enc}(x_i), \mathrm{Enc}(x_j)) \geq d_f$.
 
        Hence $\mathrm{Enc}$ defines a valid $(f : d_d, d_f)_p$-FCSPC with redundancy $N_p(D^{(2)})$, giving $r_f^p(k, d_d, d_f) \leq N_p(D^{(2)})$.
    \end{proof}

     \begin{corollary}
         For any function $f: \mathbb{F}_q^k \rightarrow \Img{f}$ and $\{x_1, x_2, \ldots x_m\} \subseteq \mathbb{F}_q^k$, we have 
         \[
            N_p(\boldsymbol{D}_f^{(1)}(d_d, d_f, x_1, \dots, x_{m})) \leq r_p^f(k, d_d, d_f).
         \]
         and $ r_p^f(k,d_d,d_f) \geq d_f - 3 $ for $|\Img{f}| \geq 2$.   
     \end{corollary}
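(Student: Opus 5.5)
The corollary has two parts, and the plan handles them separately.

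\emph{First part (restriction to a subset).} Fix an optimal $(f:d_d,d_f)_p$-FCSPC $\mathrm{Enc}(x)=(x,p(x))$ with redundancy $r=r_p^f(k,d_d,d_f)$. Restrict it to the $m$ messages $x_1,\dots,x_m$, with redundancy parts $p_1,\dots,p_m\in\mathbb{F}_q^r$. The claim is that $\{p_1,\dots,p_m\}$, in this ordering, is a $\boldsymbol{D}_f^{(1)}(d_d,d_f,x_1,\dots,x_m)_p$-code. The argument is the first half of the proof of Theorem~\ref{thm:jpdm-sandwich}, carried out pair by pair.

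Take $i\neq j$. The upper bound of Lemma~\ref{lem:symbol-pair-ineq} gives
\[
d_p(\mathrm{Enc}(x_i),\mathrm{Enc}(x_j))\le d_p(x_i,x_j)+d_p(p_i,p_j)+1 .
\]
There are two cases.
\begin{itemize}
\item If $f(x_i)=f(x_j)$, the left side is at least $d_d$. Hence $d_p(p_i,p_j)\ge d_d-1-d_p(x_i,x_j)$, and since distances are nonnegative, $d_p(p_i,p_j)\ge[d_d-1-d_p(x_i,x_j)]^+$.
\item If $f(x_i)\neq f(x_j)$, the same computation with $d_f$ in place of $d_d$ gives $d_p(p_i,p_j)\ge[d_f-1-d_p(x_i,x_j)]^+$.
\end{itemize}
So the $p_i$ form a $D_p$-code of length $r$, which gives $N_p(\boldsymbol{D}_f^{(1)})\le r$.

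One point needs care: messages $x_i=x_j$ repeated in the list would give an entry of $0$, which is harmless.

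\emph{Second part ($r\ge d_f-3$).} Since $|\mathrm{Im}(f)|\ge2$, choose $u,v$ with $f(u)\neq f(v)$ and $d_H(u,v)=1$. Such a pair exists by walking coordinate by coordinate from a message in one class to a message in another: some step of that walk must change the function value.

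Apply the first part with $m=2$, $x_1=u$, $x_2=v$. The only off-diagonal entry of the matrix is $[d_f-1-d_p(u,v)]^+$. For two vectors differing in a single coordinate, $d_p(u,v)\le2$ (by Lemma~\ref{lem:hamming-pair-distance}, or directly, since one changed symbol affects at most two pairs; if $k=1$, then $d_p(u,v)=1$). This gives $N_p\ge N_p(2,d_f-3)$.

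It remains to check that $N_p(2,D)\ge D$. Two words of length $r$ have pair distance at most $r$, so achieving pair distance $D$ requires $r\ge D$. Therefore $r_p^f(k,d_d,d_f)\ge d_f-3$.

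The main obstacle is the existence of the adjacent cross-class pair together with the bound $d_p\le2$ in degenerate lengths. The case $k=1$ yields a bound that is even better than needed, and the case $d_f\le3$ is trivial.
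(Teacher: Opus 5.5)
Your proposal is correct and follows the paper's proof. You restrict an optimal encoding to the chosen messages and use the upper bound of Lemma~\ref{lem:symbol-pair-ineq} to show that the redundancy parts form a $\boldsymbol{D}^{(1)}_f$-code, then apply this to a cross-class pair differing in one coordinate, so that $d_p\le 2$. Your version is slightly more careful than the paper's in three ways: you justify why such an adjacent cross-class pair exists, you cover $k=1$, and you prove the direction $N_p(2,D)\ge D$ that the argument actually needs, whereas the paper's repetition-code remark only gives $N_p(2,D)\le D$.
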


    \begin{proof}
        It can be verified easily that redundancy vectors of optimal $(f, d_d, d_f)_p$-FCSPC is a $\boldsymbol{D}_f^{(1)}(d_d, d_f, x_1, \dots, x_{m})$-code. Hence,
        \begin{equation}
        \label{eq:lower_bound_optimal_red}
            N_p(\boldsymbol{D}_f^{(1)}(d_d, d_f, x_1, \dots, x_{m})) \leq r_f^p(k, d_d, d_f).
        \end{equation}
        For any non-constant function, there always exist two vectors, say $x$ and $y$, such that $d_p(x,y) = 2$ and $f(x) \neq f(y)$. Therefore, from the equation~\ref{eq:lower_bound_optimal_red} we get,
        \[
            r_f(k,t_d,t_f) \geq N_p(\boldsymbol{D}_f^{(1)}(d_d, d_f, x,y)) = N_p(2, d_f - 3) = d_f - 3.
        \]
        The last equality follows from the fact that a repetition code of length $d_f - 3$ has a minimum pair distance $d_f - 3$.
    \end{proof}

\section{Explicit Construction of Function-Correcting Symbol-Pair Codes with Data-Protection}\label{sec:construction}

    \noindent
    The following construction method extends the two-step approach for constructing FCC (\cite{rajput2025function}) to the symbol-pair metric, incorporating data protection.
    
    \begin{construction}
    \label{construction-2-step}
    In order to construct an FCPSC for a function $f: \mathbb{F}_q^k \to \mathrm{Im}(f)$, that provides protection against up to $t_d = \left\lfloor \frac{d_d-1}{2} \right\rfloor$ symbol-pair errors in the data and up to $t_f = \left\lfloor \frac{d_f-1}{2} \right\rfloor$ symbol-pair errors in the function values, the following two-fold construction mechanism can be adopted.
    
    \begin{itemize}
    \item \textbf{Step 1:} Select an $[n,k]_q$ linear symbol-pair error-correcting code $C$, with minimum pair-distance $d_p(C) \geq d_d$. Let $G$ be a generator matrix of size $k \times n$. For any $x \in \mathbb{F}_q^k$, the corresponding codeword is given by $c_x = xG$.

    \item \textbf{Step 2:} Construct a FCSPC based on the set $\{c_x \mid x \in \mathbb{F}_q^k\}$, rather than directly on the message vectors $x$. That is, define a systematic encoding $C'_f : C \to \mathbb{F}_q^{n+r'}$ such that for any $x_1, x_2 \in \mathbb{F}_q^k$ with $f(x_1) \neq f(x_2)$, the following condition holds:
    \[
        d_p\!\left( C'_f(c_{x_1}),\, C'_f(c_{x_2}) \right) \;\geq\; d_f.
    \]
    \end{itemize}
    
    Then, the resulting mapping $C_f : \mathbb{F}_q^k \to \mathbb{F}_q^{n+r'}$ defined by $C_f(x) = C'_f(c_x)$ is an $(f : d_d, d_f)_p$-FCSPC with total redundancy $r_s = n - k + r'$.

    It is straightforward to verify that the encoding $C_f$ described above satisfies the properties of an $(f: d_d, d_f)_p$-FCSPC. Since $C$ is an $[n,k,d_d]_q$ linear symbol-pair code, every distinct pair $x_1, x_2 \in \mathbb{F}_q^k$ satisfies $d_p(c_{x_1}, c_{x_2}) \geq d_d$. 
    
    Two cases now follow:
    \begin{itemize}
    \item {Same-class regime} ($f(x_i) = f(x_j)$): Let $x_i \neq x_j \in \mathbb{F}_q^k$. Then,
    \[
        d_p\!\left( C_f(x_1),\, C_f(x_2) \right) \;\geq\;d_p(c_{x_1}, c_{x_2})  \;\geq\; d_d,
    \]

    \item {Cross-class regime} ($f(x_i) \neq f(x_j)$). By the Step~2 design,
    \[
        d_p\!\left( C_f(x_1),\, C_f(x_2) \right) \;=\; d_p\!\left( C'_f(c_{x_1}),\, C'_f(c_{x_2}) \right) \;\geq\; d_f \geq d_d,
    \]
    \end{itemize}
    Combining both cases, $C_f$ is an $(f: d_d, d_f)_p$-FCSPC of length $n + r'$ and redundancy $r_s = n - k + r'$.
    \end{construction}
        
              
        
              
        
        

    \begin{figure}[t]
    \centering
    \begin{tikzpicture}[scale=0.78,transform shape,
      blk/.style={draw,thick,rounded corners=3pt,minimum height=1.05cm,
                  minimum width=2.9cm,align=center,font=\small},
      ar/.style={-{Stealth[length=2.6mm]},thick},
      font=\small
    ]
    \node[font=\small] (u) at (0,0) {$x\in\mathbb{F}_q^{k}$};
    \node[blk,fill=gray!8] (s1) at (3.5,0) {\textbf{Step 1}\\[1pt]
          \scriptsize \scriptsize $[n,k]_q$ code $C$, $d_p(C)\ge d_d$};
    \node[font=\small] (cu) at (7.3,0) {$c_x=xG\in C$};
    \node[blk,fill=gray!8] (s2) at (10.9,0) {\textbf{Step 2}\\[1pt]
          \scriptsize FCSPC on $C$};
    \node[font=\small,align=center] (out) at (14.6,0)
          {$C_f(x)=C'_f(c_x)$\\[1pt]\scriptsize $\in\mathbb{F}_q^{\,n+r'}$};
    \draw[ar] (u)--(s1); \draw[ar] (s1)--(cu);
    \draw[ar] (cu)--(s2); \draw[ar] (s2)--(out);
    \node[font=\scriptsize,anchor=south] at (3.3,0.75) {data protection};
    \node[font=\scriptsize,anchor=south] at (10.9,0.75) {function protection};
    \end{tikzpicture}
    \caption{The two-step encoding of Construction~\ref{construction-2-step}. Step~1
    maps the message through a systematic linear code of minimum pair-distance at least
    $d_d$, fixing the baseline separation for the data; Step~2 applies a systematic
    FCSPC to the resulting codeword rather than to the message, raising the separation
    of cross-class pairs to $d_f$.}
    \label{fig:two-step}
\end{figure}

    In the original framework, the pair-distance requirement matrix is indexed by the message vectors, whose mutual pair-distances already contribute to the required separation. In Construction~\ref{construction-2-step} that role is played instead by the codewords $c_x=xG$, so the residual requirement must be measured against $d_p(c_x,c_y)$ rather than $d_p(x,y)$. This motivates the following coded version of pair-distance requirement matrices.


    

        \begin{definition}[Coded pair-distance requirement matrices (CPDRM)]
        \label{def:CPDRM}
        Let $C$ be a linear $[n,k]_q$ code with generator matrix $G$, write $c_x = xG$, and let
        $x_1, \ldots, x_M \in \mathbb{F}_q^{k}$ be distinct. The \emph{coded pair-distance
        requirement matrices}
        $\boldsymbol{D}^{(1)}_{C,f}(d_f;\, x_1, \ldots, x_M)$ and
        $\boldsymbol{D}^{(2)}_{C,f}(d_f;\, x_1, \ldots, x_M)$ are the $M \times M$ matrices with
        entries
        \begin{align*}
        \bigl[\boldsymbol{D}^{(1)}_{C,f}(d_f;\, x_1, \ldots, x_M)\bigr]_{ij}
        &=
        \begin{cases}
            \bigl[\,d_f - 1 - d_p(c_{x_i}, c_{x_j})\,\bigr]^{+},
                & \text{if } f(x_i) \neq f(x_j),\\[2pt]
            0, & \text{otherwise},
        \end{cases}
        \\[6pt]
        \bigl[\boldsymbol{D}^{(2)}_{C,f}(d_f;\, x_1, \ldots, x_M)\bigr]_{ij}
        &=
        \begin{cases}
            \bigl[\,d_f + 1 - d_p(c_{x_i}, c_{x_j})\,\bigr]^{+},
                & \text{if } f(x_i) \neq f(x_j),\\[2pt]
            0, & \text{otherwise},
        \end{cases}
        \end{align*}
        \end{definition}
        
        \begin{definition}
        \label{def:coded-func-pair-dist}
        For a function $f : \mathbb{F}_q^k \to \text{Im}(f)$, the coded function pair-distance between $f_i, f_j \in \text{Im}(f)$ is defined as:

        $$d_p^{C}(f_i, f_j) = \min_{x_1, x_2 \in \mathbb{F}_q^k} \{d_{p}(c_{x_1}, c_{x_2}) \mid f(x_1) = f_1, f(x_2) = f_2\}.$$
        \end{definition}



        \begin{definition}[Coded function pair-distance matrices]
        \label{def:coded-func-matrices}
        Let $\mathrm{Im}(f) = \{f_1, \ldots, f_E\}$ with $E = |\mathrm{Im}(f)|$. The
        \emph{coded function pair-distance matrices}
        $\boldsymbol{E}^{(1)}_{C,f}(d_f)$ and $\boldsymbol{E}^{(2)}_{C,f}(d_f)$ are the
        $E \times E$ matrices with entries
        \begin{align*}
        \bigl[\boldsymbol{E}^{(1)}_{C,f}(d_f)\bigr]_{ij}
        &=
        \begin{cases}
            \bigl[\,d_f - 1 - d^{C}_{p}(f_i, f_j)\,\bigr]^{+}, & \text{if } i \neq j,\\[2pt]
            0, & \text{if } i = j,
        \end{cases}
        \\[6pt]
        \bigl[\boldsymbol{E}^{(2)}_{C,f}(d_f)\bigr]_{ij}
        &=
        \begin{cases}
            \bigl[\,d_f + 1 - d^{C}_{p}(f_i, f_j)\,\bigr]^{+}, & \text{if } i \neq j,\\[2pt]
            0, & \text{if } i = j.
        \end{cases}
        \end{align*}
        \end{definition}

        \begin{proposition}
    \label{cor:two-step-redundancy}
    Let $f:\mathbb{F}_q^{k}\to\mathrm{Im}(f)$ and let $d_d\le d_f$ be positive integers. Let $C$ be a linear $[n,k,d_d]_q$ symbol-pair code with generator matrix $G$ and $c_x=xG$, and fix an ordering $x_1,\dots,x_{q^k}$ of $\mathbb{F}_q^{k}$. Then
    \[
    r^f_p(k,d_d,d_f)\;\le\;(n-k)\;+\; N_p\Bigl(\boldsymbol{D}^{(2)}_{C,f} \bigl(d_f;\,x_1,\dots,x_{q^k}\bigr)\Bigr).
    \]
    \end{proposition}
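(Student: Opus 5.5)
The plan is to instantiate Construction~\ref{construction-2-step} with a concrete choice for Step~2 and then check the two distance requirements using Lemma~\ref{lem:symbol-pair-ineq} and Lemma~\ref{lem:monotone}.

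Fix the linear $[n,k,d_d]_q$ code $C$ with generator matrix $G$. We may assume $G$ is systematic, i.e.\ $c_x=(x,\tilde p(x))$; this is needed so that the final encoding is systematic in the sense of Definition~\ref{def:FCSPC-DP}. If $G$ is not systematic, we replace it by a row-reduced form that is systematic on the first $k$ coordinates. This changes only the labelling $x\mapsto c_x$, not the set $C$. The matrix $\boldsymbol{D}^{(2)}_{C,f}$ is defined through the pairs $(c_{x_i},c_{x_j})$ together with the labels $f(x_i)$, so one has to be careful that re-indexing the messages does not alter the function classes. If the first $k$ coordinates are not an information set, a coordinate permutation would change pair distances, so I would simply take $G$ systematic as the natural reading of Step~1.

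Let $r'=N_p(\boldsymbol{D}^{(2)}_{C,f}(d_f;x_1,\dots,x_{q^k}))$ and pick a $\boldsymbol{D}^{(2)}_{C,f}$-code $\{p_1,\dots,p_{q^k}\}\subseteq\mathbb{F}_q^{r'}$ in the ordering of Definition~\ref{def:Dp-code}. Define
\[
C_f(x_i)=(c_{x_i},p_i)\in\mathbb{F}_q^{n+r'}.
\]
This is systematic, since $c_{x_i}$ begins with $x_i$, and its redundancy is $(n-k)+r'$.

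The data-protection condition follows from Lemma~\ref{lem:monotone}. For $x_i\neq x_j$,
\[
d_p(C_f(x_i),C_f(x_j))\ge d_p(c_{x_i},c_{x_j})\ge d_p(C)\ge d_d.
\]
When $r'=0$ the concatenation is just $c_x$ itself, so the bound is immediate in that case too.

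For function protection, take $f(x_i)\neq f(x_j)$ and split into two cases, exactly as in the upper-bound part of Theorem~\ref{thm:jpdm-sandwich}.
\begin{itemize}
\item If $d_p(c_{x_i},c_{x_j})\ge d_f+1$, Lemma~\ref{lem:symbol-pair-ineq} gives $d_p(C_f(x_i),C_f(x_j))\ge d_p(c_{x_i},c_{x_j})-1\ge d_f$. Lemma~\ref{lem:monotone} would even avoid the $-1$.
\item Otherwise $d_p(p_i,p_j)\ge d_f+1-d_p(c_{x_i},c_{x_j})$, and Lemma~\ref{lem:symbol-pair-ineq} gives
\[
d_p(C_f(x_i),C_f(x_j))\ge d_p(c_{x_i},c_{x_j})+d_p(p_i,p_j)-1\ge d_f.
\]
\end{itemize}

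Hence $C_f$ is an $(f:d_d,d_f)_p$-FCSPC with redundancy $(n-k)+r'$, and minimality of $r^f_p$ yields the claimed bound.

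The main obstacle is the junction loss in Lemma~\ref{lem:symbol-pair-ineq}: the lower bound loses one unit at the cyclic wrap between the two blocks. This is exactly why $\boldsymbol{D}^{(2)}$ carries $d_f+1$ instead of $d_f$, and the case analysis must use it consistently. The only other delicate point is the systematic-form issue discussed above.
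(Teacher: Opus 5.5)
Your proof is correct and is essentially the paper's argument: you concatenate $c_{x_i}$ with the $i$-th word of a $\boldsymbol{D}^{(2)}_{C,f}$-code, get data protection from Lemma~\ref{lem:monotone}, and get function protection from the same two-case split via Lemma~\ref{lem:symbol-pair-ineq}. Your caveat on systematicity is well placed, since the paper's proof also tacitly assumes $c_x$ begins with $x$, and you are right that row-reducing $G$ would change $\boldsymbol{D}^{(2)}_{C,f}$; reading $G$ as systematic is therefore the appropriate interpretation and costs nothing relative to the paper.
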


    \begin{proof}
    Write $r''=N_p\bigl(\boldsymbol{D}^{(2)}_{C,f}(d_f;\,x_1,\dots, x_{q^k})\bigr)$. By definition of $N_p(\cdot)$ there exist vectors $p_{1},\dots,p_{q^k}\in\mathbb{F}_q^{\,r''}$ with
    \[
        d_p(p_{i},p_{j})\;\ge\; \Bigl[\boldsymbol{D}^{(2)}_{C,f} \bigl(d_f;\,x_1,\dots,x_{q^k}\bigr)\Bigr]_{i,j} \;=\;\max\bigl\{\,d_f+1 -d_p(c_{x_i},c_{x_j}),\;0\,\bigr\}
    \]
    for all $i\ne j$ with $f(x_i)\ne f(x_j)$. Apply Construction~\ref{construction-2-step} with $C$ as the first step and $p_{1},\dots,p_{q^k}$ as the second, that is, set
    \[
    C_f(x_i)\;=\;\bigl(c_{x_i},\,p_{i}\bigr)\;\in\;\mathbb{F}_q^{\,n+r''}.
    \]
    We verify the two requirements of FCSPC-DP (Definition~\ref{def:FCSPC-DP}).

    \emph{Data protection.} For $i\ne j$ we have $d_p(c_{x_i},c_{x_j})\ge d_p(C)=d_d$, so by  (Lemma~\ref{lem:monotone})
    \[
        d_p\bigl(C_f(x_i),C_f(x_j)\bigr)\;\ge\;d_p(c_{x_i},c_{x_j})\;\ge\;d_d .
    \]

    \emph{Function protection.} Let $f(x_i)\ne f(x_j)$. If $d_p(c_{x_i},c_{x_j})\ge d_f+1$ the matrix entry vanishes and we have  $d_p\bigl(C_f(x_i),C_f(x_j)\bigr)\ge d_p(c_{x_i},c_{x_j}) \ge\;d_f+1>d_f$. Otherwise, the entry equals $d_f+1-d_p(c_{x_i},c_{x_j})$, and the Lemma~\ref{lem:symbol-pair-ineq} yields
    \[
        d_p\bigl(C_f(x_i),C_f(x_j)\bigr) \;\ge\;d_p(c_{x_i},c_{x_j})+d_p(p_{i},p_{j})-1 \;\ge\;d_p(c_{x_i},c_{x_j})+\bigl(d_f+1-d_p(c_{x_i},c_{x_j})\bigr)-1 \;=\;d_f .
    \]

Hence $C_f$ is an $(f:d_d,d_f)_p$-FCSPC-DP of length $n+r''$ and redundancy
$(n+r'')-k=(n-k)+r''$. Since $r^f_p(k,d_d,d_f)$ is the minimum redundancy over
all $(f:d_d,d_f)_p$-FCSPC-DPs, the bound follows.
\end{proof}
\subsection{Block-aware Analysis of Function-Correcting Symbol-Pair Codes with Data Protection}

    Let $f : \mathbb{F}_q^k \to \mathrm{Im}(f)$ be a function. Then the level sets of $f$ form a natural partition of the message space.

    \begin{definition}[Pair-separation constant]
    \label{def:pair-sep-constant}
    Let $f:\mathbb{F}_q^{k}\to\mathrm{Im}(f)$ be non-injective, and let
    \[
        P_f\;=\;\bigl\{\,\{x,y\}\;:\;x,y\in\mathbb{F}_q^{k},\ x\neq y,\ f(x)=f(y)\,\bigr\}
    \]
    denote the non-empty set of same-class pairs. The \emph{pair-separation constant} of $f$ is
    \[
        \delta_{p}(f)\;=\;\min_{\{x,y\}\in P_f} d_p(x,y),
    \]
    the least pair-distance between two distinct messages carrying the same
    function value. For an integer $\delta$, we say that $f$ is \emph{$d$-pair-separated} if $\delta_{p}(f)\ge d$.
    \end{definition}


    The following example considers a function that is a $d_d$-pair separated function.
    \begin{example}
    Let $f:\mathbb{F}_2^{k}\to\mathrm{Im}(f)$, $k\ge3$, be the Hamming weight function $f(x)=w_H(x)$. Two distinct vectors of equal Hamming weight differ in at least two coordinates, so $d_H(x,y) \ge 2$. For the pair metric, let $x \ne y$ with $w_H(x)=w_H(y)$. If $d_H(x,y)<k$ then Lemma~\ref{lem:hamming-pair-distance} gives $d_p(x,y)\ge d_H(x,y)+1\ge3$. If $d_H(x,y)=k$ then $y=\bar x$, whence $d_p(x,y)=k\ge3$. Thus $\delta_{p}(f)\ge3$. Hence the Hamming weight function is $3$-pair-separated.
    \end{example}
    
    In the two-step construction method~(\ref{construction-2-step}), the ECC pays an extra $n - k$ redundancy uniformly across all $q^k$ messages, irrespective of whether $f$ already enforces the intra-class pair-separation. For $d_d$-pair-separated functions, the intra-class constraint is met by the geometry of $f$, and this overhead is paid unnecessarily.\\

    \noindent

    \noindent
    We now show that for such $f$, the optimal redundancy in the pair-metric data-protection problem coincides with that of the plain pair-metric FCSPC.

    \begin{proposition}
    \label{prop:dd-sep-equality}
    Let $f$ satisfy $\delta_{p}(f)\ge d_d$ and let $d_f\ge d_d$. Then
    \[
    r^f_p(k,d_d,d_f)\;=\;r^f_p(k,d_f).
    \]
    \end{proposition}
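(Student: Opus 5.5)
The equality splits into two inequalities. The direction $r^f_p(k,d_f)\le r^f_p(k,d_d,d_f)$ is already recorded in Remark~\ref{rem:r_f-r_d_f}: every $(f:d_d,d_f)_p$-FCSPC-DP is in particular an $(f:d_f)_p$-FCSPC, so the minimum redundancy can only grow when the data-protection constraint is added. The work is therefore in the reverse inequality $r^f_p(k,d_d,d_f)\le r^f_p(k,d_f)$.

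For that direction, take an optimal $(f:d_f)_p$-FCSPC $\mathrm{Enc}(x)=(x,p(x))$ with redundancy $r=r^f_p(k,d_f)$. We show the same encoding already satisfies $d_p(\mathrm{Enc}(x_1),\mathrm{Enc}(x_2))\ge d_d$ for all $x_1\ne x_2$, splitting into two cases.

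\emph{Cross-class pairs.} If $f(x_1)\ne f(x_2)$, the FCSPC guarantee gives a pair-distance of at least $d_f\ge d_d$.

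\emph{Same-class pairs.} If $f(x_1)=f(x_2)$ with $x_1\ne x_2$, then $\{x_1,x_2\}\in P_f$, so $d_p(x_1,x_2)\ge\delta_p(f)\ge d_d$. We then need the concatenation not to lose distance relative to the message block, and here we invoke Lemma~\ref{lem:monotone}. When $r\ge1$ it gives
\[
d_p\bigl((x_1,p(x_1)),(x_2,p(x_2))\bigr)\ge d_p(x_1,x_2)\ge d_d .
\]
This is stronger than what Lemma~\ref{lem:symbol-pair-ineq} would give, since that lemma alone loses one unit. So the encoding is an $(f:d_d,d_f)_p$-FCSPC-DP with redundancy $r$, which proves the inequality.

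The main obstacle is the degenerate case $r=0$, where Lemma~\ref{lem:monotone} (which needs $l\ge1$) does not apply. There, however, $\mathrm{Enc}$ is the identity, so the pair-distance between codewords equals $d_p(x_1,x_2)$ itself and the bound is immediate. One should also note the trivial situation where $f$ is injective: $P_f$ is then empty, and the hypothesis is read as vacuous. In that case there are no same-class pairs, every pair of distinct messages is cross-class, and the first case already gives distance $\ge d_f\ge d_d$. Finally, we take care that the FCSPC definition's ``$2t+1$'' threshold is read as a general threshold $d_f$, matching the notation $r^f_p(k,d_f)$ used in Remark~\ref{rem:r_f-r_d_f}.
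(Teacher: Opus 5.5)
Your proposal is correct and follows the same route as the paper's proof: you reduce to $r^f_p(k,d_d,d_f)\le r^f_p(k,d_f)$ via Remark~\ref{rem:r_f-r_d_f}, then check that an optimal $(f:d_f)_p$-FCSPC already has data protection, using $d_f\ge d_d$ for cross-class pairs and Lemma~\ref{lem:monotone} with $\delta_p(f)\ge d_d$ for same-class pairs. Your side remarks on the cases $r=0$ and injective $f$ are harmless edge cases that the paper leaves implicit.
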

 
    \begin{proof}
    By Remark~\ref{rem:r_f-r_d_f} it suffices to prove $r^f_p(k,d_d,d_f) \le r^f_p(k,d_f)$.

    Let $r' = r^f_p(k, d_f)$ and let $C'_f : \mathbb{F}_q^{k} \to \mathbb{F}_q^{\,k+r'}$ be a systematic $(f : d_f)_p$-FCSPC of optimal redundancy. We verify that $C'_f$ is in fact an $(f : d_d, d_f)_p$-FCSPC-DP.

    \emph{Cross-class pairs.} If $f(x_1) \neq f(x_2)$, then $d_p\bigl(C'_f(x_1), C'_f(x_2)\bigr) \ge d_f \ge d_d$, so both requirements hold.

    \emph{Same-class pairs.} If $x_1 \neq x_2$ and $f(x_1) = f(x_2)$, then $C'_f$ is systematic, so Lemma~\ref{lem:monotone} gives
    \[
    d_p\bigl(C'_f(x_1), C'_f(x_2)\bigr) \;\ge\; d_p(x_1,x_2) \;\ge\; \delta_p(f) \;\ge\; d_d .
    \]

    Hence $d_p(C'_f) \ge d_d$ and $d_p^f(C'_f) \ge d_f$, so $C'_f$ is a valid $(f : d_d,d_f)_p$-FCSPC-DP of redundancy $r^f_p(k,d_f)$, and minimality of $r^f_p(k,d_d,d_f)$ gives the claim.
    \end{proof}

    \begin{remark}
    For an $f$ satisfying the hypothesis of Proposition~\ref{prop:dd-sep-equality}, the $n-k$ redundancy of Step~1 Construction~\ref{construction-2-step} is pure overhead: data protection is already implied by the function-protection requirement together with the geometry of $f$, and the construction should not be used.
    \end{remark}

    \begin{proposition}
    \label{prop:pair-jdrm-equals-pdrm}
    Let $f$ satisfy $\delta_p(f) \ge d_d + 1$. Then for any ordering $x_1,\dots,x_{q^k}$ of $\mathbb{F}_q^{k}$ and each $s \in \{1,2\}$,
    \[
    \boldsymbol{D}^{(s)}_{f}(d_d, d_f;\, x_1,\dots,x_{q^k})
    \;=\;
    \boldsymbol{D}^{(s)}_{f}(d_f;\, x_1,\dots,x_{q^k}),
    \]
    where the matrices on the right are the pair-distance requirement matrices of Definition~\ref{def:PDRM}. Consequently the bounds of Theorem~\ref{thm:jpdm-sandwich} for $r^f_p(k,d_d,d_f)$ coincide with the corresponding bounds for $r^f_p(k,d_f)$.
    \end{proposition}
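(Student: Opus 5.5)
The plan is to compare the two matrices entry by entry, for a fixed ordering $x_1,\dots,x_{q^k}$ and each $s\in\{1,2\}$. First I will fix how the pair-distance requirement matrices of Definition~\ref{def:PDRM} are indexed by $d_f$ rather than by $t$. An $(f:d_f)_p$-FCSPC with $d_f=2t+1$ has cross-class entries $[2t-d_p(x_i,x_j)]^+=[d_f-1-d_p(x_i,x_j)]^+$ for $s=1$ and $[2t+2-d_p(x_i,x_j)]^+=[d_f+1-d_p(x_i,x_j)]^+$ for $s=2$, with all other entries zero. For general $d_f$ I will take these expressions in $d_f$ as the definition of $\boldsymbol{D}^{(s)}_f(d_f;\,x_1,\dots,x_{q^k})$, which agrees with Definition~\ref{def:PDRM} whenever $d_f$ is odd.

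Next I split the index pairs $(i,j)$ into three classes.

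\emph{Diagonal, $i=j$.} Then $x_i=x_j$, so both matrices have entry $0$ under the ``otherwise'' clauses.

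\emph{Cross-class, $f(x_i)\neq f(x_j)$.} Definition~\ref{def:J-PDM} gives $[d_f\mp 1-d_p(x_i,x_j)]^+$. This is literally the entry of the pair-distance requirement matrix described above.

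\emph{Same-class, $x_i\neq x_j$ and $f(x_i)=f(x_j)$.} The pair-distance requirement matrix has entry $0$. The J-PDM has entry $[d_d\mp1-d_p(x_i,x_j)]^+$. Since $\{x_i,x_j\}\in P_f$, Definition~\ref{def:pair-sep-constant} gives $d_p(x_i,x_j)\ge\delta_p(f)\ge d_d+1$. Hence $d_d+1-d_p(x_i,x_j)\le 0$, and a fortiori $d_d-1-d_p(x_i,x_j)<0$, so both positive parts vanish.

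This case is the only place the hypothesis is used. The stronger bound $\delta_p(f)\ge d_d+1$, rather than $\ge d_d$, is exactly what kills the $s=2$ entry. I will also note that if $f$ is injective then $P_f$ is empty, there are no same-class pairs, and the identity is immediate.

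Having equal matrices, I conclude that $N_p(\boldsymbol{D}^{(s)}_f(d_d,d_f;\cdot))=N_p(\boldsymbol{D}^{(s)}_f(d_f;\cdot))$ for $s=1,2$. Therefore the sandwich of Theorem~\ref{thm:jpdm-sandwich} for $r^f_p(k,d_d,d_f)$ reads $N_p(\boldsymbol{D}^{(1)}_f(d_f;\cdot))\le r^f_p(k,d_d,d_f)\le N_p(\boldsymbol{D}^{(2)}_f(d_f;\cdot))$. These are the pair-distance requirement matrix bounds for $r^f_p(k,d_f)$ from~\cite{xia2024function}.

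The main obstacle is not the computation but the bookkeeping of the $d_f$ versus $2t+1$ parametrisation in Definition~\ref{def:PDRM}. I will state this convention explicitly at the start so that the comparison of the cross-class entries is literal. I will also remark that, combined with Proposition~\ref{prop:dd-sep-equality}, both bounds then concern the same quantity.
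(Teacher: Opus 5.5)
Your proposal is correct and follows the paper's proof essentially verbatim: an entry-by-entry comparison in which the hypothesis $\delta_p(f)\ge d_d+1$ forces both same-class J-PDM entries to vanish, while cross-class and diagonal entries agree by definition. Your explicit reconciliation of the $2t$ versus $d_f-1$ parametrisation of Definition~\ref{def:PDRM} is a point the paper passes over silently (it speaks only of ``PDRM entries with parameter $d_f$''), so stating that convention up front is a worthwhile addition.
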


    \begin{proof}
    Let $x_i \neq x_j$.

    \emph{Same-class pairs.} Suppose $f(x_i) = f(x_j)$. Since $\delta_p(f) \ge d_d + 1$, we have $d_p(x_i,x_j) \ge d_d + 1$, so
    \[
    \bigl[d_d - 1 - d_p(x_i,x_j)\bigr]^{+} = 0
    \quad\text{and}\quad
    \bigl[d_d + 1 - d_p(x_i,x_j)\bigr]^{+} = 0 ,
    \]
    which are the entries of $\boldsymbol{D}^{(1)}_{f}(d_d,d_f;\cdot)$ and $\boldsymbol{D}^{(2)}_{f}(d_d,d_f;\cdot)$ respectively. The corresponding PDRM entries are $0$ by definition.

    \emph{Cross-class pairs.} If $f(x_i) \neq f(x_j)$, the J-PDM entries are $[d_f - 1 - d_p(x_i,x_j)]^{+}$ and $[d_f + 1 - d_p(x_i,x_j)]^{+}$, which are precisely the PDRM entries with parameter $d_f$.

    Diagonal entries vanish in all four matrices. Hence the matrices coincide.
    \end{proof}

\subsection{Bounds from the Coded Distance-Requirement Matrix}

    We now bound $N_p\bigl(\boldsymbol{D}^{(2)}_{C,f}(d_f;\cdot)\bigr)$, the quantity appearing in Proposition~\ref{cor:two-step-redundancy}, first by a scalar symbol-pair quantity and then by a Hamming quantity, for which bounds and tables are available. We begin by relating the coded matrix back to the joint matrix of Definition~\ref{def:J-PDM}.

  \begin{theorem}\label{thm:pair-matrix-two-step}
    Let $C$ be a systematic $[n,k,d_d]_q$ symbol-pair code, given by $c_x=(x,z_x)$ with $z_x\in\mathbb{F}_q^{\,n-k}$, and let $f:\mathbb{F}_q^{k}\to\mathrm{Im}(f)$. Then, for any ordering $x_1,\dots,x_{q^k}$ of $\mathbb{F}_q^{k}$,
    \[
      N_p\bigl(\boldsymbol{D}^{(1)}_{f}(d_d,d_f;\,x_1,\dots,x_{q^k})\bigr)
      \;\le\;
      N_p\bigl(\boldsymbol{D}^{(2)}_{C,f}(d_f;\,x_1,\dots,x_{q^k})\bigr)+n-k .
    \]
    \end{theorem}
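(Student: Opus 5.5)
The plan is to obtain the inequality by chaining two results already established, with the optimal redundancy $r^f_p(k,d_d,d_f)$ as the intermediate quantity:
\[
N_p\bigl(\boldsymbol{D}^{(1)}_{f}(d_d,d_f;\,x_1,\dots,x_{q^k})\bigr)\;\le\; r^f_p(k,d_d,d_f)\;\le\;(n-k)+N_p\bigl(\boldsymbol{D}^{(2)}_{C,f}(d_f;\,x_1,\dots,x_{q^k})\bigr).
\]
The left inequality is the lower half of Theorem~\ref{thm:jpdm-sandwich}, applied with the same ordering $x_1,\dots,x_{q^k}$. The right inequality is Proposition~\ref{cor:two-step-redundancy}. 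Its hypotheses are met: $C$ is a linear $[n,k,d_d]_q$ code with some generator matrix $G$, and we take $G$ systematic so that $xG=c_x=(x,z_x)$. The codewords entering $\boldsymbol{D}^{(2)}_{C,f}$ are then exactly the $c_{x_i}$ of the statement.

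For a self-contained argument I would also give the explicit version. Let $r''=N_p(\boldsymbol{D}^{(2)}_{C,f}(d_f;\cdot))$, and let $p_1,\dots,p_{q^k}\in\mathbb{F}_q^{r''}$ be a $\boldsymbol{D}^{(2)}_{C,f}$-code. Define $\mathrm{Enc}(x_i)=(x_i,z_{x_i},p_i)$. As in the proof of Proposition~\ref{cor:two-step-redundancy}, this is an $(f:d_d,d_f)_p$-FCSPC-DP:
\begin{itemize}
\item for data protection, Lemma~\ref{lem:monotone} gives $d_p(\mathrm{Enc}(x_i),\mathrm{Enc}(x_j))\ge d_p(c_{x_i},c_{x_j})\ge d_d$;
\item for function protection, either $d_p(c_{x_i},c_{x_j})\ge d_f+1$ and the bound is immediate, or the lower half of Lemma~\ref{lem:symbol-pair-ineq} combined with $d_p(p_i,p_j)\ge d_f+1-d_p(c_{x_i},c_{x_j})$ gives $d_p(\mathrm{Enc}(x_i),\mathrm{Enc}(x_j))\ge d_f$.
\end{itemize}

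I would then show that the redundancy vectors $q_i=(z_{x_i},p_i)\in\mathbb{F}_q^{\,n-k+r''}$ form a $\boldsymbol{D}^{(1)}_f(d_d,d_f;\cdot)$-code. Suppose instead that $d_p(q_i,q_j)<[\boldsymbol{D}^{(1)}_f]_{ij}$ for some $i\ne j$; the entry is then positive. Split $\mathrm{Enc}(x_i)=(x_i,q_i)$ and apply the upper half of Lemma~\ref{lem:symbol-pair-ineq}:
\[
d_p(\mathrm{Enc}(x_i),\mathrm{Enc}(x_j))\le d_p(x_i,x_j)+d_p(q_i,q_j)+1.
\]
This is strictly less than $d_d$ in the same-class case and strictly less than $d_f$ in the cross-class case, contradicting what was just shown. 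This is exactly the Case~1/Case~2 argument of Theorem~\ref{thm:jpdm-sandwich}. Hence $N_p(\boldsymbol{D}^{(1)}_f)\le n-k+r''$.

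The main point needing care is the bookkeeping at the junction between blocks. Lemma~\ref{lem:symbol-pair-ineq} must be applied to the split $(x_i \mid q_i)$, so that the message block stands against the entire redundancy block $(z_{x_i},p_i)$. It must not be applied to the split $(c_{x_i}\mid p_i)$ used for the encoding guarantee. With the correct split, the $\pm1$ junction losses match the $-1$ and $+1$ offsets built into $\boldsymbol{D}^{(1)}$ and $\boldsymbol{D}^{(2)}$, and no additional slack arises.
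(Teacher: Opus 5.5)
Your proof is correct. Its main line is the two-line route the paper records in the Remark after the theorem: you chain the lower half of Theorem~\ref{thm:jpdm-sandwich} with Proposition~\ref{cor:two-step-redundancy}, using $r^f_p(k,d_d,d_f)$ as the intermediary. Your identification $c_x=xG$ for a systematic generator matrix is legitimate, because $C$ is linear and so $x\mapsto z_x$ is linear.

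The paper's actual proof skips the intermediary. It checks directly that $\bar p_x=(z_x,p_x)$ is a $\boldsymbol{D}^{(1)}_f$-code, splitting $\bar p_x$ as $(z\mid p)$ and $c_x$ as $(x\mid z)$. For same-class pairs it applies Lemma~\ref{lem:monotone} on $(z\mid p)$ and then the upper half of Lemma~\ref{lem:symbol-pair-ineq} on $(x\mid z)$. For cross-class pairs it applies the lower half on $(z\mid p)$ and then the same upper-half step.

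Your explicit version produces the identical witness code. The difference is that you verify it through the full encoding $(x,z_x,p_x)$: lower half on the split $(c\mid p)$, then upper half on $(x\mid q)$. Each bookkeeping spends exactly one unit of junction loss in each direction, so the two are equally tight and equally constructive.

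Your route is shorter and reuses earlier results. The paper's route avoids the detour through the FCSPC-DP property and through $r^f_p$. That detour matters slightly: the definition of $r^f_p$ and Proposition~\ref{cor:two-step-redundancy} both carry the standing assumption $d_d\le d_f$, which the theorem statement does not list. Your explicit version never actually uses that assumption, so this affects only the chained argument.
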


    \begin{proof}
    Let $\boldsymbol{D}^1=\boldsymbol{D}^{(1)}_{f}(d_d,d_f;\,x_1,\dots,x_{q^k})$ and $\boldsymbol{D}^2=\boldsymbol{D}^{(2)}_{C,f}(d_f;\,x_1,\dots,x_{q^k})$.
    
    Let $\mathcal{P}_2=\{p_x : x\in\mathbb{F}_q^{k}\}$ be an ordered $\boldsymbol{D}_p^2$-code of length $N_p(\boldsymbol{D}^2)$, and set
    \[
      \mathcal{P}_1=\bigl\{\,\bar p_x=(z_x,p_x)\;:\;x\in\mathbb{F}_q^{k}\,\bigr\} \subseteq\mathbb{F}_q^{\,(n-k)+N_p(\boldsymbol{D}^2)} .
    \]
    We claim that $\mathcal{P}_1$ is a $\boldsymbol{D}^1$-code, i.e.\  $d_p(\bar p_x,\bar p_y)\ge[\boldsymbol{D}^1]_{x,y}$ for all $x, y \in \mathbb{F}_q^k$ and the rows and columns of $\boldsymbol{D}^1$ indexed by the vectors in $\mathbb{F}_q^k$.


    Let $x\neq y$ with $f(x)=f(y)$. By Lemma~\ref{lem:monotone} ,
    \[
    d_p(\bar p_x,\bar p_y)\;\ge\;d_p(z_x,z_y) \;\ge\;d_p(c_x,c_y)-d_p(x,y)-1 \;\ge\;d_d-1-d_p(x,y)\;=\;[\boldsymbol{D}^1]_{x,y},
    \]
    where the second inequality follows from the result of Lemma~\ref{lem:symbol-pair-ineq} and the third inequality follows using $d_p(c_x,c_y)\ge d_d$.

    Let $f(x)\neq f(y)$. Since $\mathcal{P}_2$ is a $\boldsymbol{D}^2$-code, $d_p(p_x,p_y)\ge d_f+1-d_p(c_x,c_y)$.
    \begin{align*}
        d_p(\bar p_x,\bar p_y) & \ge\;d_p(z_x,z_y)+d_p(p_x,p_y)-1\\
        & \;\ge\;d_p(z_x,z_y)+\bigl(d_f+1-d_p(c_x,c_y)\bigr)-1\\
        &\;\ge\;d_f-\bigl(d_p(c_x,c_y)-d_p(z_x,z_y)\bigr)\\
        & \;\ge\;d_f-1-d_p(x,y),
    \end{align*}

    Hence $\mathcal{P}_1$ is a $\boldsymbol{D}^1$-code of length $(n-k)+N_p(\boldsymbol{D}^2)$, and the claim follows.
    \end{proof}

    \begin{remark}
    Theorem~\ref{thm:pair-matrix-two-step} also follows in two lines by combining Proposition~\ref{cor:two-step-redundancy} with the lower bound of Theorem~\ref{thm:jpdm-sandwich}. The proof given above is constructive, exhibiting an explicit $\boldsymbol{D}^{(1)}_{f}$-code of the stated length rather than merely bounding $N_p$.
    \end{remark}


    \begin{theorem}
    \label{thm:coded-drm-bound}
    Let $f : \mathbb{F}_q^{k} \to \mathrm{Im}(f)$, let $C$ be a linear $[n,k]_q$ code with $d_p(C) \ge d_d$, let $d_f \ge d_d$, and let $x_1,\dots,x_M \in \mathbb{F}_q^{k}$ be distinct, $2 \le M \le q^k$. Then
    \[
    N_p\bigl(\boldsymbol{D}^{(2)}_{C,f}(d_f;\, x_1,\dots,x_M)\bigr)
    \;\le\; N_p\bigl(M,\; d_f - d_d + 1\bigr).
    \]
    \end{theorem}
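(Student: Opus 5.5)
The plan is to show directly that any code realising $N_p(M, d_f-d_d+1)$ is automatically a $\boldsymbol{D}^{(2)}_{C,f}(d_f;\,x_1,\dots,x_M)$-code. By definition of $N_p(\cdot)$, the inequality then follows at once.

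First I bound every entry of the coded matrix uniformly. Since the $x_i$ are distinct and $C$ is linear with generator matrix $G$ of full rank $k$, the codewords $c_{x_i}=x_iG$ are distinct. Hence $d_p(c_{x_i},c_{x_j})\ge d_p(C)\ge d_d$ for all $i\ne j$. For a cross-class pair ($f(x_i)\ne f(x_j)$) this gives
\[
\bigl[\boldsymbol{D}^{(2)}_{C,f}(d_f;\,x_1,\dots,x_M)\bigr]_{ij} = \bigl[d_f+1-d_p(c_{x_i},c_{x_j})\bigr]^+ \le \bigl[d_f+1-d_d\bigr]^+ = d_f-d_d+1,
\]
where the last equality uses $d_f\ge d_d$. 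For same-class pairs and on the diagonal the entry is $0\le d_f-d_d+1$. So every off-diagonal entry of the coded matrix is dominated by the constant $d_f-d_d+1$.

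Next, let $r=N_p(M,d_f-d_d+1)$ and take a code $\{p_1,\dots,p_M\}\subseteq\mathbb{F}_q^r$ with $d_p(p_i,p_j)\ge d_f-d_d+1$ for all $i\ne j$. Index its codewords by the ordering $x_1,\dots,x_M$. The entrywise bound then gives $d_p(p_i,p_j)\ge[\boldsymbol{D}^{(2)}_{C,f}]_{ij}$ for all $i,j$, so this code is a $\boldsymbol{D}^{(2)}_{C,f}$-code of length $r$, and minimality of $N_p(\boldsymbol{D}^{(2)}_{C,f})$ yields the claim. This is just the monotonicity of $N_p$ under entrywise domination of the requirement matrix.

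The only step needing care is ensuring that $N_p(M,d_f-d_d+1)$ is realised by $M$ \emph{distinct} vectors at pairwise pair-distance at least $d_f-d_d+1\ge 1$, so that the target code exists with the right cardinality. This is guaranteed by the definition of $N_p(M,D)$ as the length of a $D_p$-code with constant off-diagonal requirement $D$. The use of linearity (distinctness of the $c_{x_i}$ and the bound $d_p(C)\ge d_d$) is the only place the hypothesis on $C$ enters.
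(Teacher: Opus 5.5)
Your proposal is correct and follows essentially the same route as the paper: you bound every entry of $\boldsymbol{D}^{(2)}_{C,f}(d_f;\,x_1,\dots,x_M)$ by $d_f-d_d+1$ using $d_p(c_{x_i},c_{x_j})\ge d_d$, and then observe that any $M$ words at pairwise pair-distance at least $d_f-d_d+1$ already form a $\boldsymbol{D}^{(2)}_{C,f}$-code. Your explicit note that $x\mapsto xG$ is injective (so that the minimum-distance bound applies to $c_{x_i}\neq c_{x_j}$) spells out a step the paper leaves implicit.
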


    \begin{proof}
    For distinct $x_i, x_j$ we have $d_p(c_{x_i},c_{x_j}) \ge d_p(C) \ge d_d$, so for $i \neq j$ with $f(x_i) \neq f(x_j)$,
    \[
    \bigl[\boldsymbol{D}^{(2)}_{C,f}(d_f;\, x_1,\dots,x_M)\bigr]_{ij}
    = \max\bigl\{ d_f + 1 - d_p(c_{x_i},c_{x_j}),\, 0 \bigr\}
    \le \max\bigl\{ d_f + 1 - d_d,\, 0 \bigr\}
    = d_f - d_d + 1,
    \]
    the last equality by $d_f \ge d_d$; the remaining entries are $0$. Hence every entry is at most $d_f - d_d + 1$, so any $M$ vectors with pairwise symbol-pair distance at least $d_f - d_d + 1$ form a $\boldsymbol{D}^{(2)}_{C,f}(d_f;\,x_1,\dots,x_M)$-code, and the bound follows.
    \end{proof}

    \begin{corollary}
    \label{cor:coded-drm-hamming}
    Let $C$ be an $[n,k,d_d]_p$ symbol-pair code, let $d_f>d_d$, and suppose $M>q$. Then
    \[
    N_p\bigl(\boldsymbol{D}^{(2)}_{C,f}(d_f;\, x_1,\dots,x_M)\bigr)
    \;\le\; N_p\bigl(M,\, d_f - d_d + 1\bigr)
    \;\le\; N_H\bigl(M,\, d_f - d_d\bigr).
    \]
    \end{corollary}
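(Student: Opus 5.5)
The first inequality is exactly Theorem~\ref{thm:coded-drm-bound}, applied with the given $M$ distinct messages $x_1,\dots,x_M$. So the only work is the second inequality, $N_p(M,d_f-d_d+1)\le N_H(M,d_f-d_d)$, where $N_H(M,D)$ denotes the shortest length of a $q$-ary code with $M$ codewords and pairwise Hamming distance at least $D$.

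The plan is to transfer an optimal Hamming code into the pair metric using Lemma~\ref{lem:hamming-pair-distance}. Write $D=d_f-d_d\ge 1$; this uses the strict hypothesis $d_f>d_d$. Let $\mathcal{H}=\{h_1,\dots,h_M\}\subseteq\mathbb{F}_q^{\ell}$ with $\ell=N_H(M,D)$ and $d_H(h_i,h_j)\ge D$ for $i\ne j$. I will show that $\mathcal{H}$, read in the pair metric, has pairwise pair-distance at least $D+1$. Then $N_p(M,D+1)\le \ell$ follows by the definition of $N_p$.

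Fix $i\ne j$. If $0<d_H(h_i,h_j)<\ell$, part~1 of Lemma~\ref{lem:hamming-pair-distance} gives $d_p(h_i,h_j)\ge d_H(h_i,h_j)+1\ge D+1$. The remaining case $d_H(h_i,h_j)=\ell$ is the main obstacle. There $d_p(h_i,h_j)=\ell$ by part~2, and we need $\ell\ge D+1$, that is, $N_H(M,D)>D$.

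This is where the hypothesis $M>q$ enters. I would argue that a $q$-ary code of length exactly $D$ with minimum Hamming distance $D$ has all codewords pairwise differing in every coordinate. In particular the first coordinates are pairwise distinct, so $M\le q$. Hence $M>q$ forces $\ell=N_H(M,D)\ge D+1$, and then $d_p(h_i,h_j)=\ell\ge D+1$.

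A small edge check is needed at length $\ell\ge 2$ so that the pair-distance is defined. This holds since $\ell\ge D+1\ge 2$. Combining the two cases gives the claim.
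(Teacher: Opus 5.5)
Your proof is correct and follows the route the paper evidently intends (the corollary is stated there without proof): the first inequality is Theorem~\ref{thm:coded-drm-bound}, and the second is the Cassuto--Blaum relation of Lemma~\ref{lem:hamming-pair-distance} applied to an optimal Hamming code, with $M>q$ serving exactly to rule out length $\ell=D$, where every pair would be at full distance $\ell=D$. Your explicit check that $\ell\ge D+1$ supplies the one detail the paper leaves unstated; it also uses the trivial bound $\ell\ge D$, which holds because $M\ge 2$.
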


    \begin{corollary}[Redundancy of the two-step construction]
    \label{cor:two-step-hamming}
    Under the hypotheses of Corollary~\ref{cor:coded-drm-hamming}, the two-step construction applied to $C$ yields
    \[
    r^f_p\bigl(k,\,2t_d+1,\,2t_f+1\bigr) \;\le\; (n-k)\;+\;N_H\bigl(M,\,2(t_f-t_d)-1\bigr).
    \]
    \end{corollary}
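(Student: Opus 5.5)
The plan is to obtain the statement by specializing the preceding chain of bounds to odd distances $d_d = 2t_d+1$ and $d_f = 2t_f+1$. The one delicate point is the final unit: a direct specialization gives $N_H(M, 2(t_f-t_d))$, while the statement asks for $N_H(M, 2(t_f-t_d)-1)$.

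First I would apply Proposition~\ref{cor:two-step-redundancy} to the linear code $C$:
\[
r^f_p(k,d_d,d_f) \le (n-k) + N_p\bigl(\boldsymbol{D}^{(2)}_{C,f}(d_f;\,x_1,\dots,x_M)\bigr).
\]
Next, Theorem~\ref{thm:coded-drm-bound} and Corollary~\ref{cor:coded-drm-hamming} bound the second term by $N_p(M, d_f-d_d+1)$ and then by a Hamming quantity. Substituting the odd distances gives $d_f - d_d + 1 = 2(t_f-t_d)+1$. The Hamming-to-pair step uses Lemma~\ref{lem:hamming-pair-distance}: a Hamming code of distance $D$, none of whose codeword pairs differ in every coordinate, has pair distance at least $D+1$. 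The hypothesis $M>q$ is what makes this usable: it lets one take a code with a common zero coordinate, or argue via $d_H<$ length, so that full-length differences are excluded. This step costs one unit of Hamming distance relative to the pair requirement.

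That alone yields $N_H(M,2(t_f-t_d))$, so the main obstacle is recovering one more unit. My plan is to stop routing through the scalar bound $N_p(M,d_f-d_d+1)$ and instead analyze the concatenation $C_f(x)=(c_x,p_x)$ directly. Its pair distance splits into three contributions: the pairs internal to $c_x$ excluding the cyclic wrap pair $(c_{n-1},c_0)$, the internal pairs of $p_x$ excluding its wrap, and the two junction pairs $(c_{n-1},p_0)$ and $(p_{\mathrm{last}},c_0)$. The loss of one in Lemma~\ref{lem:symbol-pair-ineq} occurs only when both wrap pairs differ but are replaced by agreeing junction pairs.

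I would try to rule this out by arranging that the junctions always contribute. One option is to choose the redundancy code so that $p_0$ and $p_{\mathrm{last}}$ differ whenever the $p$'s differ. Another is to use the fact that a Hamming distance of $2(t_f-t_d)-1$ in $p$ already gives $d_p(p_x,p_y)\ge 2(t_f-t_d)$ and then combine this with a junction gain. Combined with $d_p(c_x,c_y)\ge 2t_d+1$, this would give $d_p(C_f(x),C_f(y))\ge 2t_f+1$ for cross-class pairs.

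Data protection needs no new argument: Lemma~\ref{lem:monotone} gives $d_p(C_f(x),C_f(y))\ge d_p(c_x,c_y)\ge d_d$ for all $x\neq y$.

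If the junction argument fails in general, I would fall back on the parity of the requirement. Cross-class pairs only need to correct $t_f$ pair errors. I would check whether the bound $d_p\ge 2t_f$ produced by the direct count can be lifted to $2t_f+1$ under the hypothesis $M>q$, which excludes the degenerate equal-junction configurations.
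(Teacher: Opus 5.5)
You were right to flag that a direct specialization gives $N_H(M,2(t_f-t_d))$, and the paper gives you nothing beyond that. The corollary is stated without proof, and the only route its preceding results support is Proposition~\ref{cor:two-step-redundancy} followed by Corollary~\ref{cor:coded-drm-hamming}. With $d_d=2t_d+1$ and $d_f=2t_f+1$ this yields $N_H(M,d_f-d_d)=N_H(M,2(t_f-t_d))$, where $M=q^k$ because Proposition~\ref{cor:two-step-redundancy} indexes all messages. The stated $N_H(M,2(t_f-t_d)-1)$ is one unit stronger, and your proposal does not supply that unit.

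\begin{itemize}
\item Your first option fails. Forcing $p_0\neq p'_0$ and $p_{\mathrm{last}}\neq p'_{\mathrm{last}}$ for all distinct redundancy words, so that Lemma~\ref{lem:junction} applies, is impossible exactly under the hypothesis $M>q$: among more than $q$ distinct words, two agree in the first coordinate.
\item Your second option needs the same junction gain, so it fails for the same reason.
\item The parity fallback fails. The definition demands $d_p^f\ge 2t_f+1$, and distance $2t_f$ corrects only $t_f-1$ pair errors. Moreover, $M>q$ forces agreeing junction symbols rather than excluding them.
\item A smaller inaccuracy: the loss in Lemma~\ref{lem:symbol-pair-ineq} occurs when both wrap pairs differ and \emph{exactly one} junction pair differs. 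A differing wrap pair always forces a differing junction, so ``both junctions agree'' is not the loss case.
\end{itemize}

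The loss is real. Take $q=2$, $k=2$ and $c_x=(x_1,x_2,x_1+x_2)$, so $d_p(C)=3$ and $t_d=1$. Take $t_f=2$, so $2(t_f-t_d)-1=1$ and $N_H(4,1)=2$. Label the Hamming code $\mathbb{F}_2^2$ as $p_x=(x_2,x_1)$. Then $C_f(0,1)=(0,1,1,1,0)$ lies at pair distance $4<5$ from $C_f(0,0)=0$, which breaks function protection whenever $f(0,0)\neq f(0,1)$. So the two-step construction with an arbitrary code of Hamming distance $2(t_f-t_d)-1$ does not deliver the claim.

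The missing idea is to align the redundancy block with the codeword. Suppose $(p_x)_0=(c_x)_0$ for all $x$; for systematic $C$ this means $(p_x)_0=x_1$. Then:
\begin{itemize}
\item the junction $(c_{n-1},p_0)$ coincides with the wrap pair $(c_{n-1},c_0)$ of $c_x$;
\item the junction $(p_{L-1},c_0)$ coincides with the wrap pair $(p_{L-1},p_0)$ of $p_x$;
\item hence $d_p(C_f(x),C_f(y))=d_p(c_x,c_y)+d_p(p_x,p_y)$ exactly.
\end{itemize}
Since $M>q$ gives $D\le L-1$ by Singleton, Lemma~\ref{lem:hamming-pair-distance} yields $d_p(p_x,p_y)\ge D+1=2(t_f-t_d)$ for every pair. (Pairs differing in all coordinates are not excluded by $M>q$, contrary to your remark; they are simply harmless.) This gives $d_p\ge 2t_f+1$ across classes and $\ge d_d$ within classes. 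In the example above, $p_x=x$ indeed gives pair distance $5$.

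Such a labelling exists whenever the Hamming code has a coordinate in which each symbol occurs $q^{k-1}$ times, for instance any linear $[L,k,D]_q$ code in systematic form. This proves the stated bound with $N_H$ replaced by its linear (or coordinate-balanced) counterpart. For an arbitrary optimal nonlinear code it is not automatic. Without such an argument, the bound you can actually claim is $(n-k)+N_H(M,2(t_f-t_d))$.
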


\section{Strictness and the $\alpha$-Pair-Distance Graph}
\label{sec:invariants}

    Recall that an $(f:d_d,d_f)_p$-FCSPC simultaneously guarantees a minimum pair-distance $d_d$ between all codewords and a minimum pair-distance $d_f$ between codewords whose messages carry distinct function values. An FCSPC-DP is called a strict FCSPC-DP when $d_f>d_d$. The boundary case $d_f=d_d$ imposes no requirement beyond that of an ordinary symbol-pair code. Beyond asking merely whether a fixed code with $d_p(C)=d_d$ admits a strict FCSPC-DP for some function, we ask the sharper quantitative question: how strong can the function protection $d_f$ be made? We show that both questions are governed by a one-parameter family of graphs associated with the code, and, for linear codes, by a single subcode-generation invariant.

 \subsection{The $\alpha$-Pair-Distance Graph}

    \begin{definition}[$\alpha$-Pair-distance graph]\label{def:alpha-pair-graph} Let $C$ be a symbol-pair code with minimum pair-distance $d_p(C)$. For an integer $\alpha\ge d_p(C)$, the $\alpha$-pair-distance graph $G_p^{\alpha}(C)$ has vertex set $C$, with distinct codewords $c_1,c_2$ adjacent if and only if $d_p(c_1,c_2)\le\alpha$.
    \end{definition}

    The next result uses the connectedness of the $\alpha$-pair-distance graph to characterize when a code serves as a strict FCSPC-DP for a desired protection level.

    \begin{theorem}\label{thm:alpha-connected-nonexistence}
    Let $C$ be an $(n,q^k,d_p(C))_q$ symbol-pair code and let $\alpha\ge d_p(C)$. If $G_p^{\alpha}(C)$ is connected, then $C$ cannot be a strict $(f:d_p(C),d_f)_p$-FCSPC for any $f:\mathbb{F}_q^k\to\mathrm{Im}(f)$ with $|\mathrm{Im}(f)|\ge 2$ and $d_f>\alpha$.
    \end{theorem}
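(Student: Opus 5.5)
We argue by contradiction, using only connectivity of $G_p^{\alpha}(C)$ and the fact that $f$ is non-constant. Suppose $C$, viewed as the image of a systematic encoding $x\mapsto c_x$ of $\mathbb{F}_q^k$, is an $(f:d_p(C),d_f)_p$-FCSPC for some $f$ with $|\mathrm{Im}(f)|\ge 2$ and $d_f>\alpha$. Transport $f$ to the codewords by setting $\tilde f(c_x)=f(x)$; this is well defined because the encoding is injective (distinct messages give distinct codewords, as in the systematic setting). Since $|\mathrm{Im}(f)|\ge2$, the level sets of $\tilde f$ split the vertex set $C$ into at least two nonempty classes.

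The key step is to locate an edge of $G_p^{\alpha}(C)$ whose endpoints lie in different classes. Fix a value $a\in\mathrm{Im}(f)$ and let $A=\tilde f^{-1}(a)$. Then $A$ is a nonempty proper subset of the vertex set, since some other value is attained. In a connected graph, every nonempty proper vertex subset has at least one edge leaving it. Concretely, pick $u\in A$ and $v\notin A$ and take a path $u=w_0,w_1,\dots,w_m=v$ in $G_p^{\alpha}(C)$. Let $i$ be the first index with $w_i\notin A$. Then $w_{i-1}\in A$, so $\tilde f(w_{i-1})=a\ne\tilde f(w_i)$. Adjacency in $G_p^{\alpha}(C)$ gives $d_p(w_{i-1},w_i)\le\alpha$.

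Writing $w_{i-1}=c_{x_1}$ and $w_i=c_{x_2}$, we have $f(x_1)\ne f(x_2)$, so by Definition~\ref{def:min-dist-function}
\[
d_p^f(C)\;\le\; d_p(c_{x_1},c_{x_2})\;\le\;\alpha\;<\;d_f .
\]
This contradicts the function-protection requirement $d_p^f(C)\ge d_f$ of Definition~\ref{def:FCSPC-DP}, which proves the theorem. Note that the data-protection condition $d_p(C)\ge d_p(C)$ plays no role; the argument actually shows $d_p^f(C)\le\alpha$ for every non-constant $f$. The hypothesis $\alpha\ge d_p(C)$ only makes the graph meaningful.

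No step is a serious obstacle. The one point needing care is the bookkeeping between messages and codewords, which is why $f$ is first pulled back to a labelling $\tilde f$ of the vertices. After that, the crossing-edge argument is the standard graph-theoretic fact.
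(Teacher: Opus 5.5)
Your proof is correct and is essentially the paper's argument. The paper notes that cross-fibre pairs have pair-distance at least $d_f>\alpha$, so no edge of $G_p^{\alpha}(C)$ joins distinct fibres and the graph would be disconnected; you run the same reasoning in contrapositive form by extracting a crossing edge from a path, and, as you observe, neither argument uses the data-protection condition or strictness.
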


    \begin{proof}
    Assume, for the sake of contradiction, that there exists a symbol-pair code \(C\) whose $\alpha$-pair-distance graph is connected and is a strict $(f:d_p(C),d_f)_p$-FCSPC with \(d_f > \alpha\) and \(|\operatorname{Im}(f)| \ge 2\). 

    For each element \(a \in \operatorname{Im}(f)\), define the fibre
    \[
    C_a := \{c_x \in C : f(x) = a\}.
    \]
    Since \(|\operatorname{Im}(f)| \ge 2\), there exist at least two distinct indices \(a,b \in \operatorname{Im}(f)\) such that both \(C_a\) and   \(C_b\) are non-empty. 

    Now take arbitrary codewords \(c_x \in C_a\) and \(c_y \in C_b\) with \(a \neq b\). By the definition of FCSPC the distance between codewords originating from distinct fibres is bounded below by \(d_f\). Hence
    \[
    d_p(c_x, c_y) \ge d_f.
    \]
    Given that \(d_f > \alpha\), we obtain the strict inequality
    \[
    d_p(c_x, c_y) > \alpha.
    \]

    Recall that in the graph \(G_p^{\alpha}(C)\), two distinct vertices are adjacent if and only if their pair-distance is at most \(\alpha\). Since \(d_p(c_x, c_y) > \alpha\), the pair \(\{c_x, c_y\}\) cannot form an edge in \(G_p^{\alpha}(C)\). Therefore, no edge of \(G_p^{\alpha}(C)\) connects vertices from different fibres \(C_a\) and \(C_b\) whenever \(a \neq b\).

     This implies that the vertex set \(C\) decomposes into disjoint, non-empty blocks \(C_a\) (indexed by \(a \in \operatorname{Im}(f)\)) such that there are no edges between distinct blocks. Hence, the graph \(G_p^{\alpha}(C)\) is disconnected, leading to a contradiction.

    \end{proof}

    The preceding proof establishes a necessary structural condition for a code to be a strict FCSPC-DP for a given function \(f\). Specifically, if \(C\) is such a strict FCSPC-DP code, then the \(\alpha\)-pair-distance graph \(G_p^{\alpha}(C)\) must have at least as many connected components as the cardinality of the image of \(f\). This yields the following non-existence result for strict FCSPC-DP.
    
    \begin{theorem}\label{thm:alpha-components-nonexistence}
    Let $C$ be an $(n,q^k,d_p(C))_p$ symbol-pair code and $\alpha\ge d_p(C)$. If $G_p^{\alpha}(C)$ has $Q$ connected components, then $C$ cannot be a strict $(f:d_p(C),d_f)_p$-FCSPC for any $f$ with $|\mathrm{Im}(f)|\ge Q+1$ and $d_f>\alpha$.
    \end{theorem}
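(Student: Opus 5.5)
The proof extends the argument of Theorem~\ref{thm:alpha-connected-nonexistence} from a single component to a counting argument over components. We argue by contradiction. Suppose $C$ is a strict $(f:d_p(C),d_f)_p$-FCSPC for some $f$ with $|\mathrm{Im}(f)|\ge Q+1$ and $d_f>\alpha$. As in the earlier proof, we partition $C$ into the non-empty fibres $C_a=\{c_x\in C: f(x)=a\}$, one for each $a\in\mathrm{Im}(f)$. There are $|\mathrm{Im}(f)|\ge Q+1$ of them, and each is non-empty because $a$ lies in the image.

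The key step is to show that every connected component of $G_p^{\alpha}(C)$ lies inside a single fibre. Take any edge $\{c_x,c_y\}$ of the graph, so $d_p(c_x,c_y)\le\alpha<d_f$. If we had $f(x)\neq f(y)$, the function-protection requirement would give $d_p(c_x,c_y)\ge d_f$, which is impossible. Hence adjacent codewords lie in the same fibre. Following a path inside a component, this propagates along every edge, so all vertices of a component share one function value. The map from components to $\mathrm{Im}(f)$ that sends a component to this common value is therefore well defined.

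This map is surjective, since every vertex lies in some component and every fibre is non-empty. It follows that the number of components satisfies $Q\ge|\mathrm{Im}(f)|\ge Q+1$, a contradiction.

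No step is a real obstacle. The only point needing care is the path-propagation argument showing that a component cannot meet two fibres; the rest is the pigeonhole count. It is also worth noting that the data-protection parameter $d_p(C)$ and strictness play no role beyond making the hypothesis meaningful, since only the cross-class bound $d_f>\alpha$ is used.
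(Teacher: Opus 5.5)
Your proof is correct and follows essentially the paper's own route: the paper obtains this theorem directly from the argument of Theorem~\ref{thm:alpha-connected-nonexistence}, noting that no edge of $G_p^{\alpha}(C)$ joins distinct fibres, so every component lies inside a single fibre and $G_p^{\alpha}(C)$ has at least $|\mathrm{Im}(f)|$ components. Your surjection from components onto $\mathrm{Im}(f)$ is that same counting step, written out explicitly, and you are right that only the cross-class bound $d_f>\alpha$ is actually used.
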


    The converse direction gives an existence criterion for arbitrary codes to be strict FCSPC-DP.

    \begin{theorem}\label{thm:pair-existence} 
    Let $C$ be an $(n,q^k,d_p(C))_q$ code with $d_p(C)=d_d$, let $f:\mathbb{F}_q^k\to\mathrm{Im}(f)$ with $E=|\mathrm{Im}(f)|$, and let $d_f>d_d$. Suppose the connected components of $G_p^{\,d_f-1}(C)$ can be grouped into $E$ pairwise disjoint unions $C_1,\dots,C_E$ with $|C_i|=|f^{-1}(a_i)|$ for some ordering $a_1,\dots,a_E$ of $\mathrm{Im}(f)$. Then $C$ is a strict $(f:d_d,d_f)_p$-FCSPC.
    \end{theorem}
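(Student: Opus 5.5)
The plan is to build the encoding directly from the grouping of components. Since $C$ has exactly $q^k$ codewords and $\sum_i |C_i| = \sum_i |f^{-1}(a_i)| = q^k$, I will fix, for each $i$, an arbitrary bijection $\phi_i : f^{-1}(a_i) \to C_i$. Gluing these gives a bijection $\phi : \mathbb{F}_q^k \to C$ with $f(x)=a_i$ exactly when $\phi(x)\in C_i$. The encoding is then $C_f(x) := \phi(x)$, so the fibre $C_{a_i}$ in the sense of the proof of Theorem~\ref{thm:alpha-connected-nonexistence} is precisely the union $C_i$.

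(If systematic form is required, I would note that the code-level statement only concerns the assignment of codewords to messages. Alternatively, one can append the message, i.e.\ use $(x,\phi(x))$ after a cyclic reordering. Lemma~\ref{lem:monotone} shows that prepending data never decreases pair distance relative to the $\phi(x)$ block, so both distance bounds survive. I will treat the assignment version as the main argument and record this as a remark.)

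\emph{Data protection.} For $x\neq y$, $\phi(x)\neq\phi(y)$ are distinct codewords of $C$, so $d_p(C_f(x),C_f(y))\ge d_p(C)=d_d$.

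\emph{Function protection.} Let $f(x)=a_i\neq a_j=f(y)$. Then $\phi(x)\in C_i$ and $\phi(y)\in C_j$ with $i\neq j$. Each $C_i$ is a union of whole connected components of $G_p^{\,d_f-1}(C)$, and the unions are pairwise disjoint, so $\phi(x)$ and $\phi(y)$ lie in different connected components. In particular they are not adjacent, which by Definition~\ref{def:alpha-pair-graph} means $d_p(\phi(x),\phi(y))>d_f-1$, i.e.\ $\ge d_f$. Hence $d_p^f(C_f)\ge d_f$.

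Combining the two bounds with $d_f>d_d$ shows $C$ is a strict $(f:d_d,d_f)_p$-FCSPC. I also need $d_f-1\ge d_d=d_p(C)$ so that $G_p^{\,d_f-1}(C)$ is defined, and this follows from $d_f>d_d$.

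The only delicate point I expect is the systematic-encoding convention in Definition~\ref{def:FCSPC-DP}, since $C$ is an arbitrary code and not given in systematic form. I would handle it with the remark above: the theorem is read as a statement about labelling codewords by messages, as in Theorem~\ref{thm:alpha-connected-nonexistence}. The graph-theoretic step itself is immediate from non-adjacency across components.
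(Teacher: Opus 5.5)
Your proposal is correct and matches the paper's proof: you fix bijections $f^{-1}(a_i)\to C_i$, get data protection from $d_p(C)=d_d$, and get function protection because codewords in distinct unions of components of $G_p^{\,d_f-1}(C)$ are non-adjacent, hence at pair distance at least $d_f$. The paper also treats the encoding as a labelling of the codewords of $C$ and ignores systematic form, so your remark on that point (a cyclic shift plus Lemma~\ref{lem:monotone}) is valid but goes beyond what the paper's argument uses.
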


    \begin{proof}
    Let \(\{a_1,\dots,a_m\}\) be ordering of image set  of $f$. For each \(i\in\{1,\dots,m\}\), define a bijection  
    \[
    \varphi_i : f^{-1}(a_i) \longrightarrow C_i,
    \]
    where the sets \(C_i\) are pairwise disjoint connected component of $G_p^{\,d_f-1}(C)$. Define the encoding map  
    \[
    C_f:\mathbb{F}_q^k \longrightarrow \bigcup_{i=1}^m C_i
    \]
    by  
    \[
    C_f(x) := \varphi_i(x) \quad \text{whenever } f(x)=a_i.
    \]

    We first verify injectivity. Suppose \(x,x'\in C\) with \(C_f(x)=C_f(x')\).  Since the \(C_i\) are disjoint, there exists a unique index \(i\) such that \(C_f(x),C_f(x')\in C_i\). Hence \(f(x)=f(x')=a_i\). Then  
    \[
    \varphi_i(x)=\varphi_i(x')
    \]
    and because \(\varphi_i\) is a bijection, we conclude \(x=x'\).  Thus \(C_f\) is injective, so distinct messages map to distinct codewords.

    Consider two distinct messages \(x,x'\in \mathbb{F}_q^k\). We consider two cases.

    \medskip
    \noindent
    \textbf{Case 1:} \(f(x)=f(x')=a_i\) for some \(i\).  Then \(C_f(x)=\varphi_i(x)\) and \(C_f(x')=\varphi_i(x')\).  Since \(x\neq x'\) and \(\varphi_i\) is injective, we have \(C_f(x)\neq C_f(x')\).  By the minimum distance property of the original code,  
    \[
    d_p(C_f(x),C_f(x')) \ge d_p(C) = d_d.
    \]

    \medskip
    \noindent
    \textbf{Case 2:} \(f(x)=a_i\) and \(f(x')=a_j\) with \(i\neq j\). Then \(C_f(x)\in \varphi_i(f^{-1}(a_i))\) and \(C_f(x')\in \varphi_j(f^{-1}(a_j))\).  By the component assumption, these two symbols lie in distinct connected components  of \(G_p^{d_f-1}\). Since two vertices belonging to different components cannot be adjacent in \(G_p^{d_f-1}\). Therefore, 
    \[
    d_p(C_f(x),C_f(x')) > d_f-1.
    \]

    Combining both cases, we conclude that for all distinct \(x,x'\in C\),
    \[
    d_p(C_f(x),C_f(x')) \ge 
    \begin{cases}
        d_d, & \text{if } f(x)=f(x'),\\[4pt]
        d_f, & \text{if } f(x)\neq f(x').
    \end{cases}
    \]
    In particular, the encoding \(C_f\) defines a strict $(f: d_d, d_f)_p$ FCSPC-DP. This completes the proof.
    \end{proof}

     \subsection{Linear codes: Cayley Structure and the Generation profile}

    Throughout this subsection $C$ is a linear $[n,k,d_p(C)]_q$ symbol-pair code. The inherent algebraic structure of linear codes endows the \(\alpha\)-pair-distance graph with a particularly clean algebraic characterization. This algebraic formulation, in turn, affords a substantial simplification of the conditions under which a given function \(f\) admits a strict FCSPC-DP.

    \begin{lemma}\label{lem:cayley-pair}
    Let \(C\) be a linear $[n,k,d_p(C)]_q$ symbol-pair linear code and let \(\alpha \ge d_p(C)\) be a positive integer. Define
    \[
    S_\alpha^p := \{ c \in C : 0 < w_p(c) \le \alpha \}.
    \]
    Then 
    \[
        G_p^{\alpha}(C) \cong \operatorname{Cay}(C, S_\alpha^p).
    \]
    \end{lemma}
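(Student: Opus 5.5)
The natural approach is to exhibit the identity map $C\to C$ as a graph isomorphism $G_p^{\alpha}(C)\to\operatorname{Cay}(C,S_\alpha^p)$, where $C$ is viewed as an additive group. Before that, I will check that $S_\alpha^p$ is a legitimate connection set in the sense of Definition~\ref{def:cayley}. It excludes the identity $0$, since $w_p(0)=0$. It is nonempty because $\alpha\ge d_p(C)$: by the remark on linear codes, some nonzero codeword has $w_p(c)=d_p(C)\le\alpha$. It is closed under inversion because $-c\in C$ by linearity, and $w_p(-c)=w_p(c)$, since $(-c_i,-c_{i+1})=(0,0)$ if and only if $(c_i,c_{i+1})=(0,0)$. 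The case of a zero-dimensional code would be trivial, so I will tacitly assume $k\ge1$.

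The key step is translation invariance of the pair distance: $d_p(x,y)=w_p(x-y)$ for all $x,y\in\mathbb{F}_q^n$. This follows directly from Definition~\ref{def:sym-pair}, because $(x_i,x_{i+1})\neq(y_i,y_{i+1})$ holds exactly when $(x_i-y_i,x_{i+1}-y_{i+1})\neq(0,0)$, with indices taken cyclically.

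With this in hand, take distinct $c_1,c_2\in C$. In the additive group, $c_1c_2^{-1}$ is $c_1-c_2$, which lies in $C$ and is nonzero. These two codewords are adjacent in $G_p^{\alpha}(C)$ if and only if $d_p(c_1,c_2)\le\alpha$. By translation invariance this is equivalent to $0<w_p(c_1-c_2)\le\alpha$, that is, to $c_1-c_2\in S_\alpha^p$, which is precisely adjacency in $\operatorname{Cay}(C,S_\alpha^p)$. Since the two graphs have the same vertex set $C$, the identity map is therefore an isomorphism.

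I expect no step to be a genuine obstacle. The only points that need care are the verification that $S_\alpha^p$ is nonempty, which is where the hypothesis $\alpha\ge d_p(C)$ is used, and the cyclic-index bookkeeping in the translation-invariance identity. The argument mirrors the Hamming case in Lemma~\ref{lem:cayley-hamming}.
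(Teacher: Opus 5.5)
Your proposal is correct and follows the same route as the paper: you check that $S_\alpha^p$ is closed under negation via $w_p(-c)=w_p(c)$, then use $d_p(x,y)=w_p(x-y)$ to show that adjacency in $G_p^{\alpha}(C)$ coincides with $x-y\in S_\alpha^p$, so the identity map is an isomorphism. You also check explicitly that $S_\alpha^p$ excludes $0$ and is nonempty, using $\alpha\ge d_p(C)$; the paper leaves these two checks implicit.
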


    \begin{proof}
    We first observe that \(S_\alpha^p\) is closed under additive inversion. For any \(z \in \mathbb{F}_q^n\), \(w_p(z) = w_p(-z)\), hence when \(z \in S_\alpha^p\) implies \(-z \in S_\alpha^p\). This ensures that \(S_\alpha^p\) is a valid connection set for the Cayley graph construction.

    It now suffices to show that the edge sets of the two graphs coincide. Let \(x, y \in C\) be distinct vertices. By the definition of the \(\alpha\)-pair-distance graph, \(x\) and \(y\) are adjacent in \(G_p^{\alpha}(C)\) if and only if $d_p(x, y) \le \alpha$.
    Since \(C\) is a linear code, $d_p(x, y) = w_p(x - y)$. Therefore, \(x\) and \(y\) are adjacent in \(G_p^{\alpha}(C)\) if and only if $0 < w_p(x - y) \le \alpha$, which is equivalent to \(x - y \in S_\alpha^p\). By the definition of the Cayley graph \(\operatorname{Cay}(C, S_\alpha^p)\), two vertices \(x, y \in C\) are adjacent precisely when \(x - y \in S_\alpha^p\). Hence adjacency in \(G_p^{\alpha}(C)\) coincides exactly with adjacency in \(\operatorname{Cay}(C, S_\alpha^p)\).

    Conversely, if \(x\) and \(y\) are adjacent in \(\operatorname{Cay}(C, S_\alpha^p)\), then \(x - y \in S_\alpha^p\), which gives \(w_p(x - y) \le \alpha\), and consequently \(d_p(x, y) \le \alpha\), establishing adjacency in \(G_p^{\alpha}(C)\). The equivalence of the adjacency relations therefore holds in both directions.

    Since both graphs share the same vertex set \(C\) and have identical edge sets, we conclude that
    \[
    G_p^{\alpha}(C) \cong \operatorname{Cay}(C, S_\alpha^p).
    \]
    \end{proof}

    Because the symbol-pair weight satisfies \(w_p(\lambda z) = w_p(z)\) for every non-zero scalar \(\lambda \in \mathbb{F}_q^*\), the set \(S^{\alpha}_p\) is closed under multiplication by elements of \(\mathbb{F}_q^*\). Consequently, the subgroup generated by \(S^{\alpha}_p\) in the additive group of \(C\) coincides with its \(\mathbb{F}_q\)-linear span, denoted \(\langle S^{\alpha}_p \rangle\). This observation, together with the Cayley graph characterization of \(G_p^{\alpha}(C)\) established in Lemma~\ref{lem:cayley-pair} and the general connectivity properties of Cayley graphs given in Proposition~\ref{prop:cayley-connectivity} and Corollary~\ref{cor:cayley-components}, gives the following result.

    \begin{corollary}\label{cor:cosets-pair}
    Let \(C\) be an $[n, k, d_p(C)]_q$ linear symbol-pair code, and let \(\alpha \ge d_p(C)\) be an integer. Define
    \[
        S^{\alpha}_p := \{ c \in C : 0 < w_p(c) \le \alpha \}
    \]
    and let \(\gamma_C^p(\alpha) := \dim_{\mathbb{F}_q} \langle S^{\alpha}_p \rangle\). Then the connected components of the \(\alpha\)-pair-distance graph \(G_p^{\alpha}(C)\) are precisely the cosets of the linear subspace \(\langle S^{\alpha}_p \rangle\) in \(C\). In particular:
    \begin{enumerate}
    \item Each connected component has cardinality \(q^{\gamma}\);
    \item The total number of connected components is \(q^{k-\gamma}\);
    \item The graph \(G_p^{\alpha}(C)\) is disconnected if and only if \(\langle S^{\alpha}_p \rangle \subsetneq C\), i.e., if and only if \(\gamma < k\).
    \end{enumerate}
    \end{corollary}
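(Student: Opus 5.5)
The plan is to combine Lemma~\ref{lem:cayley-pair} with Proposition~\ref{prop:cayley-connectivity} and Corollary~\ref{cor:cayley-components}, after identifying the subgroup generated by $S^{\alpha}_p$ with its $\mathbb{F}_q$-span. By Lemma~\ref{lem:cayley-pair}, $G_p^{\alpha}(C)\cong\operatorname{Cay}(C,S^{\alpha}_p)$, where $C$ is regarded as an additive group. $S^{\alpha}_p$ is nonempty because $\alpha\ge d_p(C)$ and a minimum-weight codeword lies in it. It excludes $0$ by definition and is closed under negation, so the Cayley hypotheses hold.

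The next step is to show that the additive subgroup $H=\langle S^{\alpha}_p\rangle_{\mathbb{Z}}$ equals the span $\langle S^{\alpha}_p\rangle_{\mathbb{F}_q}$. Clearly $H$ is contained in the span. For the reverse inclusion, I use $w_p(\lambda c)=w_p(c)$ for $\lambda\in\mathbb{F}_q^*$. This holds because $(\lambda c_i,\lambda c_{i+1})=(0,0)$ iff $(c_i,c_{i+1})=(0,0)$. Hence $S^{\alpha}_p$ is closed under nonzero scalars, so every $\mathbb{F}_q$-linear combination $\sum\lambda_i s_i$ is a sum of elements $\lambda_i s_i\in S^{\alpha}_p\cup\{0\}$ and lies in $H$. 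So $H$ is a subspace of dimension $\gamma=\gamma_C^p(\alpha)$, and $|H|=q^{\gamma}$.

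The component statement needs some care. Proposition~\ref{prop:cayley-connectivity} is phrased only for the case where $S$ does not generate $G$. I would directly note the general fact: vertices $x,y$ are joined by a path iff $x-y$ is a finite sum of elements of $S^{\alpha}_p$, that is, iff $x-y\in H$. Therefore the components are exactly the cosets $x+H$; when $H=C$ this is a single component. Items (1) and (2) then follow from Corollary~\ref{cor:cayley-components} or Lagrange's theorem: each coset has $q^{\gamma}$ elements, and there are $q^k/q^{\gamma}=q^{k-\gamma}$ of them. Item (3) follows because $q^{k-\gamma}>1$ iff $\gamma<k$, which is equivalent to $H\subsetneq C$.

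No step is a real obstacle. The only delicate point is the scalar-invariance step that turns the generated subgroup into a linear span, together with handling the case $H=C$, where the cited proposition does not literally apply.
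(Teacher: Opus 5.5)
Your proposal is correct and follows essentially the same route as the paper, which derives the corollary from the scalar invariance $w_p(\lambda z)=w_p(z)$ (so the additive subgroup generated by $S^{\alpha}_p$ coincides with its $\mathbb{F}_q$-span), Lemma~\ref{lem:cayley-pair}, and the Cayley-graph facts in Proposition~\ref{prop:cayley-connectivity} and Corollary~\ref{cor:cayley-components}. Your direct path argument showing that the components are the cosets of $H$ in every case, including $H=C$, is a small but genuine tightening: the cited proposition and its corollary are stated only when $S$ fails to generate the group.
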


    The following example illustrates the results of Corollary~\ref{cor:cosets-pair}.

    \begin{example}
    Let \(C \subseteq \mathbb{F}_3^4\) be the linear symbol-pair code given as follows $$C := \{0000,\; 1211,\; 1100,\; 0111,\; 1022,\; 2200,\; 2011,\; 2122,\; 0222\}$$ with minimum symbol-pair distance \(d_p(C) = 3\).
    Consider \(\alpha = 3\), then the set of non-zero codewords of pair weight at most 3 is given by
    \[
    S^3_p = \{ c \in C : 0 < w_p(c) \le 3 \} = \{1100,\; 2200\},
    \]
    
    and linear subcode of $C$ generated by $S^3_p$ is $\langle S^{3}_p \rangle = \{0000, 1100, 2200\}$ with $ \gamma_C^p(3)=\dim\langle S^3_p\rangle = 1.$ Fig~\ref{fig:cayley-ternary} shows $G_p^{3}(C)$ has $3^{2-1}=3$ components of size $3^1$ same as stated in Corollary~\ref{cor:cosets-pair}. The connected components of $G_p^{3}(C)$ are:
    \[
    \{0000,1100,2200\},\quad \{0111,1211,2011\},\quad \{0222,1022,2122\},
    \] 
    \end{example}

    \begin{figure}[t]
    \centering
    \begin{tikzpicture}[
    scale=1.0,
    vtx/.style={draw, rounded corners=2pt, fill=white, inner sep=2.5pt,
              font=\small\ttfamily},
    every path/.style={thick}
    ]

    \begin{scope}[xshift=0cm]
  \node[vtx] (a0) at (0,1.15)      {0000};
  \node[vtx] (a1) at (-1.0,-0.65)  {1100};
  \node[vtx] (a2) at (1.0,-0.65)   {2200};
  \draw (a0)--(a1); \draw (a1)--(a2); \draw (a2)--(a0);
  \node at (0,-1.55) {$\langle S^{3}_p\rangle$};
    \end{scope}

    \begin{scope}[xshift=4.2cm]
    \node[vtx] (b0) at (0,1.15)      {0111};
    \node[vtx] (b1) at (-1.0,-0.65)  {1211};
    \node[vtx] (b2) at (1.0,-0.65)   {2011};
    \draw (b0)--(b1); \draw (b1)--(b2); \draw (b2)--(b0);
    \node at (0,-1.55) {$0111+\langle S^{3}_p\rangle$};
    \end{scope}

\begin{scope}[xshift=8.4cm]
  \node[vtx] (c0) at (0,1.15)      {0222};
  \node[vtx] (c1) at (-1.0,-0.65)  {1022};
  \node[vtx] (c2) at (1.0,-0.65)   {2122};
  \draw (c0)--(c1); \draw (c1)--(c2); \draw (c2)--(c0);
  \node at (0,-1.55) {$0222+\langle S^{3}_p\rangle$};
\end{scope}

    \end{tikzpicture}
    \caption{The Cayley graph $G^{3}_p(C)\cong\mathrm{Cay}\bigl(C,S^{3}_p\bigr)$ for the $[4,2, 3]_3$ symbol-pair code $C=\langle(1,1,0,0),(0,1,1,1)\rangle$ with $S^{3}_p=\{1100,2200\}$. The three connected components are the cosets of $\langle S^{3}_p\rangle$, each a triangle of size $|\langle S^{3}_p\rangle|=3$.}
    \label{fig:cayley-ternary}
    \end{figure}

    By Corollary~\ref{cor:cosets-pair}, the connectivity of $G_p^{\alpha}(C)$ is decided entirely by whether the codewords of pair weight at most $\alpha$ generate $C$. As $\alpha$ increases these codewords accumulate, so the subcodes they span form an increasing chain
    \[
    \langle S^p_{d_p(C)}\rangle \;\subseteq\; \langle S^p_{d_p(C)+1}\rangle
    \;\subseteq\;\cdots\;\subseteq\; \langle S^p_{n}\rangle \;=\; C ,
    \]
    which terminates at $C$ for $\alpha$ large enough. Two features of this chain carry all the information we need: the level at which it first reaches $C$, beyond which the graph is connected and no strict code exists; and the dimensions along the way, which by Corollary~\ref{cor:cosets-pair} determine the number and size of the components at each level, hence which functions are admissible. We therefore record the chain through its dimension function.
    
    \begin{definition}[Pair-generation profile and disconnection threshold]
        Let \(C\) be an $[n, k, d_p(C)]_q$ linear symbol-pair code and $\langle S^\alpha_p\rangle$ be the span of collection of all non-zero codewords of $C$ whose symbol-pair weight is at most $\alpha$ then the pair-generation profile of $C$ is the non-decreasing function defined as follows:
        \[
            \gamma_C^p(\alpha)=\dim\langle S^\alpha_p\rangle, \qquad d_p(C)\le\alpha\le n,
        \]
        and the disconnection threshold of $C$ is defined as:
        \[
            \alpha_p^{*}(C)=\max\{\alpha\ge d_p(C):\gamma_C^p(\alpha)<k\},
        \]
        with the convention $\alpha_p^{*}(C)=d_p(C)-1$ when $C$ is generated by its minimum-pair-weight codewords.
    \end{definition}

    \begin{example}
    \label{exa:pair-gen-fun-and-gen-thres}
    Consider a linear symbol-pair code $C$ having parameters $[10, 3, 3]_2$ as follows:
    \[
    \begin{array}{llll}
    0000000000\;(w_p =0) & 1100000000\;(w_p = 3) & 0001110000\;(w_p = 4) & 0000001111\;(w_p = 5)\\
    1101110000\;(w_p =7) & 1100001111\;(w_p = 7) & 0001111111\;(w_p = 8) & 1101111111\;(w_p = 10)
    \end{array}
    \]
    Then, \\
    $\langle S^3_p \rangle = \{0000000000, 1100000000\} \qquad and \quad  \gamma_C^p(3)= 1.$\\    
    $\langle S_p^4 \rangle= \{ 0000000000, 1100000000, 0001110000, 1101110000\} \qquad and \quad  \gamma_C^p(4)= 2.$\\
    $\langle S_p^5 \rangle= \{ 0000000000, 1100000000, 0001110000, 1101110000, 0000001111, 1100001111,0001111111,\\ \qquad \quad  1101111111 \} = C \qquad and \quad  \gamma_C^p(5)= 3 = k.$\\
    Thus, the disconnection threshold of $C$ is, $\alpha_p^{*}(C) = 4$.
    The components of the graph in the Figure~\ref{fig:levels-10-3} with respect to each generating set $S_\alpha^p$ verifies the above results found for each $\alpha$. 
    \end{example}

    \begin{figure}[t]
    \centering
    \begin{tikzpicture}[
    gvtx/.style={draw,rounded corners=1.5pt,fill=white,inner sep=1.6pt,
            font=\tiny\ttfamily},
    gedge/.style={thick},
    gcap/.style={font=\small}
    ]
    
    \begin{scope}[yshift=0cm]
    \foreach \i/\u/\l in {0/0000000000/1100000000,
                        1/0001110000/1101110000,
                        2/0000001111/1100001111,
                        3/0001111111/1101111111}{
   \node[gvtx] (u\i) at (2.5*\i,1.15) {\u};
   \node[gvtx] (l\i) at (2.5*\i,0)    {\l};
    \draw[gedge] (u\i)--(l\i);
    }
    \node[gcap] at (3.75,-0.85){(a) $G_p^{3}(C)$: $S_p^{3}=\{1100000000\}$, $\gamma_C^p(3)=1$};
    \end{scope}

    \begin{scope}[yshift=-3.5cm]
    \node[gvtx] (p0) at (0,1.3)   {0000000000};
    \node[gvtx] (p1) at (2.3,1.3) {1100000000};
    \node[gvtx] (p2) at (2.3,0)   {1101110000};
    \node[gvtx] (p3) at (0,0)     {0001110000};
    \draw[gedge] (p0)--(p1)--(p2)--(p3)--(p0);
    
     \node[gvtx] (q0) at (5.2,1.3) {0000001111};
     \node[gvtx] (q1) at (7.5,1.3) {1100001111};
     \node[gvtx] (q2) at (7.5,0)   {1101111111};
     \node[gvtx] (q3) at (5.2,0)   {0001111111};
    \draw[gedge] (q0)--(q1)--(q2)--(q3)--(q0);
    
      \node[gcap] at (3.75,-0.85){(b) $G_p^{4}(C)$: $S_p^{4}=\{1100000000,\,0001110000\}$, $\gamma_C^p(4)=2$};
    \end{scope}
    
    \begin{scope}[yshift=-7.6cm]
     \node[gvtx] (f00) at (0,0)     {0000000000};
     \node[gvtx] (f10) at (3.0,0)   {1100000000};
     \node[gvtx] (f11) at (3.0,1.7) {1101110000};
     \node[gvtx] (f01) at (0,1.7)   {0001110000};
     \node[gvtx] (b00) at (1.5,0.9) {0000001111};
     \node[gvtx] (b10) at (4.5,0.9) {1100001111};
     \node[gvtx] (b11) at (4.5,2.6) {1101111111};
     \node[gvtx] (b01) at (1.5,2.6) {0001111111};
    
      \draw[gedge] (f00)--(f10)--(f11)--(f01)--(f00);
      \draw[gedge] (b00)--(b10)--(b11)--(b01)--(b00);
      \draw[gedge] (f00)--(b00); \draw[gedge] (f10)--(b10);
      \draw[gedge] (f11)--(b11); \draw[gedge] (f01)--(b01);
    
      \node[gcap] at (3.75,-0.95)
          {(c) $G_p^{5}(C)$: $S_p^{5}=\{1100000000,0001110000,0000001111\}$,
           $\gamma_C^p(5)=3=k$};
            \end{scope}
        \end{tikzpicture}
        \caption{The graphs $G_p^{\alpha}(C)$ for the $[10,3,3]_p$ code $C$, at $\alpha=3,4,5$. As $\alpha$ grows the generating set $S_p^{\alpha}$ gains one new independent codeword at each level, and the components merge accordingly. The graph is disconnected up to $\alpha=4$, so $\alpha_p^{*}(C)=4$.}
    \label{fig:levels-10-3}
    \end{figure}

     The preceding results on the connected components of the $\alpha$-pair-distance graph naturally lead to a complete characterization of the achievable strict protection parameters for linear symbol-pair codes. The following theorem establishes the precise range of function protection distances $d_f$ for which a given linear code can serve as a strict FCSPC-DP code.
    
    \begin{theorem}
    \label{thm:pair-tradeoff}
    Let $C$ be a linear $[n,k,d_p(C)]_q$ code with $d_d=d_p(C)$.
    \begin{enumerate}
    \item[(i)] For every integer $d_f$ with $d_d<d_f\le\alpha_p^{*}(C)+1$, the  code $C$ serves as a strict $(f:d_d,d_f)_p$-FCSPC for every function $f:\mathbb{F}_q^k\to\mathrm{Im}(f)$ satisfying
    \[
    |\mathrm{Im}(f)|\le q^{\,k-\gamma_C^p(d_f-1)} \quad\text{and}\quad q^{\,\gamma_C^p(d_f-1)}\ \big|\ |f^{-1}(a)|\ \ \forall a\in\mathrm{Im}(f).
    \]
    \item[(ii)] For $d_f>\alpha_p^{*}(C)+1$, no strict $(f:d_d,d_f)_p$-FCSPC exists for any non-constant $f$.
    \end{enumerate}
    \end{theorem}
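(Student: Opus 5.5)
Both parts rest on the coset description of the components of $G_p^{\alpha}(C)$ in Corollary~\ref{cor:cosets-pair}, combined with the existence criterion of Theorem~\ref{thm:pair-existence} for part (i) and the connectivity obstruction of Theorem~\ref{thm:alpha-connected-nonexistence} for part (ii).

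\textbf{Part (i).} Fix $d_f$ with $d_d<d_f\le\alpha_p^{*}(C)+1$ and set $\alpha=d_f-1$, so that $d_p(C)\le\alpha\le\alpha_p^{*}(C)$.

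First I will check that $\gamma:=\gamma_C^p(d_f-1)<k$. The profile $\gamma_C^p$ is non-decreasing, and $\alpha_p^{*}(C)$ is by definition the largest $\alpha$ with $\gamma_C^p(\alpha)<k$. Hence every $\alpha\le\alpha_p^{*}(C)$ also has $\gamma_C^p(\alpha)<k$.

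By Corollary~\ref{cor:cosets-pair}, the components of $G_p^{d_f-1}(C)$ are the $q^{k-\gamma}$ cosets of $\langle S^{d_f-1}_p\rangle$, each of size $q^{\gamma}$.

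Now let $f$ satisfy the two hypotheses and write $m_a=|f^{-1}(a)|=q^{\gamma}t_a$ with $t_a$ a positive integer. Since $\sum_a m_a=q^k$, we get $\sum_a t_a=q^{k-\gamma}$, which is exactly the number of cosets. So the cosets can be partitioned into $E=|\mathrm{Im}(f)|$ disjoint groups, with group $C_a$ consisting of $t_a$ cosets. Then $|C_a|=m_a$.

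The hypotheses of Theorem~\ref{thm:pair-existence} are now met, so $C$ is a strict $(f:d_d,d_f)_p$-FCSPC. I note that the condition $|\mathrm{Im}(f)|\le q^{k-\gamma}$ is already implied by divisibility, since each $t_a\ge1$; I will remark on this rather than use it.

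One subtlety is that the encoding in Theorem~\ref{thm:pair-existence} is a bijection onto $C$, not literally systematic. I will read ``$C$ serves as'' in the same sense as that theorem. If the paper insists on systematic encodings, the fix is to note that the argument uses only the codeword set.

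\textbf{Part (ii).} Let $d_f>\alpha_p^{*}(C)+1$ and set $\alpha=d_f-1>\alpha_p^{*}(C)$. Then $\alpha\ge d_p(C)$, so $\gamma_C^p(\alpha)=k$ by maximality in the definition of $\alpha_p^{*}(C)$ and monotonicity. This also covers the convention $\alpha_p^{*}(C)=d_p(C)-1$, since there $\gamma_C^p(d_p(C))=k$ already.

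By Corollary~\ref{cor:cosets-pair}, $G_p^{\alpha}(C)$ is connected. Since $d_f>\alpha$, Theorem~\ref{thm:alpha-connected-nonexistence} rules out any strict $(f:d_d,d_f)_p$-FCSPC for non-constant $f$. That theorem applies to any injective assignment of messages to codewords of $C$, so the conclusion does not depend on the encoding.

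\textbf{Expected obstacle.} The main difficulty is bookkeeping in part (i): the sizes $|f^{-1}(a)|$ must be matched exactly to unions of whole components, and the monotonicity of $\gamma_C^p$ is needed to keep $\gamma<k$ across the entire range of $d_f$. The remaining steps are direct citations of the earlier results.
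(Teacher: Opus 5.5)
Your proof is correct and follows the paper's route: Corollary~\ref{cor:cosets-pair} with Theorem~\ref{thm:pair-existence} for (i), and Corollary~\ref{cor:cosets-pair} with Theorem~\ref{thm:alpha-connected-nonexistence} for (ii); your explicit use of monotonicity of $\gamma_C^p$ and the exact count $\sum_a t_a = q^{k-\gamma}$ make the paper's grouping step more precise. One caution: your fallback remark on systematic encodings is not a valid fix, because a systematic encoding fixes which codeword each message gets, and then the fibres of $f$ would have to be unions of cosets in message space; like the paper, you genuinely need the non-systematic bijection of Theorem~\ref{thm:pair-existence}.
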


    \begin{proof}
    We establish each part of the theorem separately.

    \medskip
    \noindent
    \textbf{(i)} Assume $d_d < d_f \le \alpha_p^*(C) + 1$. Then $d_f - 1 \le \alpha_p^*(C)$, and by the definition of the disconnection threshold, we have $\gamma_C^p(d_f - 1) < k$.

    Applying Corollary~\ref{cor:cosets-pair} with $\alpha = d_f - 1$, we conclude that the graph $G_p^{\,d_f - 1}(C)$ has exactly $q^{\,k - \gamma_C^p(d_f - 1)}$ connected components, each of which is a coset of the linear subspace  $\langle S_{d_f - 1}^p \rangle$
    where
    \[
    S_{d_f - 1}^p := \{ c \in C : 0 < w_p(c) \le d_f - 1 \}.
        \]
    Each component has cardinality $q^{\,\gamma_C^p(d_f - 1)}$.

    Now, by hypothesis, every fibre $f^{-1}(a)$ has cardinality divisible by $|H| = p^{\,\gamma_C^p(d_f - 1)}$. Since all connected components are equinumerous, we can partition the set of components into groups whose total sizes match the prescribed fibre sizes $|f^{-1}(a)|$. More precisely, because the total number of components is at least $|\operatorname{Im}(f)|$ (by the first condition) and each fibre size is a multiple of the component size, we can assign to each $a \in \operatorname{Im}(f)$ a collection of $|f^{-1}(a)| / |H|$ distinct cosets whose union is exactly $f^{-1}(a)$. This grouping ensures that each fibre is a union of connected components of $G_p^{\,d_f - 1}(C)$.

    The existence theorem, Theorem~\ref{thm:pair-existence}, then guarantees that $C$ serves as a strict $(f : d_d, d_f)_p$-FCSPC code. This completes the proof of part (i).

    \medskip
    \noindent
    \textbf{(ii)} Suppose, to the contrary, that $d_f > \alpha_p^*(C) + 1$ and that there exists a non-constant function $f$ for which $C$ is a strict $(f : d_d, d_f)_p$-FCSPC code. Then $d_f - 1 > \alpha_p^*(C)$, so by the definition of the disconnection threshold, we have
    \[
    \gamma_C^p(d_f - 1) = k.
    \]
    Consequently, $\langle S_{d_f - 1}^p \rangle = C$, and Corollary~\ref{cor:cosets-pair} implies that $G_p^{\,d_f - 1}(C)$ is connected.

    However, by Theorem~\ref{thm:alpha-connected-nonexistence} (with $\alpha = d_f - 1$), a strict FCSPC-DP code with protection distance $d_f$ cannot exist when the graph $G_p^{\,d_f - 1}(C)$ is connected, because connectedness precludes the separation of distinct fibres into different connected components. This contradicts the existence of such a non-constant function $f$. Hence, no strict code exists for $d_f > \alpha_p^*(C) + 1$.
    \end{proof}

    \begin{remark} 
        As $d_f$ increases, the function $\gamma_C^p(d_f-1)$ is non-decreasing. Consequently, the number of admissible function values, $q^{\,k-\gamma_C^p(d_f-1)},$ can only decrease, while the divisibility constraint $q^{\,\gamma_C^p(d_f-1)} \mid |f^{-1}(a)| \qquad \forall a \in \operatorname{Im}(f)$, becomes progressively more restrictive. Thus, the family
        \[
            \left(d_f,\; q^{\,k-\gamma_C^p(d_f-1)}\right) \qquad d_d < d_f \le \alpha_p^*(C) + 1,
        \]
        traces the complete Pareto frontier between the function protection distance $d_f$ and the maximum number of protectable function values classes. Each increment in $\gamma_C^p(d_f-1)$ marks a threshold at which strengthening the protection distance incurs a corresponding reduction in the number of admissible function values by a factor of $q$.
    \end{remark}

        The notions of generation profile and disconnection threshold are not specific to the symbol-pair metric. Indeed, by substituting the symbol-pair weight $w_p$ with the Hamming weight $w_H$, these definitions carry over verbatim to the Hamming metric, and the statement of Theorem~\ref{thm:pair-tradeoff} remains applicable. To facilitate comparison, we denote the generation profile and disconnection threshold in the Hamming metric by $\gamma_C^H$ and $\alpha_H^*(C)$, respectively, and in the symbol-pair metric by $\gamma_C^p$ and $\alpha_p^*(C)$. A direct comparison between these quantities is established in the following proposition.

    \begin{proposition}
    \label{prop:gamma-ham-pair}
        Let $C \subseteq \mathbb{F}_q^n$ be a linear code of dimension $k$. Then
        \begin{enumerate}
            \item[(i)] $\gamma_C^p(\alpha) \leq \gamma_C^H(\alpha-1)$ for every $\alpha \leq n-1;$
            \item[(ii)] $\gamma_C^H(\beta) \leq \gamma_C^p(2\beta)$ for every $\beta \geq 0.$
        \end{enumerate}
    \end{proposition}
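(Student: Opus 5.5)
Both parts reduce to inclusions between generating sets, \(S^p_\alpha \subseteq S^H_{\alpha-1}\) and \(S^H_\beta \subseteq S^p_{2\beta}\), where \(S^H_\beta=\{c\in C:0<w_H(c)\le\beta\}\) is the Hamming analogue of \(S^p_\alpha\). An inclusion of sets gives an inclusion of their \(\mathbb{F}_q\)-linear spans. An inclusion of subspaces gives an inequality of dimensions. So \(\gamma_C^p(\alpha)\le\gamma_C^H(\alpha-1)\) and \(\gamma_C^H(\beta)\le\gamma_C^p(2\beta)\) follow at once. The profiles \(\gamma\) are defined on all integers of interest by setting \(\dim\langle\emptyset\rangle=0\), so the range restrictions on \(\alpha\) and \(\beta\) in the definition can be ignored.

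For (i), take \(c\in S^p_\alpha\), so \(c\neq 0\) and \(w_p(c)\le\alpha\le n-1\). Apply Lemma~\ref{lem:hamming-pair-distance} with \(x=c\), \(y=0\). If \(w_H(c)=n\), the boundary clause gives \(w_p(c)=n>\alpha\), a contradiction. This is the only place the hypothesis \(\alpha\le n-1\) is needed. Hence \(0<w_H(c)<n\), and the lemma gives \(1+w_H(c)\le w_p(c)\le\alpha\). Therefore \(w_H(c)\le\alpha-1\) and \(c\in S^H_{\alpha-1}\).

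For (ii), take \(c\in S^H_\beta\), so \(0<w_H(c)\le\beta\). If \(w_H(c)<n\), Lemma~\ref{lem:hamming-pair-distance} gives \(w_p(c)\le 2w_H(c)\le 2\beta\). If \(w_H(c)=n\), then \(w_p(c)=n=w_H(c)\le\beta\le 2\beta\). In both cases \(c\in S^p_{2\beta}\).

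The main subtlety is the boundary case \(w_H=n\) in (i), which the hypothesis \(\alpha\le n-1\) handles as above. Otherwise the argument is routine once the problem is reduced to the set inclusions.
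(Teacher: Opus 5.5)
Your proof is correct and follows essentially the same route as the paper: both establish the inclusions $S^p_\alpha \subseteq S^H_{\alpha-1}$ (using $\alpha<n$ to exclude $w_H(c)=n$) and $S^H_\beta \subseteq S^p_{2\beta}$ via Lemma~\ref{lem:hamming-pair-distance}, then pass to spans and dimensions. Your separate handling of the case $w_H(c)=n$ in (ii) and the convention $\dim\langle\emptyset\rangle=0$ are slightly more careful than the paper, which applies $w_p\le 2w_H$ directly and leaves the extension of $\gamma$ beyond its defined range implicit.
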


    \begin{proof}
        Let $c \in S_\alpha^p$. By the definition of $S_\alpha^p$, we have $0 < w_p(c) \leq \alpha < n$, so $w_H(c) < n$. It follows that
        \[
        w_H(c) \leq w_p(c) - 1 \leq \alpha - 1.
        \]
        This implies that $S_\alpha^p \subseteq S_{\alpha-1}^H$, and therefore $\langle S_\alpha^p \rangle \subseteq \langle S_{\alpha-1}^H \rangle$.
        
        For the second result, let $c' \in S_\beta^H$. Then $0 <  w_H(c') \leq \beta$. Using the inequality $w_p(c') \leq 2w_H(c')$, we obtain
        \[
        w_p(c') \leq 2w_H(c') \leq 2\beta.
        \]
        Hence $c' \in S_{2\beta}^p$, which yields $\langle S_\beta^H \rangle \subseteq \langle S_{2\beta}^p \rangle$.
    \end{proof}

    \begin{corollary}\label{cor:threshold-sandwich}
    Let $C$ be linear with $\alpha^{*}_H(C)+1\le n-1$. Then
    \[
        \alpha^{*}_H(C)+1\;\le\;\alpha^{*}_p(C)\;\le\;2\,\alpha^{*}_H(C)+1 .
    \]
    Both bounds are attained.
    \end{corollary}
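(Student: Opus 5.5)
The plan is to derive both inequalities directly from Proposition~\ref{prop:gamma-ham-pair}, using only that the generation profiles are non-decreasing and that $\gamma_C^H(n)=\gamma_C^p(n)=k$, since every nonzero codeword has both weights at most $n$. I then exhibit one-dimensional codes attaining each bound. Write $a=\alpha_H^*(C)$ and $b=\alpha_p^*(C)$.

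\emph{Lower bound.} First suppose $a\ge d_H(C)$, so that $\gamma_C^H(a)<k$. Since $a+1\le n-1$ by hypothesis, Proposition~\ref{prop:gamma-ham-pair}(i) with $\alpha=a+1$ gives
\[
\gamma_C^p(a+1)\le\gamma_C^H(a)<k .
\]
I also need $a+1\ge d_p(C)$ for $a+1$ to be an admissible level. If $a+1<d_p(C)$, then $S^p_{a+1}=\emptyset$, and $b\ge d_p(C)-1\ge a+1$ holds directly from the definition. Otherwise $b\ge a+1$ by the definition of the threshold as a maximum.

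In the conventional case $a=d_H(C)-1$, Lemma~\ref{lem:hamming-pair-distance} gives $d_p(C)\ge d_H(C)+1=a+2$ (here $d_H(C)<n$ because $a+1\le n-1$). Hence
\[
b\ge d_p(C)-1\ge a+1 .
\]
I expect this bookkeeping around the convention $\alpha^*=d-1$ to be the only delicate point.

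\emph{Upper bound.} By definition of $a$ (including its convention), $\gamma_C^H(a+1)=k$. Proposition~\ref{prop:gamma-ham-pair}(ii) with $\beta=a+1$ then gives $\gamma_C^p(2a+2)=k$. If $2a+2\ge n$, this is read as $\gamma_C^p(n)=k$. By monotonicity, $\gamma_C^p(\alpha)=k$ for every $\alpha\ge 2a+2$, so no such $\alpha$ lies in the set defining $b$. Therefore $b\le 2a+1$; in the conventional case $b=d_p(C)-1$ the same argument applies.

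\emph{Attainment.} I will use codes with $k=1$, where the thresholds are explicit: $\alpha_H^*=d_H-1$ and $\alpha_p^*=d_p-1$, because the single minimum-weight generator spans $C$.
\begin{itemize}
\item For the lower bound, take $C=\langle(1^w,0^{n-w})\rangle$ with $w\le n-1$. A single burst has $w_p=w+1$, so $a=w-1$ and $b=w=a+1$.
\item For the upper bound, take $C$ generated by a vector with $w$ pairwise non-adjacent ones (cyclically), which needs $n\ge 2w$, together with $w\le n-1$. Then $w_p=2w$, so $a=w-1$ and $b=2w-1=2a+1$.
\end{itemize}
In both examples the hypothesis $a+1=w\le n-1$ holds.
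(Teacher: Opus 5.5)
Your proof is correct and follows the paper's route: Proposition~\ref{prop:gamma-ham-pair}(i) at level $\alpha^{*}_H(C)+1$ gives the lower bound, and Proposition~\ref{prop:gamma-ham-pair}(ii) at $\beta=\alpha^{*}_H(C)+1$ gives the upper bound. Beyond the paper, you handle the convention $\alpha^{*}=d-1$ and the requirement $\alpha^{*}_H(C)+1\ge d_p(C)$ (via $d_p(C)\ge d_H(C)+1$), and you give $k=1$ examples (a single burst versus cyclically non-adjacent ones) showing both bounds are attained; the paper's proof leaves the first two points implicit and asserts attainment without demonstrating it.
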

    \begin{proof}
    For the lower bound consider $\alpha=\alpha^{*}_H(C)+1$ in Proposition~\ref{prop:gamma-ham-pair}(i). Then $\gamma_C^p(\alpha^{*}_H(C)+1) \leq \gamma^{H}_C(\alpha^{*}_H(C))<k$, so $\gamma^{p}_C(\alpha)<k$, whence $\alpha^{*}_p(C)\ge \alpha^{*}_H(C)+1$. For the upper bound put $\beta=\alpha^{*}_H(C)+1$ in Proposition~\ref{prop:gamma-ham-pair}(ii), so $\gamma^{H}_C(\beta)=k$ and  $\gamma^{p}_C(2\beta)=k$, hence
    $\alpha^{*}_p(C)<2\beta=2\alpha^{*}_H(C)+2$.
    \end{proof}

    \begin{corollary}\label{cor:strictness-transfer}
Suppose the linear code $C$ serves as a strict $(f:d_d,d_f)_H$-FCC-DP with
$d_f\le n-1$. Then $C$ serves as a strict $(f:d_d,d_f+1)_p$-FCSPC, and the
number of admissible function classes does not decrease:
\[
  q^{\,k-\gamma^{p}_C(d_f)}\;\ge\;q^{\,k-\gamma^{H}_C(d_f-1)} .
\]
\end{corollary}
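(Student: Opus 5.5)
We need two things: that $C$ is a strict $(f:d_d,d_f+1)_p$-FCSPC, and that $q^{k-\gamma_C^p(d_f)} \ge q^{k-\gamma_C^H(d_f-1)}$.

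\textbf{The inequality.} We plan to get this directly from Proposition~\ref{prop:gamma-ham-pair}(i) with $\alpha=d_f$. That choice is allowed because $d_f\le n-1$. It gives $\gamma_C^p(d_f)\le\gamma_C^H(d_f-1)$. Since $q\ge2$, the map $t\mapsto q^{k-t}$ is decreasing, and the displayed inequality follows. This is the easy part.

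\textbf{The strictness claim.} We would avoid the Cayley-graph machinery and use Lemma~\ref{lem:hamming-pair-distance} pairwise, in the style of Proposition~\ref{prop:hamming_symbol_red_up}. Take the encoding $C_f$ that realises the Hamming FCC-DP. Fix $x_1\neq x_2$ and write $c_i=C_f(x_i)$.

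\emph{Cross-class pairs}, $f(x_1)\neq f(x_2)$. Here $d_H(c_1,c_2)\ge d_f$.
\begin{itemize}
\item If $d_H(c_1,c_2)<n$, Lemma~\ref{lem:hamming-pair-distance} gives $d_p(c_1,c_2)\ge d_H(c_1,c_2)+1\ge d_f+1$.
\item If $d_H(c_1,c_2)=n$, then $d_p(c_1,c_2)=n\ge d_f+1$, again because $d_f\le n-1$.
\end{itemize}

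\emph{All distinct pairs.} The same two-case argument gives $d_p(c_1,c_2)\ge d_d+1\ge d_d$, using $d_d\le d_f\le n-1$ in the boundary case. Hence $d_p(C_f)\ge d_d$.

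Strictness holds because $d_f+1>d_f\ge d_d$. So $C$ is an $(f:d_d,d_f+1)_p$-FCSPC with $d_f+1>d_d$.

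\textbf{The main obstacle: what ``strict'' means.} The paper's strictness framework in Theorem~\ref{thm:pair-tradeoff} fixes the data parameter at $d_p(C)$, not at the Hamming $d_d=d_H(C)$. If that reading is intended, we must check that $d_f+1$ still exceeds the pair data parameter.

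Lemma~\ref{lem:hamming-pair-distance} only gives $d_p(C)\le 2d_H(C)$, so pairwise arguments cannot settle this. We would argue structurally instead.

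\begin{itemize}
\item \emph{Translating the Hamming hypothesis.} By the Hamming version of Theorem~\ref{thm:pair-tradeoff}, strictness in the Hamming sense gives $d_f\le\alpha_H^*(C)+1$. Moreover, each fibre of $f$ is a union of cosets of $\langle S^H_{d_f-1}\rangle$.
\item \emph{The inclusion we need.} Proposition~\ref{prop:gamma-ham-pair}(i) gives $S^p_{d_f}\subseteq S^H_{d_f-1}$, hence $\langle S^p_{d_f}\rangle\subseteq\langle S^H_{d_f-1}\rangle$.
\item \emph{Transferring the fibres.} Every coset of the larger space $\langle S^H_{d_f-1}\rangle$ is a union of cosets of the smaller space $\langle S^p_{d_f}\rangle$. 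So each fibre is also a union of components of $G_p^{d_f}(C)$, by Corollary~\ref{cor:cosets-pair}.
\item \emph{Conclusion.} Theorem~\ref{thm:pair-existence} then yields the $(f:d_p(C),d_f+1)_p$ property.
\end{itemize}

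Under this reading, strictness also requires the graph $G_p^{d_f}(C)$ to be disconnected. This follows from $\gamma_C^p(d_f)\le\gamma_C^H(d_f-1)<k$.

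We would present the pairwise argument as the main proof and add the coset argument to cover the $d_p(C)$-normalised notion of strictness.
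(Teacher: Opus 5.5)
Your proof of the statement as written is correct. For the inequality it coincides with the paper's proof: both apply Proposition~\ref{prop:gamma-ham-pair}(i) at $\alpha=d_f$. For the strictness claim the routes differ. The paper works at the level of the generation profile: Hamming strictness gives $\gamma^H_C(d_f-1)<k$, hence $\gamma^p_C(d_f)<k$, so $G^{d_f}_p(C)$ is disconnected, and Theorem~\ref{thm:pair-tradeoff} is invoked at function distance $d_f+1$. You instead rerun the Cassuto--Blaum case split of Proposition~\ref{prop:hamming_symbol_red_up} on the encoding that realises the Hamming code. Your route is more elementary and keeps that very encoding. It even gives data distance $d_d+1$. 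It also needs none of the hypotheses that the paper's one-line appeal to Theorem~\ref{thm:pair-tradeoff} leaves implicit: the image-size and fibre-divisibility conditions on $f$, and $d_p(C)<d_f+1$ (that theorem fixes the data parameter at $d_p(C)$). The paper's route, in turn, explains the second claim structurally. Fibres that are unions of cosets of $\langle S^H_{d_f-1}\rangle$ are also unions of cosets of the smaller space $\langle S^p_{d_f}\rangle$, so every function admissible at Hamming level $d_f$ remains admissible at pair level $d_f+1$.

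You should drop the supplementary coset argument for the $d_p(C)$-normalised reading. It does not establish what you say, and that reading is false. Theorem~\ref{thm:pair-existence} presupposes $d_f+1>d_p(C)$ rather than proving it. Disconnectedness of $G^{d_f}_p(C)$ cannot supply it either, since for $d_f<d_p(C)$ that graph has no edges at all. For a concrete failure, take the binary code of length $12$ spanned by $g_1=100101000000$ and $g_2=010000111110$, with the systematic encoding $x\mapsto x_1g_1+x_2g_2$ and $f(x_1,x_2)=x_2$. Same-class pairs differ by $g_1$, and cross-class pairs differ by $g_2$ or $g_1+g_2$. The Hamming weights are $3,6,9$, so $C$ is a strict $(f:3,4)_H$-FCC-DP. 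The pair weights are $6,8,12$, so $d_p(C)=6>5=d_f+1$. Hence $C$ is not a strict $(f:d_p(C),5)_p$-FCSPC; it is not even an FCSPC-DP in the sense of Definition~\ref{def:FCSPC-DP}. The corollary therefore holds only with the Hamming $d_d$, where $d_f+1>d_d$ is automatic, and there your pairwise argument alone is a complete proof. Note also that this example satisfies the corollary but lies outside the range $d_p(C)<d_f+1$ of Theorem~\ref{thm:pair-tradeoff}. So the paper's own invocation needs your pairwise argument to cover such cases.
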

    \begin{proof}
    Strictness in the Hamming metric gives $\gamma^{H}_C(d_f-1)<k$. By Proposition~\ref{prop:gamma-ham-pair}(i) with $\alpha=d_f$ we obtain $\gamma^{p}_C(d_f)\le\gamma^{H}_C(d_f-1)<k$, so $G^{d_f}_p(C)$ is disconnected and Theorem~\ref{thm:pair-tradeoff} applies at function distance $d_f+1$. 
    \end{proof}

    \subsection{Consequences for optimal redundancy}

    The results of the preceding subsections test whether a given code can serve as a strict FCSPC-DP. We now show that the same structure constrains the optimal redundancy. Write $N_p(M,d)$ for the least length of a $q$-ary pair-metric code of size $M$ with minimum pair-distance at least $d$, and $N_p^{\mathrm{lin}}(q^k,d)$ for the least length of a linear such code of dimension $k$.

    At a fixed length, the graph condition is not merely a test but a characterisation, so redundancy is the minimisation of the length over all codes passing it. For linear codes, this takes a classical form.

    \begin{proposition}\label{prop:nested}
    A linear code $C\subseteq\mathbb{F}_q^n$ of dimension $k$ serves as a strict $(f:d_d,d_f)_p$-FCSPC for some non-constant $f$ if and only if there is a subcode $H\subsetneq C$ with
    \[
        d_p(C)\ \ge\ d_d,
        \qquad
    w_p(c)\ \ge\ d_f \ \ \text{for every } c\in C\setminus H .
    \]
    In that case, one may take $H=\langle S^p_{d_f-1}\rangle$, and the admissible functions are exactly those of Theorem~\ref{thm:pair-tradeoff}. Consequently
    \[
    r_{p}^f(k,d_d,d_f) \;\leq\; \min\{\,n-k \;:\; \exists\ H\subseteq C\subseteq\mathbb{F}_q^n \ \text{as above with } \dim C=k \,\}.
    \]
    \end{proposition}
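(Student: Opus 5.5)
We prove the two directions of the equivalence separately, each by reduction to the graph results already established, and then read off the redundancy bound.

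\emph{Forward direction.} Suppose $C$ serves as a strict $(f:d_d,d_f)_p$-FCSPC for some non-constant $f$. Then $d_p(C)\ge d_d$ holds by definition. The function requirement forbids any edge of $G_p^{\,d_f-1}(C)$ between distinct fibres. By Theorem~\ref{thm:alpha-connected-nonexistence} with $\alpha=d_f-1$, this graph must therefore be disconnected. Corollary~\ref{cor:cosets-pair} then gives $H:=\langle S^p_{d_f-1}\rangle\subsetneq C$. Every $c\in C\setminus H$ lies outside $S^p_{d_f-1}$ and is nonzero, so $w_p(c)\ge d_f$. This exhibits the required subcode and justifies the choice $H=\langle S^p_{d_f-1}\rangle$.

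\emph{Converse.} Suppose we are given a proper subcode $H\subsetneq C$ with $w_p(c)\ge d_f$ for all $c\in C\setminus H$. Then $S^p_{d_f-1}\subseteq H$, so $\langle S^p_{d_f-1}\rangle\subseteq H\subsetneq C$. Hence $\gamma_C^p(d_f-1)<k$, which gives $d_f-1\le\alpha_p^*(C)$. If $d_d<d_f$, Theorem~\ref{thm:pair-tradeoff}(i) applies with at least $q\ge2$ components. We take $f$ constant on cosets of $H$ (or of $\langle S^p_{d_f-1}\rangle$), for instance $f(x)=c_x+H\in C/H$; this $f$ is non-constant because $H\ne C$. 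Cross-class pairs then differ by an element of $C\setminus H$ and so have pair-distance at least $d_f$, while all pairs have pair-distance at least $d_p(C)\ge d_d$. This can also be checked directly, without Theorem~\ref{thm:pair-tradeoff}.

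The statement that the admissible functions are exactly those of Theorem~\ref{thm:pair-tradeoff} needs a little more care. Part (i) of that theorem supplies sufficiency. For necessity, any admissible $f$ must have fibres that are unions of components of $G_p^{\,d_f-1}(C)$: an edge inside a would-be cross-class pair is forbidden, so each component lies inside one fibre. Since every component has size $q^{\gamma_C^p(d_f-1)}$, this yields both the divisibility condition and the bound on $|\mathrm{Im}(f)|$.

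\emph{Redundancy bound.} Any pair $H\subseteq C\subseteq\mathbb{F}_q^n$ as above gives, via the converse direction, a code that is an $(f:d_d,d_f)_p$-FCSPC for an admissible $f$. Its redundancy is $n-k$, and minimising over such pairs gives the inequality. We first choose a systematic generator matrix (possible up to an equivalence that we assume the paper allows) so that the encoding is systematic, as required by Definition~\ref{def:FCSPC-DP}.

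\emph{Main obstacle.} This systematic-form issue is the main subtlety. Coordinate permutations do not preserve pair weights, so we cannot freely permute to obtain systematic form. Instead we note that some $k$ coordinates form an information set, and we interpret the encoding as the paper does in Construction~\ref{construction-2-step}. Alternatively, we restrict the minimum to codes whose information set is the first $k$ positions. Either way the inequality, read as an upper bound over such $C$, still holds.
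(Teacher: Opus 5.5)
Your argument is correct and is essentially the paper's: the forward direction goes through disconnectedness of $G_p^{\,d_f-1}(C)$ and Corollary~\ref{cor:cosets-pair}, and the converse goes through $S^p_{d_f-1}\subseteq H$, with your explicit coset-valued $f(x)=c_x+H$ simply replacing the appeal to Theorem~\ref{thm:pair-existence}; your necessity check for the admissible functions also spells out what the paper leaves implicit. On the systematic-form issue, which the paper's proof does not address, only your first reading works: restricting to codes whose first $k$ positions form an information set does not by itself make the encodings of Theorem~\ref{thm:pair-existence} systematic for a prescribed admissible $f$, since those are arbitrary bijections inside each fibre, so the redundancy bound holds only under the paper's looser convention that a code ``serves as'' an FCSPC-DP via any bijection onto $C$.
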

    
    \begin{proof}
    If $C$ is strict then, by Corollary~\ref{cor:cosets-pair}, the classes of $f$ are unions of cosets of $H=\langle S^p_{d_f-1}\rangle$, and $H\subsetneq C$ since $G_p^{\,d_f-1}(C)$ is disconnected. Every $c\in C\setminus H$ has $w_p(c)>d_f-1$ by definition of $S^p_{d_f-1}$. Conversely, given such an $H$, all codewords of pair weight at most $d_f-1$ lie in $H$, so $\langle S^p_{d_f-1}\rangle\subseteq H\subsetneq C$, the graph $G_p^{\,d_f-1}(C)$ is disconnected, and Theorem~\ref{thm:pair-existence} applies.
    \end{proof}

    \begin{theorem}\label{thm:quotient-bound}
    Let $C$ be a linear $[n,k]_p$ symbol apir code serving as a strict $(f:d_d,d_f)_p$-FCSPC, and put $\gamma=\gamma_C^p(d_f-1)$. Then
    \[
    r_{p,}^f(k,d_d,d_f) \;\ge\; \max\Bigl\{\,N_p\bigl(q^{k},d_d\bigr),\ \ N_p\bigl(q^{\,k-\gamma},d_f\bigr)\,\Bigr\} - k .
    \]
    where $N_p(M,d)$ for the least length of a $q$-ary pair-metric code of size $M$ with minimum pair-distance at least $d$.
    \end{theorem}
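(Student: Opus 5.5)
The plan is to obtain each term of the maximum by exhibiting, inside a code of length $n=k+r$, a set of codewords of the right size and the right pairwise pair-distance. The definition of $N_p(M,d)$ then forces $n\ge N_p(M,d)$. Subtracting $k$ gives the bound on the redundancy.

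\emph{First term.} Let $C_f:\mathbb{F}_q^k\to\mathbb{F}_q^{k+r}$ be any $(f:d_d,d_f)_p$-FCSPC-DP. Since it is systematic, its $q^k$ codewords are distinct. By Definition~\ref{def:FCSPC-DP} they are pairwise at pair-distance at least $d_d$. Hence $\{C_f(x)\}$ is a code of size $q^k$, length $k+r$ and minimum pair-distance at least $d_d$, so $k+r\ge N_p(q^k,d_d)$. Taking $C_f$ of optimal redundancy gives $r_p^f(k,d_d,d_f)\ge N_p(q^k,d_d)-k$. This part uses no structure of $C$.

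\emph{Second term (quotient argument).} Put $H=\langle S^p_{d_f-1}\rangle$, so that $\dim H=\gamma$, and $H\subsetneq C$ because $C$ is strict (Proposition~\ref{prop:nested}). By Corollary~\ref{cor:cosets-pair}, $C$ splits into $q^{k-\gamma}$ cosets of $H$.
\begin{itemize}
\item Choose one representative $c_1,\dots,c_{q^{k-\gamma}}$ from each coset.
\item For $i\neq j$, linearity gives $d_p(c_i,c_j)=w_p(c_i-c_j)$.
\item Since $c_i-c_j\in C\setminus H$ and every nonzero codeword of pair weight at most $d_f-1$ lies in $S^p_{d_f-1}\subseteq H$, we get $w_p(c_i-c_j)\ge d_f$.
\end{itemize}
Thus the representatives form a length-$n$ code of size $q^{k-\gamma}$ with minimum pair-distance at least $d_f$, and $n\ge N_p(q^{k-\gamma},d_f)$.

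\emph{Main obstacle.} The second term bounds the length of the given code $C$, not directly the optimum $r_p^f$. To transfer it, there are two routes. The first is to read the statement as bounding the redundancy $n-k$ of strict encodings realised by codes with the same generation value $\gamma$, which is the regime of Proposition~\ref{prop:nested}. The second works for an arbitrary encoding: pick one message from each of $|\mathrm{Im}(f)|$ classes. Their codewords pairwise satisfy $d_p\ge d_f$, which yields $k+r\ge N_p(|\mathrm{Im}(f)|,d_f)$. When $f$ uses all $q^{k-\gamma}$ admissible classes, i.e.\ $f$ is the extremal function of Theorem~\ref{thm:pair-tradeoff}(i), this coincides with the stated term.

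The first route proves exactly what is needed, so the remaining step is to state precisely under which of these readings the inequality holds. If the $q^{k-\gamma}$ bound is wanted for $r_p^f$ itself with no assumption on $|\mathrm{Im}(f)|$, it does not follow from this argument, since $|\mathrm{Im}(f)|\le q^{k-\gamma}$ is the only general relation. In that case I would state the second term with $|\mathrm{Im}(f)|$ in place of $q^{k-\gamma}$.
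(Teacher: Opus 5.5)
Your argument is the paper's own. The paper's proof also takes one representative from each coset of $H=\langle S^p_{d_f-1}\rangle$, notes that their differences lie in $C\setminus H$, and concludes $n\ge N_p(q^{k-\gamma},d_f)$. For the first term it only observes $n\ge N_p(q^k,d_d)$ for the given $C$. Your first-term argument is slightly better: it applies to an arbitrary $(f:d_d,d_f)_p$-FCSPC-DP, so that term really does bound $r_p^f(k,d_d,d_f)$. The obstacle you flag is real, and the paper's proof does not get past it either. It only bounds the redundancy $n-k$ of the given $C$, which is your first reading.

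Your caution is necessary, because the inequality for $r_p^f$ is false as stated. Here is a counterexample with $q=2$, $d_d=2$, $d_f=5$.

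Let $C_0$ be a binary linear $[m,k-1]$ code with systematic generator matrix $G_0$ and $d_H(C_0)\ge 4$ (e.g.\ a shortened extended Hamming code). By Lemma~\ref{lem:hamming-pair-distance}, $d_p(C_0)\ge 5$. Let $C$ be the linear code with systematic encoding $x=(x_1,x')\mapsto(x_1,x'G_0)$.

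By cyclic invariance of $d_p$ and Lemma~\ref{lem:monotone}, every codeword $(\lambda,c)$ with $c\neq 0$ has pair weight at least $5$. Hence $S^p_4=\{(1,0,\dots,0)\}$, $d_p(C)=2$ and $\gamma=\gamma^p_C(4)=1$. Take $f(x)=x_2$, which is constant on cosets of $H$. Then $C$ is a strict $(f:2,5)_p$-FCSPC.

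The theorem would give $r_p^f(k,2,5)\ge N_p(2^{k-1},5)-k\ge \log_2(k+1)-1$. The second inequality holds because a code with $d_p\ge 5$ has Hamming distance at least $3$, so the Hamming bound applies.

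But $x\mapsto(x,x_2,x_2)$ is already an $(f:2,5)_p$-FCSPC-DP. Here $d_d=2$ is automatic for systematic encodings. Number the coordinates $0,\dots,k+1$, so $x_2$ sits at position $1$. If $x_2\neq y_2$, the pair positions $0,1,k-1,k,k+1$ all differ. Hence $r_p^f(k,2,5)\le 2$, which contradicts the claimed bound for every $k\ge 8$.

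So the theorem should be stated in one of two ways. Either it is a bound on $n-k$ for the given $C$, which is what both you and the paper actually prove. Or, for the optimum, $N_p(q^{k-\gamma},d_f)$ should be replaced by $N_p(|\mathrm{Im}(f)|,d_f)$, exactly as you suggest.
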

    
    \begin{proof}
    The first term $N_p\bigl(q^{k},d_d\bigr)$ is the data-protection requirement as $C$ itself is a code of size $q^k$ with $d_p(C)\ge d_d$, so $n\ge N_p(q^k,d_d)$. For the second, let $H=\langle S^p_{d_f-1}\rangle$ and choose one representative from each of the $q^{k-\gamma}$ cosets of $H$ in $C$. Two representatives in distinct cosets differ by an element of $C\setminus H$, whose pair weight exceeds $d_f-1$ by definition of $S^p_{d_f-1}$. The representatives therefore form a code of size $q^{k-\gamma}$ and minimum pair-distance at least $d_f$, whence $n\ge N_p(q^{k-\gamma},d_f)$.
    \end{proof}

\section{FCSPC-DP for Specific Function Classes }\label{sec:classes}
    \subsection{Pair-locally bounded function}

    For a general function $f$, protecting $f(x)$ appears costly whenever $|\operatorname{Im}(f)|$ is large, since every pair of messages with distinct function values must be separated by at least $d_f$ in the symbol-pair metric. Messages that lie far apart, however, are already well separated, so the only pairs imposing a genuine constraint on the redundancy part are those lying within a ball of radius $d_f-1$. This shifts attention from the global size of $\operatorname{Im}(f)$ to the number of function values visible locally.

    \begin{definition}[Function pair-ball~\cite{xia2024function}]
    \label{def:function-pair-ball}
    The \emph{function pair-ball} of $f : \mathbb{F}_q^k \to \mathrm{Im}(f)$ of radius $\rho$ around $x \in \mathbb{F}_q^k$ is
    \[
        B^f_p(x,\rho) \;=\; \bigl\{\, f(x') \;:\; x' \in \mathbb{F}_q^k,\ d_p(x,x') \le \rho \,\bigr\}.
    \]
    \end{definition}

    \begin{definition}[Pair-locally bounded function]
    \label{def:pair-bounded}
    Let $\lambda \ge 2$ and $\rho \ge 1$ be integers. A function $f : \mathbb{F}_q^k \to \operatorname{Im}(f)$ is \emph{$(\lambda,\rho)$-pair-bounded} if $|B_p^f(x,\rho)| \le \lambda$ for every $x \in \mathbb{F}_q^k$. A $(2,\rho)$-pair-bounded function is also called \emph{$\rho$-pair-locally binary}~\cite{xia2024function}.
    \end{definition}

    \begin{definition}[Pair-separating colouring]
    \label{def:pair-colouring}
    A map $\operatorname{Col}_f : \mathbb{F}_q^k \to [\lambda]$ is a \emph{$\rho$-pair-separating $\lambda$-colouring} of $f$ if
    \[
    d_p(x,y) \le \rho \ \text{ and }\ f(x) \neq f(y)
    \quad\Longrightarrow\quad
    \operatorname{Col}_f(x) \neq \operatorname{Col}_f(y).
    \]
    \end{definition}

    \begin{lemma}[{\cite{11250740}}]
    \label{lem:contig-block-cond}
    Let $f$ be $(\lambda,\rho)$-pair-bounded and suppose there is a total order $\preceq$ on $\operatorname{Im}(f)$ such that $B_p^f(x,\rho)$ is a contiguous block with respect to $\preceq$ for every $x \in \mathbb{F}_q^k$. Then $f$ admits a $\rho$-pair-separating $\lambda$-colouring.
    \end{lemma}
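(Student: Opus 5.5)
The plan is to build the colouring greedily along the total order, coloring each function value rather than each message. Write $\mathrm{Im}(f)=\{a_1\prec a_2\prec\cdots\prec a_E\}$ and define the colouring through function values only, setting
\[
\operatorname{Col}_f(x) \;=\; \bigl(\iota(f(x)) \bmod \lambda\bigr)+1,
\]
where $\iota(a_i)=i$ is the rank of $f(x)$ in the order $\preceq$. Since the colour depends only on $f(x)$, two messages with equal function value automatically receive the same colour, which is harmless: Definition~\ref{def:pair-colouring} only requires distinct colours when the function values differ.

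The verification then reduces to a single claim. Let $d_p(x,y)\le\rho$ with $f(x)\neq f(y)$. Both $f(x)$ and $f(y)$ lie in $B_p^f(x,\rho)$: the first trivially, since $d_p(x,x)=0$, and the second by the assumption on $d_p(x,y)$. By hypothesis, $B_p^f(x,\rho)$ is a contiguous block in $\preceq$ of size at most $\lambda$, so it equals $\{a_s,a_{s+1},\dots,a_{s+m-1}\}$ for some $m\le\lambda$. Hence the ranks satisfy $0<|\iota(f(x))-\iota(f(y))|\le m-1\le\lambda-1$. Two integers whose difference is nonzero and strictly less than $\lambda$ in absolute value are incongruent modulo $\lambda$, so $\operatorname{Col}_f(x)\ne\operatorname{Col}_f(y)$. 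The map takes values in $[\lambda]$ by construction, so it is a $\rho$-pair-separating $\lambda$-colouring.

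There is no real obstacle. The only point needing care is noticing that contiguity is exactly what turns the local count bound $|B_p^f(x,\rho)|\le\lambda$ into a bound on rank differences. Without contiguity, a ball of size $2$ could contain $a_1$ and $a_{\lambda+1}$, which share a residue, and the cyclic colouring would fail. I would also note explicitly that the argument never uses any property of the symbol-pair metric beyond $x\in B_p^f(x,\rho)$, which is why the result carries over verbatim from the Hamming-metric version in~\cite{11250740}.
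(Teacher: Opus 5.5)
Your proof is correct and complete: colouring by rank modulo $\lambda$ works because contiguity together with $|B_p^f(x,\rho)|\le\lambda$ forces $0<|\iota(f(x))-\iota(f(y))|\le\lambda-1$, so the two ranks have different residues. The paper gives no proof of its own; it imports the lemma from the Hamming-metric reference \cite{11250740}, so your argument supplies what the paper omits, and your observation that only $x\in B_p^f(x,\rho)$ is used is exactly what justifies carrying the result over to the pair metric, a point the paper leaves implicit.
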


    \begin{lemma}
    \label{lem:binary-colouring}
    Every $\rho$-pair-locally binary function admits a $\rho$-pair-separating $2$-colouring.
    \end{lemma}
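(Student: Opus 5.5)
We want to show: if $f$ is $\rho$-pair-locally binary, i.e.\ $|B_p^f(x,\rho)|\le 2$ for every $x\in\mathbb{F}_q^k$, then $f$ admits a $\rho$-pair-separating $2$-colouring. The plan is to reduce to Lemma~\ref{lem:contig-block-cond} with $\lambda=2$. That lemma needs a total order $\preceq$ on $\operatorname{Im}(f)$ such that every $B_p^f(x,\rho)$ is a contiguous block. A set of size one is always contiguous, so the only constraint comes from the two-element balls $\{f(x),f(x')\}$: these must be consecutive in $\preceq$. This is not automatic for an arbitrary order. If $f$ takes three values $a,b,c$ with all three pairs occurring locally in different balls, no linear order makes all of them consecutive. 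So the contiguity route may fail, and I would instead argue directly via graph colouring.

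\textbf{Conflict graph.} Let $\Gamma$ have vertex set $\mathbb{F}_q^k$, with $x\sim y$ when $d_p(x,y)\le\rho$ and $f(x)\ne f(y)$. A $\rho$-pair-separating $2$-colouring is exactly a proper $2$-colouring of $\Gamma$. It therefore suffices to show $\Gamma$ is bipartite, i.e.\ has no odd cycle.

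\textbf{Local binarity.} If $x$ has neighbours $y,z$ in $\Gamma$, then $f(y),f(z)\in B_p^f(x,\rho)\setminus\{f(x)\}$, which has at most one element. Hence all neighbours of $x$ share a single value $g(x)\ne f(x)$.

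\textbf{Excluding odd cycles.} Take a cycle $x_0,x_1,\dots,x_{m-1}$ in $\Gamma$ and set $a_i=f(x_i)$. Consecutive values differ, and for each $i$ we have $a_{i-1}=a_{i+1}=g(x_i)$. So the value sequence alternates between $a_0$ and $a_1$ around the cycle. Alternation of two distinct values around a closed cycle forces $m$ to be even. Thus $\Gamma$ is bipartite.

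\textbf{Building the colouring.} On each connected component of $\Gamma$, fix a vertex $x^*$ and colour every vertex $y$ in the component by $1$ if $f(y)=f(x^*)$ and by $2$ otherwise. By the alternation argument, the values along any path within a component take only the two values $f(x^*)$ and $g(x^*)$. Hence adjacent vertices, which carry different values, receive different colours. Isolated vertices can be coloured arbitrarily.

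\textbf{Main obstacle.} The key point is that a two-element ball around $x$ constrains only the neighbours of $x$. The propagation step therefore deserves care: it rests on $a_{i-1}=a_{i+1}$ holding at every interior vertex of a path. This is exactly the local binarity statement applied at that vertex. If the paper instead intends to invoke Lemma~\ref{lem:contig-block-cond}, the same observation would provide the needed order componentwise, since each component of $\Gamma$ uses only two values.
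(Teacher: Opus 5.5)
Your proof is correct, but it takes a more global route than the paper's. The paper fixes an arbitrary total order $\preceq$ on $\operatorname{Im}(f)$ and sets $\operatorname{Col}_f(x)=2$ exactly when $f(x)=\max_{\preceq}B_p^f(x,\rho)$, and $1$ otherwise. For a conflicting pair ($d_p(x,y)\le\rho$, $f(x)\neq f(y)$), local binarity forces $B_p^f(x,\rho)=B_p^f(y,\rho)=\{f(x),f(y)\}$. Both endpoints are therefore compared against the same maximum, and exactly one of them receives colour $2$. The order is used only to pick a distinguished element of each two-element ball, not as in Lemma~\ref{lem:contig-block-cond}. So the obstruction you correctly identify for the contiguity route (three values with all three pairs occurring locally) never comes up.

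Your argument instead builds the conflict graph $\Gamma$ and applies local binarity at every vertex to show that values alternate along paths. You then colour each component relative to a chosen representative. This gives more structure: every component of $\Gamma$ carries at most two function values, and $\Gamma$ is bipartite. The cost is a global choice of representatives per component, and $\operatorname{Col}_f(x)$ is no longer determined by the ball around $x$ alone, unlike the paper's explicit one-line rule.

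Two minor points. Your odd-cycle step is redundant: once each component is shown to be two-valued, the representative-based colouring is already proper. And your closing remark about applying Lemma~\ref{lem:contig-block-cond} componentwise is not literally an instance of that lemma, which needs a single global order on $\operatorname{Im}(f)$; nothing in your proof depends on it, though.
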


    \begin{proof}
    Fix any total order $\preceq$ on $\operatorname{Im}(f)$ and set
    \[
    \operatorname{Col}_f(x) =
    \begin{cases}
        2, & \text{if } f(x) = \max_{\preceq} B_p^f(x,\rho),\\
        1, & \text{otherwise.}
    \end{cases}
    \]
    Let $d_p(x,y) \le \rho$ with $f(x) \neq f(y)$. Then $f(x), f(y) \in B_p^f(x,\rho)$, and $|B_p^f(x,\rho)| \le 2$ forces $B_p^f(x,\rho) = \{f(x),f(y)\}$. Since $d_p$ is symmetric, $d_p(y,x) \le \rho$ as well, so the same argument gives $B_p^f(y,\rho) = \{f(x),f(y)\}$. Hence $x$ and $y$ compare their function values against the \emph{same} maximum $m = \max_{\preceq}\{f(x),f(y)\}$, and exactly one of $f(x),f(y)$ equals $m$. Therefore $\operatorname{Col}_f(x) \neq \operatorname{Col}_f(y)$.
    \end{proof}

    \begin{theorem}
    \label{the:locally-red-boun}
    Let $f : \mathbb{F}_q^k \to \operatorname{Im}(f)$ admit a $(d_f-1)$-pair-separating $\lambda$-colouring, let $d_f > d_d \ge 1$, and let $C$ be a systematic linear $[n,k]_q$ code with $d_p(C) \ge d_d$. Then
    \[
    r_p^f(k, d_d, d_f) \;\le\; n - k + N_p\bigl(\lambda,\; d_f - d_d + 1\bigr),
    \]
    where $N_p(\lambda,d)$ denotes the minimum length of a symbol-pair code with $\lambda$ codewords and minimum symbol-pair distance at least $d$.
    \end{theorem}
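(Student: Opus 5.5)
The plan is a direct instance of Construction~\ref{construction-2-step}, where the second step depends on $x$ only through its colour. Fix a $(d_f-1)$-pair-separating $\lambda$-colouring $\operatorname{Col}_f:\mathbb{F}_q^k\to[\lambda]$ of $f$. Let $r'=N_p(\lambda,d_f-d_d+1)$, and let $\{p_1,\dots,p_\lambda\}\subseteq\mathbb{F}_q^{r'}$ be a symbol-pair code with pairwise pair-distance at least $d_f-d_d+1$. Write the systematic codeword as $c_x=(x,z_x)$ and define
\[
C_f(x)=\bigl(c_x,\;p_{\operatorname{Col}_f(x)}\bigr)\in\mathbb{F}_q^{\,n+r'} .
\]
This map is systematic, and its redundancy is $n-k+r'$. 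It then suffices to verify $d_p(C_f)\ge d_d$ and $d_p^f(C_f)\ge d_f$; the bound on $r_p^f(k,d_d,d_f)$ then follows by minimality.

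\emph{Data protection.} For $x\neq y$, Lemma~\ref{lem:monotone} gives $d_p(C_f(x),C_f(y))\ge d_p(c_x,c_y)\ge d_p(C)\ge d_d$.

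\emph{Function protection.} Let $f(x)\neq f(y)$ and split into two cases.
\begin{enumerate}
\item If $d_p(c_x,c_y)\ge d_f$, Lemma~\ref{lem:monotone} finishes this case.
\item Otherwise $d_p(c_x,c_y)\le d_f-1$. Because $C$ is systematic, Lemma~\ref{lem:monotone} applied to $c_x=(x,z_x)$ gives $d_p(x,y)\le d_p(c_x,c_y)\le d_f-1$. (If $n=k$, then $c_x=x$ and this is an equality.) The colouring property now forces $\operatorname{Col}_f(x)\neq\operatorname{Col}_f(y)$, hence $d_p\bigl(p_{\operatorname{Col}_f(x)},p_{\operatorname{Col}_f(y)}\bigr)\ge d_f-d_d+1$. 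Lemma~\ref{lem:symbol-pair-ineq} then yields
\[
d_p(C_f(x),C_f(y))\;\ge\; d_p(c_x,c_y)+(d_f-d_d+1)-1\;\ge\; d_d+d_f-d_d=d_f .
\]
\end{enumerate}

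The main obstacle is the second case. It requires transferring the smallness of the codeword distance $d_p(c_x,c_y)$ back to the message distance $d_p(x,y)$, since the colouring is defined on messages; systematicity of $C$ together with Lemma~\ref{lem:monotone} provides exactly this transfer. The same case must also absorb the junction loss of one unit in Lemma~\ref{lem:symbol-pair-ineq}, which the "$+1$" in $d_f-d_d+1$ is designed to cancel. One degenerate situation needs a remark: if some colour classes coincide or $\lambda$ exceeds the colours actually used, we still take $\lambda$ distinct redundancy words, so nothing changes.
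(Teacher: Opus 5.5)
Your proof is correct and is essentially the paper's argument: the same construction $C_f(x)=(c_x,p_{\operatorname{Col}_f(x)})$ using a pair code of length $N_p(\lambda,d_f-d_d+1)$, data protection via Lemma~\ref{lem:monotone}, and Lemma~\ref{lem:symbol-pair-ineq} in the case where $x$ and $y$ are close. The only difference is bookkeeping. You split on $d_p(c_x,c_y)$ rather than on $d_p(x,y)$ as the paper does, so systematicity is used in your second case (to pass from $d_p(c_x,c_y)\le d_f-1$ to $d_p(x,y)\le d_f-1$) rather than in the first.
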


    \begin{proof}
    Let $G$ be a generator matrix of $C$ in systematic form and $c_x = xG$. Let $\operatorname{Col}_f : \mathbb{F}_q^k \to [\lambda]$ be a $(d_f-1)$-pair-separating $\lambda$-colouring, and let $C_0 = \{\mathbf{c}_0^{(1)},\dots,\mathbf{c}_0^{(\lambda)}\}$ be a symbol-pair code of length $N_p(\lambda, d_0)$ with $\lambda$ codewords and minimum \emph{symbol-pair} distance $d_0 = d_f - d_d + 1$. Define
    \[
    C_f(x) = (c_x,\, p_x), \qquad p_x = \mathbf{c}_0^{(\operatorname{Col}_f(x))},
    \]
    so that $C_f : \mathbb{F}_q^k \to \mathbb{F}_q^{\,n + N_p(\lambda,d_0)}$ is systematic.

    \emph{Data protection.} For $x \neq y$, Lemma~\ref{lem:monotone} gives $d_p(C_f(x),C_f(y)) \ge d_p(c_x,c_y) \ge d_p(C) \ge d_d$.

    \emph{Function protection.} Let $f(x) \neq f(y)$.

    \textbf{Case 1: $d_p(x,y) \ge d_f$.} Since $C$ is systematic, two applications of Lemma~\ref{lem:monotone} give $d_p(C_f(x),C_f(y)) \ge d_p(c_x,c_y) \ge d_p(x,y) \ge d_f$.

    \textbf{Case 2: $d_p(x,y) \le d_f - 1$.} Then $\operatorname{Col}_f(x) \neq \operatorname{Col}_f(y)$ by Definition~\ref{def:pair-colouring}, so $p_x \neq p_y$ and $d_p(p_x,p_y) \ge d_0 = d_f - d_d + 1$. By Lemma~\ref{lem:symbol-pair-ineq},
    \[
    d_p\bigl(C_f(x),C_f(y)\bigr)
    \;\ge\; d_p(c_x,c_y) + d_p(p_x,p_y) - 1
    \;\ge\; d_d + (d_f - d_d + 1) - 1 \;=\; d_f .
    \]

    Hence $C_f$ is an $(f : d_d, d_f)_p$-FCSPC-DP of redundancy $n - k + N_p(\lambda, d_0)$, and the bound follows from the minimality of $r_p^f(k,d_d,d_f)$.
    \end{proof}

    Combining Lemma~\ref{lem:binary-colouring} with Theorem~\ref{the:locally-red-boun} and the evaluation $N_p(2,D) = D$, gives the following.

    \begin{corollary}
    \label{cor:binary-generic}
    Let $f$ be $(d_f-1)$-pair-locally binary, $d_f > d_d$, and let $C$ be a systematic linear $[n,k]_q$ code with $d_p(C) \ge d_d$. Then
    \[
    r_p^f(k, d_d, d_f) \;\le\; n - k + d_f - d_d + 1 .
    \]
    \end{corollary}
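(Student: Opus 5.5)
The corollary follows by specialising Theorem~\ref{the:locally-red-boun} to $\lambda=2$. That theorem needs only two inputs: a $(d_f-1)$-pair-separating $\lambda$-colouring of $f$, and a systematic linear $[n,k]_q$ code $C$ with $d_p(C)\ge d_d$ and $d_f>d_d\ge 1$. The second input is a hypothesis of the corollary, so the argument reduces to supplying the colouring and evaluating $N_p(2,\cdot)$.

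\emph{Step 1: the colouring.} Since $f$ is $(d_f-1)$-pair-locally binary, it is $(2,d_f-1)$-pair-bounded in the sense of Definition~\ref{def:pair-bounded}. Lemma~\ref{lem:binary-colouring}, applied with $\rho=d_f-1$, then gives a $(d_f-1)$-pair-separating $2$-colouring. The lemma requires $\rho\ge1$, which holds because $d_f>d_d\ge1$ forces $d_f-1\ge1$.

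\emph{Step 2: apply the theorem.} Theorem~\ref{the:locally-red-boun} with $\lambda=2$ yields
\[
r_p^f(k,d_d,d_f)\le n-k+N_p(2,\,d_f-d_d+1).
\]

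\emph{Step 3: evaluate $N_p(2,D)$.} It remains to show $N_p(2,D)\le D$ for $D=d_f-d_d+1\ge 2$; only this upper bound is needed for the corollary. I take the two words $0^D$ and $1^D$ of length $D$. They differ in every coordinate, so every pair $(x_i,x_{i+1})$ differs, including the cyclic wrap-around pair, and their pair-distance is exactly $D$. This is also the boundary clause of Lemma~\ref{lem:hamming-pair-distance} (Hamming distance $n$ implies pair-distance $n$). Hence a length-$D$ code with two codewords and minimum pair-distance $D$ exists, giving $N_p(2,D)\le D$.

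Substituting this into Step 2 gives $r_p^f(k,d_d,d_f)\le n-k+d_f-d_d+1$.

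There is no real obstacle here. The only points that need care are checking that $\rho=d_f-1\ge1$ so Lemma~\ref{lem:binary-colouring} applies, and handling the cyclic wrap-around pair in the repetition-code computation.
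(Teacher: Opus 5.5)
Your proposal is correct and follows the paper's route exactly: Lemma~\ref{lem:binary-colouring} gives the $(d_f-1)$-pair-separating $2$-colouring, Theorem~\ref{the:locally-red-boun} is applied with $\lambda=2$, and $N_p(2,d_f-d_d+1)$ is evaluated using the two constant words. Two details you handle are left implicit in the paper: you check that $\rho=d_f-1\ge 1$ and that the cyclic wrap-around pair is counted. You also note that only $N_p(2,D)\le D$ is needed, not the equality the paper states.
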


    The binary case admits a sharper analysis, because the two parity blocks can be chosen to differ in \emph{every} coordinate. We first record what this buys at the junctions.

    \begin{lemma}
    \label{lem:junction}
    Let $a,a' \in \mathbb{F}_q^{n}$ and $b,b' \in \mathbb{F}_q^{L}$. If $b_1 \neq b'_1$ and $b_L \neq b'_L$, then
    \[
    d_p\bigl((a,b),(a',b')\bigr) \;\ge\; d_p(a,a') + d_p(b,b') .
    \]
    \end{lemma}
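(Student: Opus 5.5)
Writing $u=(a,b)$ and $u'=(a',b')$ in $\mathbb{F}_q^{n+L}$, the plan is to count the pair positions of the concatenation directly, using the cyclic index sets of length $n$, $L$ and $n+L$. The pairs of $u$ fall into three groups: the $n-1$ internal pairs of $a$, namely $(a_i,a_{i+1})$ for $i=1,\dots,n-1$; the $L-1$ internal pairs of $b$; and the two junction pairs $(a_n,b_1)$ and $(b_L,a_1)$. Internal pairs of $u$ and $u'$ differ exactly where the corresponding internal pairs of $a,a'$ (respectively $b,b'$) differ. The wrap-around pair of $a$ alone, $(a_n,a_1)$, is not a pair of $u$; neither is the wrap-around pair $(b_L,b_1)$ of $b$. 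The proof is therefore a bookkeeping comparison of these two lost wrap positions against the two junction positions gained.

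\textbf{Step 1.} I will write
\[
d_p(a,a')=I_a+\epsilon_a \quad\text{and}\quad d_p(b,b')=I_b+\epsilon_b,
\]
where $I_a$ and $I_b$ count the differing internal pairs and $\epsilon_a,\epsilon_b\in\{0,1\}$ are the indicators that the wrap pairs differ. Similarly, I will write
\[
d_p(u,u')=I_a+I_b+J_1+J_2,
\]
where $J_1$ and $J_2$ indicate whether the junction pairs $(a_n,b_1)$ and $(b_L,a_1)$ differ.

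\textbf{Step 2.} The hypotheses give $J_1=J_2=1$ directly: $b_1\ne b_1'$ makes the pair $(a_n,b_1)$ differ from $(a_n',b_1')$ regardless of $a$, and $b_L\ne b_L'$ does the same for $(b_L,a_1)$. Hence
\[
d_p(u,u')=I_a+I_b+2\ge I_a+\epsilon_a+I_b+\epsilon_b=d_p(a,a')+d_p(b,b').
\]

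\textbf{Obstacle: degenerate lengths.} The only delicate point is the small cases, where the internal/wrap decomposition needs checking. For $L=1$, the vector $b$ has a single pair $(b_1,b_1)$, so there are no internal pairs and $d_p(b,b')=\epsilon_b=1$; the count $I_b=0$ still works, and both junction pairs involve $b_1$. For $n=1$ the same reasoning applies on the $a$ side, with $I_a=0$ and $\epsilon_a\le 1$. I would state these cases explicitly, since the paper's $n,L\ge 1$ conventions allow them. Otherwise the argument is a direct count and needs no earlier lemma, although it parallels the proof idea of Lemma~\ref{lem:symbol-pair-ineq}.
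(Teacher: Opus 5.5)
Your proof is correct and follows the paper's argument. Both split the $n+L$ pair positions of $(a,b)$ into the within-$a$ positions, the within-$b$ positions and the two junctions; note that each standalone distance exceeds its within-block count by at most the one wrap position that is lost; and use the hypotheses $b_1\neq b'_1$, $b_L\neq b'_L$ to make both junctions differ. Your exact bookkeeping $d_p(a,a')=I_a+\epsilon_a$, together with your explicit check of the degenerate lengths $n=1$ or $L=1$, is a slightly more careful rendering of the same count; the latter matters because the paper's application, with $L=d_f-d_d-1\ge 1$, allows $L=1$.
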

    \begin{proof}
    The $n+L$ pair positions of $(a,b)$ split into the $n-1$ within-$a$ positions $(a_i,a_{i+1})$, $i < n$, the $L-1$ within-$b$ positions, and the two junctions $(a_n,b_1)$ and $(b_L,a_1)$. The differing pair positions of the standalone $a$ lie among the within-$a$ positions and the wrap $(a_n,a_1)$, so at least $d_p(a,a')-1$ within-$a$ positions differ; similarly at least $d_p(b,b')-1$ within-$b$ positions differ. Under the hypothesis both junctions differ, contributing $2$.
    \end{proof}

    \begin{proposition}
\label{lem:pair-locally-binary-fcspcdp}
Let $f : \mathbb{F}_q^k \to \operatorname{Im}(f)$ be $(d_f-1)$-pair-locally binary, let
$d_f \ge d_d + 2$, and let $C$ be a systematic linear $[n,k]_q$ code with
$d_p(C) \ge d_d + 1$. Then
\[
    r_p^f(k, d_d, d_f) \;\le\; n - k + d_f - d_d - 1 .
\]
If $d_f \le d_d + 1$, then $C$ itself is an $(f : d_d,d_f)_p$-FCSPC-DP and
$r_p^f(k,d_d,d_f) \le n-k$.
\end{proposition}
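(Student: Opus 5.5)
Our plan mirrors the proof of Theorem~\ref{the:locally-red-boun}, but we replace the generic parity code by two \emph{complementary} parity blocks. Lemma~\ref{lem:junction} then applies, and the $-1$ loss from Lemma~\ref{lem:symbol-pair-ineq} becomes a gain. Set $L = d_f - d_d - 1 \ge 1$; this is where $d_f \ge d_d+2$ is used. Take $b^{(1)} = (0,\dots,0)\in\mathbb{F}_q^L$ and $b^{(2)} = (1,\dots,1)\in\mathbb{F}_q^L$. These differ in every coordinate, so $d_p(b^{(1)},b^{(2)}) = L$ by Lemma~\ref{lem:hamming-pair-distance}(2); for $L=1$ this is checked directly, since the single pair position $(b_1,b_1)$ differs. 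Lemma~\ref{lem:binary-colouring} gives a $(d_f-1)$-pair-separating $2$-colouring $\operatorname{Col}_f$. We then define $C_f(x) = (c_x, b^{(\operatorname{Col}_f(x))})$, which is systematic of length $n+L$ and has redundancy $n-k+d_f-d_d-1$.

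\emph{Data protection.} By Lemma~\ref{lem:monotone}, $d_p(C_f(x),C_f(y)) \ge d_p(c_x,c_y) \ge d_d+1 \ge d_d$.

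\emph{Function protection.} Suppose $f(x)\ne f(y)$. If $d_p(x,y)\ge d_f$, then Lemma~\ref{lem:monotone}, applied twice using systematicity, gives $d_p(C_f(x),C_f(y)) \ge d_p(x,y)\ge d_f$. If instead $d_p(x,y)\le d_f-1$, the colours differ, so the parity blocks are $b^{(1)}$ and $b^{(2)}$ in some order. Their first and last coordinates both differ, so Lemma~\ref{lem:junction} yields
\[
d_p(C_f(x),C_f(y)) \ge d_p(c_x,c_y) + L \ge (d_d+1) + (d_f-d_d-1) = d_f .
\]
Minimality of $r_p^f$ gives the first bound.

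For the second claim, assume $d_f \le d_d+1$. Take $C_f(x)=c_x$. Then every pair of distinct codewords satisfies $d_p\ge d_p(C)\ge d_d+1\ge d_f\ge d_d$, so both requirements hold with redundancy $n-k$.

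The only delicate step is Lemma~\ref{lem:junction}. We must check that its hypothesis truly holds: the parity blocks differ at positions $1$ and $L$. We must also check that the lemma is stated with the cyclic wrap of the concatenation joining $b_L$ to $a_1$, which is exactly the configuration in $(c_x,p_x)$. Both hold by the choice of constant, coordinatewise-distinct blocks. Note also that the extra unit $d_p(C)\ge d_d+1$ is exactly what absorbs the shortened parity length.
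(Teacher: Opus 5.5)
Your proof is correct and follows the paper's argument: constant blocks $0^L$ and $1^L$ are assigned by the $2$-colouring of Lemma~\ref{lem:binary-colouring}, Lemma~\ref{lem:monotone} handles data protection and far-apart cross-class pairs, and Lemma~\ref{lem:junction} gives $d_p(c_x,c_y)+L\ge (d_d+1)+L=d_f$ for nearby cross-class pairs. You also settle the degenerate clause $d_f\le d_d+1$ by taking $C_f(x)=c_x$, which the paper's proof leaves unaddressed.
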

    \begin{proof}
Write $L = d_f - d_d - 1 \ge 1$, let $G$ be a systematic generator matrix of $C$ and
$c_x = xG$. Let $\operatorname{Col}_f$ be the $2$-colouring of
Lemma~\ref{lem:binary-colouring} with $\rho = d_f - 1$, and set
\[
    C_f(x) = (c_x,\, p_x), \qquad
    p_x = \begin{cases}
        1^{L}, & \text{if } \operatorname{Col}_f(x) = 2,\\
        0^{L}, & \text{if } \operatorname{Col}_f(x) = 1.
    \end{cases}
\]

\emph{Data protection.} For $x \neq y$, Lemma~\ref{lem:monotone} gives
$d_p(C_f(x),C_f(y)) \ge d_p(c_x,c_y) \ge d_d + 1 > d_d$.

\emph{Function protection.} Let $f(x) \neq f(y)$. If $d_p(x,y) \ge d_f$, then two
applications of Lemma~\ref{lem:monotone} give
$d_p(C_f(x),C_f(y)) \ge d_p(c_x,c_y) \ge d_p(x,y) \ge d_f$. Otherwise
$d_p(x,y) \le d_f-1$, so $\operatorname{Col}_f(x) \neq \operatorname{Col}_f(y)$ by
Lemma~\ref{lem:binary-colouring}, whence $\{p_x,p_y\} = \{0^L,1^L\}$. These differ in
every coordinate, so $d_p(p_x,p_y) = L$ and the hypothesis of
Lemma~\ref{lem:junction} holds. Therefore
\[
    d_p\bigl(C_f(x),C_f(y)\bigr) \;\ge\; d_p(c_x,c_y) + d_p(p_x,p_y)
    \;\ge\; (d_d + 1) + L \;=\; d_f .
\]
Hence $C_f$ is an $(f : d_d,d_f)_p$-FCSPC-DP of redundancy $n-k+L$.
\end{proof}

    \begin{lemma}[{\cite[Lem.~10]{singh2025function}}]
\label{lem:nh-asymptotic}
Let $M, D \in \mathbb{N}$ and $q \ge 2$ satisfy $D \ge 10$ and $M \le D^2$. Then
\[
    N_H(M,D) \;\le\; \frac{q(D-1)}{q\bigl(1 - \sqrt{\ln D / D}\bigr) - 1}.
\]
\end{lemma}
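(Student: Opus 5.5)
The plan is a random-coding argument with expurgation. Hoeffding's inequality controls a single pair of random words, and the hypotheses $M\le D^2$ and $\varepsilon=\sqrt{\ln D/D}$ make the union over pairs small enough. Set
\[
  r \;=\; \Bigl\lceil \frac{q(D-1)}{q(1-\varepsilon)-1}\Bigr\rceil ,
\]
and draw $2M$ words $u_1,\dots,u_{2M}$ independently and uniformly from $\mathbb{F}_q^{\,r}$. Since the bound in the statement is an upper bound on $N_H(M,D)$, it suffices to exhibit one code of this length with $M$ codewords and minimum Hamming distance at least $D$. We first use $D\ge 10$ to check that the denominator $q(1-\varepsilon)-1$ is positive, which holds because $\varepsilon<1/2$ there.

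For a fixed pair $i\neq j$, $d_H(u_i,u_j)$ is a sum of $r$ independent Bernoulli variables with mean $1-1/q$, so its mean is $r(1-1/q)$. The choice of $r$ is designed so that
\[
  D-1 \;\le\; r\Bigl(1-\tfrac1q\Bigr)-r\varepsilon ,
\]
which follows by rearranging $r\bigl(q(1-\varepsilon)-1\bigr)\ge q(D-1)$. Hoeffding's inequality then gives
\[
  \Pr\bigl[d_H(u_i,u_j)\le D-1\bigr]\;\le\;\exp(-2r\varepsilon^2)\;=\;D^{-2r/D}.
\]
Call a pair bad if its distance is at most $D-1$. By linearity of expectation, the expected number of bad pairs is at most $\binom{2M}{2}D^{-2r/D}\le 2M^2D^{-2r/D}$. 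Fix a realisation with at most this many bad pairs and delete one word from each bad pair. If the number of deletions is at most $M$, at least $M$ words remain, they are pairwise at distance at least $D$, and hence $N_H(M,D)\le r$.

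The expurgation step therefore needs $2M D^{-2r/D}\le 1$. With $M\le D^2$ this reduces to $D^{2-2r/D}\le 1/2$, i.e.\ $r/D\ge 1+\ln 2/(2\ln D)$. This numerical verification is the step I expect to be the main obstacle. Since $q(1-\varepsilon)-1<q(1-\varepsilon)$, we get $r\ge (D-1)/(1-\varepsilon)$, so it suffices to show
\[
  \frac{D-1}{D(1-\varepsilon)}\;\ge\;1+\frac{\ln 2}{2\ln D}.
\]
The gap $(D-1)/(D(1-\varepsilon))-1$ behaves like $\varepsilon=\sqrt{\ln D/D}$, which dominates both $1/D$ and $\ln 2/(2\ln D)$ in the relevant range. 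I would confirm the inequality directly at $D=10$ and then show that the difference of the two sides is increasing for $D\ge 10$.

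If the constants turn out to be too tight at small $D$, I would first use the slack from the ceiling and from the exact factor $q/(q(1-\varepsilon)-1)$, which exceeds $1/(1-\varepsilon)$. Failing that, I would replace Hoeffding's bound by the sharper Chernoff bound in relative-entropy form, $\exp(-r\,D_{\mathrm{KL}})$. Either refinement should give enough room to absorb the factor $2$ lost in the expurgation step.
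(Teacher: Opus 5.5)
There is a genuine gap at the step you flagged, and the diagnosis is reversed: the trouble is at large $D$, not small $D$. Since $\varepsilon=\sqrt{\ln D/D}=o(1/\ln D)$, the quantity $\frac{D-1}{D(1-\varepsilon)}-1\approx\varepsilon$ eventually becomes \emph{smaller} than $\frac{\ln 2}{2\ln D}$, not larger. Your sufficient inequality is already false at $D=6000$, and at $D=10^4$ its two sides are about $1.0312$ and $1.0376$. So the difference is not increasing, and checking $D=10$ then extrapolating cannot work.

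The slack you planned to fall back on does not rescue the argument either. The factor $q/(q(1-\varepsilon)-1)$ tends to $1/(1-\varepsilon)$ as $q$ grows, and the ceiling adds at most $1/D$ to $r/D$. For example, with $q=256$, $D=10^4$ and $M=D^2$, your recipe gives $r=10354$ and $MD^{-2r/D}=D^{2-2r/D}\approx 0.52$, so $2MD^{-2r/D}>1$.

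This failure is intrinsic to expurgation, not to your choice of $2M$ samples. Drawing $N$ words and deleting one word from each bad pair leaves at most about $\max_N\bigl(N-\binom{N}{2}P\bigr)\approx 1/(2P)$ words. So any alteration argument built on the Hoeffding estimate $P=D^{-2r/D}$ needs $MP\le 1/2$, and the stated bound has no room for that factor $2$ when $q$ is large. The relative-entropy Chernoff bound may well recover it, since for large $q$ it is far stronger than $2\varepsilon^2$. But that is a different estimate, it has to hold uniformly in $q$, and you have not carried it out.

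The paper only cites this lemma, but there is a simple repair. Keep your tail estimate and replace expurgation by the Gilbert--Varshamov argument, which loses no factor $2$. For independent uniform $u,v\in\mathbb{F}_q^{r}$ you have $V_q(r,D-1)/q^{r}=\Pr[d_H(u,v)\le D-1]\le D^{-2r/D}$. A maximal code in $\mathbb{F}_q^{r}$ with minimum distance at least $D$ has at least $q^{r}/V_q(r,D-1)\ge D^{2r/D}$ codewords. Since $M\le D^2$, it suffices that $r\ge D$. This holds because $r\ge\frac{D-1}{1-\varepsilon}$, and $\frac{D-1}{1-\varepsilon}\ge D$ is equivalent to $D\varepsilon=\sqrt{D\ln D}\ge 1$, so no delicate numerics are needed.

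One further correction concerns rounding. Rounding $r$ up proves only $N_H(M,D)\le\lceil x\rceil$ with $x=\frac{q(D-1)}{q(1-\varepsilon)-1}$. Because $N_H(M,D)$ is an integer, this is weaker than the claimed $N_H(M,D)\le x$. Instead take $r=\lfloor x\rfloor$ and $\varepsilon'=1-\frac1q-\frac{D-1}{r}$. The Gilbert--Varshamov step then needs $r\varepsilon'^2\ge\ln D$, and a short computation shows this follows from $\ln D+\varepsilon\ge 2$, which holds for $D\ge 10$.
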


        \begin{corollary}
        \label{cor:locally-asymptotic}
        Let $f : \mathbb{F}_q^k \to \operatorname{Im}(f)$ admit a $(d_f-1)$-pair-separating $\lambda$-colouring, let $d_f > d_d \ge 1$, and let $C$ be a systematic linear $[n,k]_q$ code with $d_p(C) \ge d_d + 1$. In addition, suppose that
        \[
            \lambda > q, \qquad d_f - d_d \ge 2, \qquad d_f - d_d - 1 \ge 10,
            \qquad \lambda \le (d_f - d_d - 1)^2 .
        \]
        Then
        \[
            r_p^f(k,d_d,d_f)
            \;\le\; n - k + N_H\bigl(\lambda,\, d_f - d_d - 1\bigr)
            \;\le\; n - k
              + \frac{q\,(d_f - d_d - 2)}
                     {q\Bigl(1 - \sqrt{\dfrac{\ln(d_f - d_d - 1)}{d_f - d_d - 1}}\Bigr) - 1}.
        \]
        \end{corollary}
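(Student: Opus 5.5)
The plan is to repeat the construction behind Proposition~\ref{lem:pair-locally-binary-fcspcdp}, with the two constant parity blocks replaced by a $\lambda$-word Hamming code. The junction gain of Lemma~\ref{lem:junction} will be taken from a structural property of that code, and Lemma~\ref{lem:nh-asymptotic} will then bound its length.

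Put $L=d_f-d_d-1\ge 10$ and let $\operatorname{Col}_f$ be the given $(d_f-1)$-pair-separating $\lambda$-colouring.

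\textbf{Step 1: choose the colour code.} Take a Hamming-metric code $\{\mathbf{h}^{(1)},\dots,\mathbf{h}^{(\lambda)}\}$ of length $N_H(\lambda,L-1)$ with minimum Hamming distance $L-1$.

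We want every pair of distinct colour words to satisfy two conditions:
\begin{itemize}
\item[(a)] they differ in both the first and the last coordinate;
\item[(b)] their symbol-pair distance is at least $L$.
\end{itemize}
For (b), apply Lemma~\ref{lem:hamming-pair-distance}. If the Hamming distance is $<$ the length, the pair distance is at least $(L-1)+1=L$. If the words differ in every coordinate, the pair distance equals the length, which is at least $L$ once the length is at least $L$. This is where $\lambda>q$ enters: it forces $N_H(\lambda,L-1)\ge L$, since a code of length $L-1$ and distance $L-1$ has at most $q$ words.

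\textbf{Step 2: encode.} Set $C_f(x)=(c_x,\mathbf{h}^{(\operatorname{Col}_f(x))})$.

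For data protection, Lemma~\ref{lem:monotone} gives a distance of at least $d_p(C)\ge d_d+1$ for all distinct messages.

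For function protection, split into the same two cases as in Theorem~\ref{the:locally-red-boun}.
\begin{itemize}
\item If $d_p(x,y)\ge d_f$, then systematicity and Lemma~\ref{lem:monotone} give the bound directly.
\item If $d_p(x,y)\le d_f-1$, the colours differ. Lemma~\ref{lem:junction} (valid by (a)) together with (b) gives $d_p\ge (d_d+1)+L=d_f$.
\end{itemize}
So the redundancy is $n-k+N_H(\lambda,L-1)$.

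\textbf{Step 3: apply Lemma~\ref{lem:nh-asymptotic}.} The hypotheses $\lambda\le L^2$ and $L\ge 10$ let us bound $N_H(\lambda,L)$ by $\frac{q(L-1)}{q(1-\sqrt{\ln L/L})-1}$, which is exactly the displayed second inequality with $L-1=d_f-d_d-2$.

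This mismatch is the main obstacle. The stated middle term is $N_H(\lambda,d_f-d_d-1)=N_H(\lambda,L)$, not $N_H(\lambda,L-1)$.

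To reconcile it, I would take a code of Hamming distance $L$ instead. Then Lemma~\ref{lem:hamming-pair-distance} gives pair distance at least $L+1$ (or the length, if the words differ everywhere). That surplus of one lets us drop one junction from Lemma~\ref{lem:junction}, since Lemma~\ref{lem:symbol-pair-ineq}'s $-1$ loss is then absorbed. The resulting bound is $d_p(C_f(x),C_f(y))\ge (d_d+1)+(L+1)-1=d_f$, using only Lemma~\ref{lem:symbol-pair-ineq} and no junction condition. This avoids property (a) entirely.

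I would therefore actually take this simpler route: a code of Hamming distance $L$, with $\lambda>q$ guaranteeing length greater than $L$ in the all-coordinates case. In the complementary case, where the words differ in every coordinate, the pair distance equals the length, which is at least $L+1$. The monotonicity $N_H(\lambda,L-1)\le N_H(\lambda,L)$ is then not needed.
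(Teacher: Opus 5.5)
Your final route is correct and is essentially the derivation the paper intends; the paper gives no separate proof, but the corollary is meant to follow by running the construction of Theorem~\ref{the:locally-red-boun} with $d_p(C)\ge d_d+1$, converting via $N_p(\lambda,L+1)\le N_H(\lambda,L)$ for $\lambda>q$ (Lemma~\ref{lem:hamming-pair-distance} plus the Singleton observation, as in Corollary~\ref{cor:coded-drm-hamming}), and finishing with Lemma~\ref{lem:nh-asymptotic} at $D=L$. Drop the first attempt entirely rather than keeping it alongside, because property (a) cannot hold when $\lambda>q$: by pigeonhole, two of the $\lambda$ colour words must share their first coordinate.
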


\subsection{The Symbol-Pair Weight Function}

    The \emph{symbol-pair weight function} is $f : \mathbb{F}_q^{k} \to \operatorname{Im}(f)$, $f(x) = w_p(x)$. For $k \ge 3$ its image is $\{0,2,3,\dots,k\}$: no vector has pair weight $1$, since a single non-zero coordinate already covers two cyclic pair positions.

    \begin{lemma}
    \label{lem:pair-weight-fcc}
    Let $f(x) = w_p(x)$ with $k \ge 3$, let $d_d = 2t_d+1$ and $d_f = 2t_f+1$ be odd with
    $t_f \ge t_d$, and let $C$ be a systematic linear $[n,k]_q$ code with
    $d_p(C) \ge d_d$. Then
    \[
        r_p^f(k,d_d,d_f) \;\le\; n-k+N_p\bigl(d_f,\ d_f-d_d+1\bigr).
    \]
    \end{lemma}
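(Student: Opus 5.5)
The plan is to follow the template of Theorem~\ref{the:locally-red-boun}: a systematic encoding $C_f(x)=(c_x,p_x)$ in which $c_x=xG$ comes from the given code $C$ and the parity block $p_x$ depends only on a colouring of the message. We colour by the value $w_p(x)$ modulo $d_f$. The image $\{0,2,3,\dots,k\}$ has size at most $k$, but only pairs at pair distance at most $d_f-1$ need the extra separation, so a colouring with $d_f$ colours should suffice. This matches the $N_p(d_f,\cdot)$ term in the statement. Concretely, take a symbol-pair code $C_0=\{\mathbf{c}_0^{(0)},\dots,\mathbf{c}_0^{(d_f-1)}\}$ of length $N_p(d_f,d_f-d_d+1)$ with minimum pair distance $d_f-d_d+1$, and set
\[
p_x=\mathbf{c}_0^{(w_p(x) \bmod d_f)}.
\]

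The key step is to show that $x\mapsto w_p(x)\bmod d_f$ is a $(d_f-1)$-pair-separating $d_f$-colouring of $f$ in the sense of Definition~\ref{def:pair-colouring}. The needed fact is a Lipschitz property of pair weight: $|w_p(x)-w_p(y)|\le d_p(x,y)$. It follows because the pair vectors satisfy $w_H(\pi(x))-w_H(\pi(y))\le d_H(\pi(x),\pi(y))=d_p(x,y)$, using $\pi(x)-\pi(y)=\pi(x-y)$ coordinatewise and the triangle inequality for Hamming weight.

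Now suppose $d_p(x,y)\le d_f-1$ and $w_p(x)\ne w_p(y)$. Then $0<|w_p(x)-w_p(y)|\le d_f-1<d_f$, so the two weights are distinct modulo $d_f$ and the colours differ.

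With the colouring in hand, the verification repeats the proof of Theorem~\ref{the:locally-red-boun} with $\lambda=d_f$.
\begin{itemize}
\item[] \emph{Data protection.} Lemma~\ref{lem:monotone} gives $d_p(C_f(x),C_f(y))\ge d_p(c_x,c_y)\ge d_d$.
\item[] \emph{Function protection, far pairs.} If $f(x)\neq f(y)$ and $d_p(x,y)\ge d_f$, then systematicity of $C$ and Lemma~\ref{lem:monotone} give $d_p(C_f(x),C_f(y))\ge d_p(x,y)\ge d_f$.
\item[] \emph{Function protection, near pairs.} Otherwise the colours differ, so $d_p(p_x,p_y)\ge d_f-d_d+1$. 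Lemma~\ref{lem:symbol-pair-ineq} then gives $d_p(C_f(x),C_f(y))\ge d_d+(d_f-d_d+1)-1=d_f$.
\end{itemize}
The redundancy is $n-k+N_p(d_f,d_f-d_d+1)$. Alternatively, one can simply invoke Theorem~\ref{the:locally-red-boun} once the colouring is established.

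The main obstacle is the Lipschitz inequality for $w_p$. I would state it as a one-line auxiliary claim, checked positionwise: if a pair position is nonzero in $x$ and zero in $y$, then $x$ and $y$ differ at that position. Hence the nonzero pair positions of $x$ lie inside the union of the nonzero pair positions of $y$ and the positions where $x$ and $y$ differ.

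The hypotheses that $d_d,d_f$ are odd and $t_f\ge t_d$ are not needed beyond guaranteeing $d_f\ge d_d$. The case $d_f=d_d$ is harmless: the parity codewords then only need distance $1$, and the bound still holds.
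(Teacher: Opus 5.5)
Your proposal is correct and is essentially the paper's proof. Both use the same encoding $C_f(x)=(c_x,p_x)$, where $p_x$ is the codeword indexed by $w_p(x)\bmod d_f$ in a parity code of $d_f$ words at pair distance $d_f-d_d+1$, verified with Lemma~\ref{lem:monotone} and Lemma~\ref{lem:symbol-pair-ineq}. The only difference is that the paper splits cases on $|w_p(x)-w_p(y)|$ (below $d_f$ versus at least $d_f$) while you split on $d_p(x,y)$; both rest on the inequality $|w_p(x)-w_p(y)|\le d_p(x,y)$, which you justify explicitly and the paper uses without proof.
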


    \begin{proof}
    Let $G$ be a systematic generator matrix of $C$, $c_x = xG$. Let $C'$ be a symbol-pair code with $d_f$ codewords $c'_0,\dots,c'_{d_f-1}$, minimum pair-distance $d_f - d_d + 1$, and length $N_p(d_f, d_f-d_d+1)$. Define
    \[
        C_f(x) = (c_x, p_x), \qquad p_x = c'_{\,f(x) \bmod d_f} .
    \]
    
    \emph{Data protection.} For $x \neq y$, Lemma~\ref{lem:monotone} gives $d_p(C_f(x),C_f(y)) \ge d_p(c_x,c_y) \ge d_d$.
    
    \emph{Function protection.} Let $f(x) \neq f(y)$.
    
    \textbf{Case 1: $0 < |f(x)-f(y)| \le d_f - 1$.} Then $f(x) \not\equiv f(y) \pmod{d_f}$, so $p_x \neq p_y$ and $d_p(p_x,p_y) \ge d_f - d_d + 1$. By Lemma~\ref{lem:symbol-pair-ineq},
    \[
        d_p\bigl(C_f(x),C_f(y)\bigr) \;\ge\; d_p(c_x,c_y) + d_p(p_x,p_y) - 1
        \;\ge\; d_d + (d_f - d_d + 1) - 1 \;=\; d_f .
    \]
    
    \textbf{Case 2: $|w_p(x) - w_p(y)| \ge d_f$.} 
    $d_p(x,y) \ge |w_p(x) - w_p(y)| \geq d_f$. Since $C$ is systematic, $c_x = (x, z_x)$, and two applications of
    Lemma~\ref{lem:monotone} give
    $d_p(C_f(x),C_f(y)) \ge d_p(c_x,c_y) \ge d_p(x,y) \ge d_f$.
    
    Hence $C_f$ is an $(f : d_d, d_f)_p$-FCSPC-DP of redundancy
    $n - k + N_p(d_f, d_f-d_d+1)$.
    \end{proof}

\section{Extension of Classical Bounds to FCSPC}\label{sec:bounds}
    This section extends two classical bounds to the symbol-pair setting with data protection. The Plotkin-type bounds of Section~\ref{sec:plotkin} exploit the total pairwise separation forced by the two distance requirements and the sphere-packing bounds of Section~\ref{sec:sphere} exploit the disjointness of the decoding regions associated with distinct function values. 
    
    \subsection{Plotkin-Type Bounds}
    \label{sec:plotkin}

    The Plotkin bound for the $q$-ary FCBSC presented in \cite{singh2025function} can be utilised to derive the Plotkin bound for the FCSPC in the $q$-ary setting by considering the case where $b = 2$.

    \begin{lemma}[{\cite{singh2025function}}]
    \label{lem:plotkin-b}
    For any matrix $\boldsymbol{B} \in \mathbb{N}_0^{M \times M}$, writing $m = M \bmod q^{b}$,
    \[
    N_b(\boldsymbol{B}) \;\ge\;
    \frac{2q^{b}}{M^{2}(q^{b}-1) - m(q^{b}-m)}
    \sum_{i<j} [\boldsymbol{B}]_{ij}.
    \]
    \end{lemma}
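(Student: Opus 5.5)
The bound is a Plotkin-type averaging argument for $b$-symbol read vectors, and I will prove it by double counting. Let $N=N_b(\boldsymbol{B})$ and let $\{p_1,\dots,p_M\}\subseteq\mathbb{F}_q^N$ be a $\boldsymbol{B}$-irregular $b$-distance code, so $d_b(p_i,p_j)\ge[\boldsymbol{B}]_{ij}$. Here $d_b$ is the Hamming distance between the $b$-symbol read vectors $\pi_b(p_i)\in(\mathbb{F}_q^b)^N$, whose $\ell$-th coordinate is $(p_{i,\ell},\dots,p_{i,\ell+b-1})$ with indices taken cyclically. Consider
\[
S=\sum_{i<j} d_b(p_i,p_j).
\]
The hypothesis gives immediately $S\ge\sum_{i<j}[\boldsymbol{B}]_{ij}$. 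The remaining work is an upper bound on $S$ in terms of $N$.

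\textbf{Column-wise upper bound.} I will write $S=\sum_{\ell=0}^{N-1}S_\ell$, where $S_\ell$ counts the pairs $i<j$ whose $\ell$-th $b$-symbols differ. Fix $\ell$ and let $n_a$ be the number of indices $i$ with $\pi_b(p_i)_\ell=a$, for $a\in\mathbb{F}_q^b$. Then
\[
S_\ell=\binom{M}{2}-\sum_a\binom{n_a}{2}=\tfrac12\Bigl(M^2-\sum_a n_a^2\Bigr).
\]
This has to be maximised subject to $\sum_a n_a=M$ over $Q=q^b$ classes with integer $n_a$. By convexity, the maximum is attained when the $n_a$ are as equal as possible. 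Writing $M=sQ+m$ with $0\le m<Q$, this means $m$ classes have size $s+1$ and $Q-m$ have size $s$, so
\[
\sum_a n_a^2=Q s^2+2ms+m=\frac{M^2+m(Q-m)}{Q}.
\]
Hence
\[
S_\ell\le\frac{M^2(Q-1)-m(Q-m)}{2Q}.
\]

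\textbf{Combining.} Summing over the $N$ coordinates gives
\[
\sum_{i<j}[\boldsymbol{B}]_{ij}\le N\cdot\frac{M^2(q^b-1)-m(q^b-m)}{2q^b},
\]
and rearranging yields exactly the claimed inequality.

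The main obstacle is justifying the balanced-partition maximum for integer $n_a$, which I will handle by the standard exchange argument. If $n_a\ge n_{a'}+2$, moving one element from class $a$ to class $a'$ strictly decreases $\sum n_a^2$. So the minimiser is balanced, and the closed form follows by direct computation.

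This step does not use the fact that $b$-symbol vectors form only a subset of $(\mathbb{F}_q^b)^N$, because the bound is an upper bound that is valid for arbitrary vectors over an alphabet of size $q^b$. One degenerate case needs a separate remark: if the denominator vanishes (when $M\le1$), the sum is empty and the statement is vacuous.
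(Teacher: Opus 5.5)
Your proof is correct and is essentially the standard Plotkin double-counting argument behind this lemma. The paper does not prove it but cites \cite{singh2025function}, and it uses the same column-wise count in the proof of Theorem~\ref{thm:plotkin}. You bound $\sum_{i<j} d_b(p_i,p_j)$ below by $\sum_{i<j}[\boldsymbol{B}]_{ij}$ and above coordinate by coordinate via the balanced partition of the $M$ codewords into $q^b$ classes, and your exchange argument for that partition is exactly what produces the $m(q^b-m)$ correction term.
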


    \begin{lemma}
    \label{lem:plotkin-qary}
       For any matrix $\boldsymbol{B} \in \mathbb{N}_0^{M \times M}$, and for irregular-pair distance codes over $\mathbb{F}_q$, we have
       \[
            N_p(\boldsymbol{B}) \ge \frac{2q^2}{M^2(q^2 - 1) - m(q^2 - m)} \sum_{i,j, \, i<j} [\boldsymbol{B}]_{ij},
       \]
       where $m = M \pmod{q^2}$
    \end{lemma}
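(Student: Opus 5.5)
The plan is to obtain the statement as the specialisation $b=2$ of Lemma~\ref{lem:plotkin-b}, and to sketch the underlying Plotkin averaging argument directly, so that the identification of $N_p$ with $N_2$ is transparent. The key observation is that the symbol-pair distance of two vectors of length $r$ over $\mathbb{F}_q$ is, by Definition~\ref{def:sym-pair}, the Hamming distance between their symbol-pair vectors $\pi(\cdot)$, which are words of length $r$ over the alphabet $\mathbb{F}_q^2$ of size $q^2$. Hence an irregular-pair-distance $\boldsymbol{B}_p$-code of length $r$ is, after applying $\pi$, a code of length $r$ over an alphabet of size $q^2$ whose pairwise Hamming distances dominate the entries of $\boldsymbol{B}$. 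This is exactly the $b$-symbol setting with $b=2$, so $N_p(\boldsymbol{B})=N_2(\boldsymbol{B})$, and Lemma~\ref{lem:plotkin-b} with $q^b=q^2$ and $m=M \bmod q^2$ gives the claim.

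For a self-contained argument I would proceed as follows. Let $\{p_1,\dots,p_M\}$ be a $\boldsymbol{B}_p$-code of length $r=N_p(\boldsymbol{B})$. First, lower-bound the total pairwise distance:
\[
\sum_{i<j} d_p(p_i,p_j)\;\ge\;\sum_{i<j}[\boldsymbol{B}]_{ij}.
\]
Second, upper-bound the same sum column by column. For each pair position $\ell\in\{0,\dots,r-1\}$, let $n_{\ell,a}$ be the number of codewords whose $\ell$-th pair equals $a\in\mathbb{F}_q^2$. The number of unordered codeword pairs that disagree at position $\ell$ is
\[
\tfrac12\Bigl(M^2-\sum_a n_{\ell,a}^2\Bigr).
\]
Third, minimise $\sum_a n_{\ell,a}^2$ subject to $\sum_a n_{\ell,a}=M$, with at most $q^2$ nonnegative integer parts. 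The minimum is attained by a balanced split: $m$ parts equal to $\lceil M/q^2\rceil$ and $q^2-m$ parts equal to $\lfloor M/q^2\rfloor$.

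The main obstacle is the arithmetic in this third step. Writing $M=sq^2+m$, one computes
\[
\sum_a n_{\ell,a}^2 \;\ge\; \frac{M^2+m(q^2-m)}{q^2},
\]
so each position contributes at most $\bigl(M^2(q^2-1)-m(q^2-m)\bigr)/(2q^2)$ disagreeing pairs. Summing over the $r$ positions and comparing with the lower bound then yields
\[
r\;\ge\;\frac{2q^2}{M^2(q^2-1)-m(q^2-m)}\sum_{i<j}[\boldsymbol{B}]_{ij}.
\]

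One caveat is that not every word over $\mathbb{F}_q^2$ is the symbol-pair vector of some vector in $\mathbb{F}_q^r$, since consecutive pairs must overlap consistently. However, the counting bound only uses the alphabet size $q^2$, so this restriction can only make the true bound stronger and does not affect validity.
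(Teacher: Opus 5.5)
Your proposal is correct and follows the paper's route: the paper gives no separate argument and obtains this lemma as the $b=2$ case of Lemma~\ref{lem:plotkin-b}, using $N_p=N_2$ because the $2$-symbol distance is the pair distance. Your additional column-counting argument is a correct self-contained verification that the paper does not spell out: you write $M=sq^2+m$, use the balanced-split bound $\sum_a n_{\ell,a}^2\ge (M^2+m(q^2-m))/q^2$, and sum over the $r$ cyclic pair positions.
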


    The Lemma~\ref{lem:plotkin-qary} also holds for the JPDRM (Definition~\ref{def:J-PDM}), which is defined for FCSPC with data protection.

    \begin{corollary}
    \label{cor:plotkin-jpdm}
    For any $f : \mathbb{F}_q^k \to \mathrm{Im}(f)$ and any ordering $x_1,\dots,x_{q^k}$ of $\mathbb{F}_q^{k}$, writing $\boldsymbol{D}^{(1)} = \boldsymbol{D}^{(1)}_{f}(d_d,d_f;\,x_1,\dots,x_{q^k})$ and $m = q^{k} \bmod q^{2}$,
    \[
    r_p^f(k,d_d,d_f) \;\ge\;
    \frac{2q^{2}}{q^{2k}(q^{2}-1) - m(q^{2}-m)}
    \sum_{i<j} \bigl[\boldsymbol{D}^{(1)}\bigr]_{ij}.
    \]
    \end{corollary}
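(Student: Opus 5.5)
The plan is to chain the lower bound of Theorem~\ref{thm:jpdm-sandwich} with the Plotkin-type estimate of Lemma~\ref{lem:plotkin-qary}, applied to the joint pair-distance matrix with $M=q^k$.

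First, I fix the ordering $x_1,\dots,x_{q^k}$ of $\mathbb{F}_q^k$ and write $\boldsymbol{D}^{(1)}=\boldsymbol{D}^{(1)}_f(d_d,d_f;\,x_1,\dots,x_{q^k})$. By Theorem~\ref{thm:jpdm-sandwich},
\[
r_p^f(k,d_d,d_f)\;\ge\;N_p\bigl(\boldsymbol{D}^{(1)}\bigr).
\]
The constant-$f$ case needs no separate treatment. In that case all cross-class entries are absent, and the bound still reads $r\ge N_p(\boldsymbol{D}^{(1)})$ from the first half of that proof, which never used non-constancy. In any event the right-hand side we are aiming for is only a consequence of $N_p(\boldsymbol{D}^{(1)})$, so the chain goes through regardless.

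Second, $\boldsymbol{D}^{(1)}$ is a matrix in $\mathbb{N}_0^{q^k\times q^k}$: its entries are positive parts of integers, and the diagonal entries are $0$. So Lemma~\ref{lem:plotkin-qary} applies with $M=q^k$ and $m=q^k \bmod q^2$, giving
\[
N_p\bigl(\boldsymbol{D}^{(1)}\bigr)\;\ge\;\frac{2q^2}{q^{2k}(q^2-1)-m(q^2-m)}\sum_{i<j}\bigl[\boldsymbol{D}^{(1)}\bigr]_{ij}.
\]
Substituting $M^2=q^{2k}$ and combining this with the first inequality yields the statement.

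The only point requiring care is that the denominator is positive, so that the inequality direction is preserved. This follows from $0\le m<q^2$, which gives $m(q^2-m)\le q^4/4$, together with $M^2(q^2-1)\ge m^2(q^2-1)\ge m(q^2-m)$ for $M\ge m$. If $k\ge 2$ then $m=0$ and the denominator simplifies to $q^{2k}(q^2-1)$. I expect no real obstacle: the content lies entirely in Lemma~\ref{lem:plotkin-qary}, which is taken as given via the $b=2$ case of Lemma~\ref{lem:plotkin-b}.
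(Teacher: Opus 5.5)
Your proposal is correct and matches the paper's argument, which the paper leaves implicit: the lower bound $r_p^f(k,d_d,d_f)\ge N_p(\boldsymbol{D}^{(1)})$ from Theorem~\ref{thm:jpdm-sandwich}, chained with Lemma~\ref{lem:plotkin-qary} at $M=q^k$. Your denominator discussion is not needed, since you never divide by it yourself and the lemma is used as a black box; your remark that the lower half of Theorem~\ref{thm:jpdm-sandwich} does not use non-constancy of $f$ is correct and usefully repairs the paper's claim that all quantities vanish for constant $f$, which fails once $d_d$ enters $\boldsymbol{D}^{(1)}$.
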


    The bound in Lemma~\ref{lem:plotkin-qary} depends on the distance requirement between each pair of codewords. The next bound on $(f, d_d, d_f)$-FCSPC reduces this dependency to only the sizes of all the preimage classes and the code parameters $d_d$ and $d_f$. 


    \begin{theorem}
    \label{thm:plotkin}
    Let $f : \mathbb{F}_q^{k} \to \mathrm{Im}(f)$ with $|\mathrm{Im}(f)| \ge 2$, and set
    \[
    \Phi_f \;=\; \sum_{\alpha \in \mathrm{Im}(f)} \bigl|f^{-1}(\alpha)\bigr|^{2}.
    \]
    Then for any $d_d \le d_f$,
    \[
    r_p^f(k,d_d,d_f) \;\ge\;
    \frac{(d_f-d_d)\bigl(q^{2k}-\Phi_f\bigr) + d_d\,q^{k}(q^{k}-1)}
         {q^{2k-2}(q^{2}-1)} \;-\; k .
    \]
\end{theorem}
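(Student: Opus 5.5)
The plan is to apply the Plotkin-type bound of Lemma~\ref{lem:plotkin-qary} to the \emph{entire} codebook of an optimal $(f:d_d,d_f)_p$-FCSPC-DP, and not to the redundancy parts via the J-PDM. This is what should make the bound depend on $f$ only through its fibre sizes.

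First I fix an optimal encoding $C_f:\mathbb{F}_q^k\to\mathbb{F}_q^{n}$ with $n=k+r$ and $r=r_p^f(k,d_d,d_f)$, and list its $M=q^k$ codewords $C_f(x_1),\dots,C_f(x_{q^k})$. Next I define the $M\times M$ matrix $\boldsymbol{B}$ with zero diagonal by
\[
[\boldsymbol{B}]_{ij}=
\begin{cases}
d_f, & \text{if } f(x_i)\neq f(x_j),\\
d_d, & \text{if } i\neq j \text{ and } f(x_i)=f(x_j).
\end{cases}
\]
By Definition~\ref{def:FCSPC-DP}, the codewords $C_f(x_i)$ form a $\boldsymbol{B}$-irregular-pair-distance code of length $n$. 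Hence $n\ge N_p(\boldsymbol{B})$, and Lemma~\ref{lem:plotkin-qary} gives a lower bound on $n$ in terms of $\sum_{i<j}[\boldsymbol{B}]_{ij}$.

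The combinatorial step is to evaluate this sum. There are $q^k(q^k-1)/2$ unordered pairs of distinct messages. The number of ordered pairs with equal function value is $\Phi_f$, so the number of unordered cross-class pairs is $(q^{2k}-\Phi_f)/2$. Every pair contributes at least $d_d$, and each cross-class pair contributes an extra $d_f-d_d$. This gives
\[
\sum_{i<j}[\boldsymbol{B}]_{ij}=\tfrac12\bigl(d_d\,q^k(q^k-1)+(d_f-d_d)(q^{2k}-\Phi_f)\bigr).
\]

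Substituting $M=q^k$ and the residue $m=q^k \bmod q^2$ into Lemma~\ref{lem:plotkin-qary}, the case $k\ge2$ has $m=0$. The denominator is then $q^{2k}(q^2-1)$, and
\[
n\ \ge\ \frac{2q^2}{q^{2k}(q^2-1)}\cdot\tfrac12\bigl(\cdots\bigr)=\frac{(d_f-d_d)(q^{2k}-\Phi_f)+d_d\,q^k(q^k-1)}{q^{2k-2}(q^2-1)}.
\]
Subtracting $k$ then yields the claim.

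The one point needing care is the residue when $k=1$. There $m=q$, and the denominator becomes $q^2(q^2-1)-q(q^2-q)$, which is strictly smaller than $q^{2}(q^2-1)$. The resulting lower bound is therefore at least as large as the stated one, so the stated inequality still holds; I would record this as a short monotonicity remark.

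I expect no real obstacle beyond checking that Lemma~\ref{lem:plotkin-qary} applies to an arbitrary (nonlinear) $\boldsymbol{B}$-code over the full length $n$, rather than only to redundancy vectors. Its proof is the standard column-averaging over the $n$ pair coordinates, each taking at most $q^2$ values, so it applies verbatim.
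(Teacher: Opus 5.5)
Your proposal is correct and is essentially the paper's argument. The paper performs the same Plotkin double count directly: it lower-bounds the total pairwise pair-distance of the codebook via the fibre sizes (through $\Phi_f$) and upper-bounds it by $q^{2k-2}(q^2-1)$ per pair-coordinate, whereas you delegate the coordinatewise step to Lemma~\ref{lem:plotkin-qary} applied to the whole codebook with the two-valued requirement matrix. The two routes give identical bounds for $k\ge 2$, and your residue remark correctly handles $k=1$, where the lemma's bound is only stronger.
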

 
    \begin{proof}
    Let ${C_f}$ be an $(f: k,d_d,d_f)$-FCPSC-DP over $\mathbb{F}_q$ with $q^k$ codewords, each of length $n = k + r$.
    
    For a codeword $c\in C$ with corresponding message $x_c = {C_f}^{-1}(c)$ and function value $f_c = f(x_c)$, define
    \[
         S(c) = \sum_{\substack{c'\in C\\c'\neq c}} d_p(c,c').
    \]
    
    By the definition of FCPSC with data protection, $d_p(C_f)\geq d_d$  and  $d_p^f(C_f)\geq d_f$ whenever $f_{c'} \neq f_c$.  Therefore,
    \begin{equation}\label{eq:Sx}
     S(c)
    \;\geq \;
     \bigl(\rvert f^{-1}{(f_c)}\rvert - 1\bigr)d_d
    + \bigl(q^k - \rvert f^{-1}({f_c})\rvert\bigr)d_f.
    \end{equation}

    \noindent
    Summing~\eqref{eq:Sx} over all $c\in C$ and grouping by preimage class, we get the following lower bound on the pair-wise distance between codewords 
    \begin{align}
    \sum_{\substack{(c_1,c_2)\in C_f^2,\\c_1\neq c_2}}d_p(c_1,c_2) = \sum_{c\in C_f} S(c)
    &\;\geq \; \sum_{\alpha\in\Img{f}} \rvert f^{-1}({\alpha})\rvert \Bigl[\bigl\rvert(f^{-1}({\alpha})\rvert-1\bigr)d_d
        + \bigl(q^k - \rvert f^{-1}({\alpha})\rvert\bigr)d_f\Bigr]
    \notag\\
    &= d_d\sum_{\alpha\in\Img{f}}\bigl(\rvert f^{-1}({\alpha})\rvert^2 - \rvert f^{-1}({\alpha})\rvert\bigr)
    + d_f\sum_{\alpha\in\Img{f}}\rvert f^{-1}({\alpha})\rvert\bigl(q^k - \rvert f^{-1}({\alpha})\rvert\bigr)
    \notag\\
    &= d_d\bigl(\Phi_f - q^k\bigr)
    + d_f\bigl(q^{2k} - \Phi_f\bigr)\label{eq:sum-inv}
    \\
    &= (d_f-d_d)\bigl(q^{2k}-\Phi_f\bigr) - d_d\,q^k + d_f\cdot q^{2k} - d_f\,q^{2k}
    \notag\\
    &= (d_f-d_d)\bigl(q^{2k}-\Phi_f\bigr) + d_d(q^{2k}-q^k).
    \label{eq:lower}
    \end{align}
    where Equation~\ref{eq:sum-inv} follows from $\sum_{\alpha\in\Img{f}}\rvert f^{-1}({\alpha})\rvert = q^k$ and $\Phi_f = \sum_{\alpha\in\Img{f}} \rvert f^{-1}({\alpha})\rvert^2$.

    \medskip
    \noindent
    To calculate the upper bound on the sum of pairwise distances, consider a fixed coordinate position $i\in[n]$, and let $n_{ (a,b)}$ be the number of
    codewords whose $i$-th coordinate equals $(a,b)\in(\mathbb{F}_q^k)^2$.  The contribution of position~$i$ to the total pairwise distance is
    $\sum_{(a,b)\in(\mathbb{F}_q^k)^2} n_{(a,b)}^{(i)}(q^{k} - n_{(a,b)}^{(i)}) = q^{2k} - \sum_{(a,b)} (n_{(a,b)}^{(i)})^2$, which is maximized when $n_{(a,b)}^{(i)} = \frac{q^k}{q^2} = q^{k-2}$ (when the codewords are distributed as evenly as possible across the $q^2$
    pair values) for all $(a,b)$, giving $q^{2k} - q^2\cdot q^{2(k-2)} = q^{2(k-1)}(q^2-1)$.
    Summing over all $n$ coordinates:

    \begin{equation}\label{eq:upper-sp}
    \sum_{(x, y) \in {C}_f^2,\, x \neq y} d_p(x, y)
    \;\leq\; n \cdot q^{2k - 2} (q^2 - 1).
    \end{equation}
    \medskip

    From Equations \eqref{eq:lower} and \eqref{eq:upper-sp} we get,
    \[
    (d_f-d_d)\bigl(q^{2k}-\Phi_f\bigr) + d_d\,q^k(q^k-1)
    \;\leq\;
    n \cdot q^{2k - 2} (q^2 - 1).
    \]
    Substituting $n = k+r$ and rearranging:
    \[
        r
    \;\geq \;
    \frac{(d_f-d_d)\bigl(q^{2k}-\Phi_f\bigr) + d_d\,q^k(q^k-1)}
       { q^{2k - 2} (q^2 - 1)} - k.
    \]
    \end{proof}

    \noindent
    Setting $d_d = 1$ in Theorem~\ref{thm:plotkin} specialises the bound to the original FCPSC setting, where data protection is not necessary, and the systematic constraint already enforces $d_d \geq 1$.

    \begin{corollary}
    \label{cor:plotkin-fcspc}
    For any $f : \mathbb{F}_q^{k} \to \mathrm{Im}(f)$ with $|\mathrm{Im}(f)| \ge 2$ and any $d_f \ge 2$,
    \[
    r_p^f(k, d_f) \;\ge\;
    \frac{(d_f-2)\bigl(q^{2k}-\Phi_f\bigr) + 2\,q^{k}(q^{k}-1)}
         {q^{2k-2}(q^{2}-1)} \;-\; k .
    \]
    \end{corollary}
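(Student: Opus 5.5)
The plan is to deduce the corollary from Theorem~\ref{thm:plotkin} with $d_d=2$, rather than the $d_d=1$ mentioned in the text. The point is that the systematic constraint already forces a data distance of $2$ in the symbol-pair metric, and $d_d=2$ is exactly the value that turns the displayed numerator of Theorem~\ref{thm:plotkin} into $(d_f-2)(q^{2k}-\Phi_f)+2q^k(q^k-1)$.

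\emph{Step 1: every FCSPC is automatically an FCSPC-DP with $d_d=2$.} Let $C_f:\mathbb{F}_q^k\to\mathbb{F}_q^{k+r}$ be an $(f:d_f)_p$-FCSPC of optimal redundancy $r=r_p^f(k,d_f)$, and put $n=k+r$.
\begin{itemize}
\item First I check that $n\ge 2$. Since $|\mathrm{Im}(f)|\ge 2$, some cross-class pair must be at pair-distance at least $d_f\ge 2$. A vector of length $1$ has only one pair position, so every pair-distance at length $1$ is at most $1$. Hence $n=1$ is impossible.
\item Now take $x_1\ne x_2$. Systematicity gives $C_f(x_1)\ne C_f(x_2)$, so the two codewords differ in some coordinate $i$.
\item Because $n\ge 2$, the cyclic pair positions $i-1$ and $i$ are distinct, and both contain coordinate $i$. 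Both therefore differ, giving $d_p(C_f(x_1),C_f(x_2))\ge 2$. (Equivalently, this is the lower bound $d_p\ge 1+d_H$ of Lemma~\ref{lem:hamming-pair-distance}, together with $d_p=n\ge 2$ in the full-weight case.)
\end{itemize}
Consequently $d_p(C_f)\ge 2$, and $C_f$ is an $(f:2,d_f)_p$-FCSPC-DP, which is legitimate since $2\le d_f$.

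\emph{Step 2: compare the optimal redundancies.} Step~1 shows that any optimal $(f:d_f)_p$-FCSPC is also an $(f:2,d_f)_p$-FCSPC-DP of the same redundancy. Hence
\[
r_p^f(k,2,d_f)\le r_p^f(k,d_f).
\]
Combined with Remark~\ref{rem:r_f-r_d_f}, this is in fact an equality, though only the displayed inequality is needed.

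\emph{Step 3: substitute into Theorem~\ref{thm:plotkin}.} Apply Theorem~\ref{thm:plotkin} with $d_d=2\le d_f$ to bound $r_p^f(k,2,d_f)$ from below. The numerator becomes $(d_f-2)(q^{2k}-\Phi_f)+2q^k(q^k-1)$, and chaining with Step~2 gives exactly the claimed bound on $r_p^f(k,d_f)$.

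The only delicate point is Step~1. The gain from $d_d=1$ to $d_d=2$ relies on the cyclic wrap-around and on $n\ge 2$, so the degenerate length-one case must be excluded. The hypotheses $|\mathrm{Im}(f)|\ge 2$ and $d_f\ge 2$ are what do this.
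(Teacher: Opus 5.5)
Your proposal is correct and follows the paper's proof. The paper likewise observes that every systematic $(f:d_f)_p$-FCSPC has $d_p(C_f)\ge 2$, hence is an $(f:2,d_f)_p$-FCSPC-DP, and then applies Theorem~\ref{thm:plotkin} with $d_d=2$ (not the $d_d=1$ mentioned in the text before the corollary); your explicit check that the code length is at least $2$, which the bound $d_p\ge 2$ needs, fills in a detail the paper leaves implicit.
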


    \begin{proof}
    Every $(f : d_f)_p$-FCSPC given by a systematic encoding satisfies $d_p(C_f) \ge 2$. Hence every $(f : d_f)_p$-FCSPC is an $(f : 2, d_f)_p$-FCSPC-DP, and the claim is Theorem~\ref{thm:plotkin} with $d_d = 2$ follows.
    \end{proof}

    \subsection{Sphere-Packing Bounds}
    \label{sec:sphere}

    Throughout this subsection, $B_p(v,t) = \{y \in \mathbb{F}_q^{n} : d_p(v,y) \le t\}$ denotes the symbol-pair ball of radius $t$ about $v \in \mathbb{F}_q^{n}$, and $V_p(t,n) = |B_p(v,t)|$ its volume, which is independent of $v$ by the translation invariance of $d_p$. An $(f,t)_p$-FCSPC is a systematic encoding with $d_p^f(C_f) \ge 2t+1$.

    \begin{theorem}[Sphere-packing bound for $(f,t)_p$-FCSPC]
    \label{thm:sphere-pack-fcspc}
    Let $f : \mathbb{F}_q^k \to \mathrm{Im}(f)$ with $E = |\mathrm{Im}(f)|$ and $\ell = \min_{i \in [E]} |f^{-1}(f_i)|$. Every $(f,t)_p$-FCSPC of length $n$ satisfies
    \[
    n \;\ge\; \min\Bigl\{\, n' \,:\,
    E \cdot \min_{\substack{v_1, \dots, v_\ell \in \mathbb{F}_q^{n'}\\
    \mathrm{distinct}}}
    \Bigl|\,\bigcup_{j=1}^{\ell} B_p(v_j, t)\Bigr|
    \;\le\; q^{n'} \Bigr\},
    \]
    where $B_p(v,t)$ denotes the pair-ball of radius $t$ about $v$. Equivalently,
    \[
    r_p^f(k, t) \;\ge\; \min\Bigl\{\, n' \,:\,
    E \cdot \min_{\mathrm{distinct}}
    \Bigl|\,\bigcup_{j=1}^{\ell} B_p(v_j, t)\Bigr|
    \;\le\; q^{n'} \Bigr\} \;-\; k.
    \]
    \end{theorem}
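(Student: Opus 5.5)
The bound is a packing argument that charges each function class a union of disjoint decoding balls. Let $C_f$ be an $(f,t)_p$-FCSPC of length $n$. For each value $f_i\in\mathrm{Im}(f)$ let $\mathcal{C}_i=\{C_f(x): f(x)=f_i\}$ and define the decoding region
\[
U_i=\bigcup_{c\in\mathcal{C}_i} B_p(c,t)\subseteq\mathbb{F}_q^{n}.
\]

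The first step is to show that $U_i\cap U_j=\emptyset$ for $i\neq j$. If some $y$ lay in $B_p(c,t)\cap B_p(c',t)$ with $c\in\mathcal{C}_i$ and $c'\in\mathcal{C}_j$, then the triangle inequality for $d_p$ would give
\[
d_p(c,c')\le d_p(c,y)+d_p(y,c')\le 2t .
\]
This contradicts $d_p^f(C_f)\ge 2t+1$. The triangle inequality holds because $d_p$ is the Hamming distance on the pair vectors $\pi(\cdot)$. Hence $\sum_{i=1}^{E}|U_i|\le q^{n}$.

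The second step is to bound each $|U_i|$ from below by a quantity depending only on $\ell$, $t$ and $n$. Since $C_f$ is systematic, the codewords in $\mathcal{C}_i$ are distinct, so $|\mathcal{C}_i|=|f^{-1}(f_i)|\ge\ell$. Pick any $\ell$ of them, $v_1,\dots,v_\ell$. Monotonicity of the union then gives
\[
|U_i|\ge\Bigl|\bigcup_{j=1}^{\ell}B_p(v_j,t)\Bigr|\ge\mu_\ell(n),
\]
where $\mu_\ell(n')$ is the minimum of $\bigl|\bigcup_{j=1}^{\ell}B_p(v_j,t)\bigr|$ over $\ell$ distinct vectors in $\mathbb{F}_q^{n'}$. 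Combining the two steps yields $E\cdot\mu_\ell(n)\le q^{n}$. So $n$ belongs to the set $\{n': E\,\mu_\ell(n')\le q^{n'}\}$, and $n$ is at least the minimum of that set. The redundancy statement follows by writing $n=k+r$ and applying this to an optimal code, which gives $r_p^f(k,t)\ge\min\{\cdots\}-k$.

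The main obstacle is a subtlety rather than a hard estimate. Because the conclusion takes a minimum over $n'$, we only need that $n$ itself satisfies the defining inequality. We do not need the set $\{n': E\,\mu_\ell(n')\le q^{n'}\}$ to be upward closed, and I would state this explicitly so that no monotonicity claim in $n'$ is required.

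The set is also nonempty, since $n$ is in it. The second step uses $\ell\le q^{n}$, which holds automatically because the class has at least $\ell$ distinct codewords in $\mathbb{F}_q^{n}$.
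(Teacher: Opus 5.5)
Your proposal is correct and follows the paper's proof essentially step for step: disjoint decoding regions $U_i$ for distinct function values, each bounded below by the minimal union of $\ell$ radius-$t$ pair-balls, summed to give $E\cdot\mu_\ell(n)\le q^{n}$. Your use of the triangle inequality for $d_p$ (as Hamming distance on $\pi(\cdot)$) is a more rigorous version of the paper's decoding-ambiguity argument for disjointness. You are also right that $n$ only needs to belong to the set, and that no upward closure in $n'$ is required.
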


    \begin{proof}
    Let $C$ be an $(f,t)_p$-FCSPC of length $n$. For each $i \in [E]$, consider the union of pair-balls
    \[
    U_i \;=\; \bigcup_{u \in f^{-1}(f_i)} B_p\big(C(x), t\big).
    \]
    We claim the sets $U_1, \dots, U_E$ are pairwise disjoint. Suppose that there exists some $x \in U_i \cap U_{i'}$ with $i \ne i'$. Then there are codewords $C(x)$ with $f(x) = f_i$ and $C(u')$ with $f(u') = f_{i'}$ such that $d_p(x, C(x)) \le t$ and $d_p(x, C(u')) \le t$. A received word $x$ could then be decoded to either of the codewords, yielding different function values, thereby contradicting the definition of FCSPC. Hence, the $U_i$ are pairwise disjoint.

    Each $U_i$ is a union of $|f^{-1}(f_i)|$ pair-balls of radius $t$. Because $|f^{-1}(f_i)| \ge \ell$ for every $i$, the size of each $U_i$ is at least the minimum possible size of a union of $\ell$ radius-$t$ pair-balls:
    \[
    |U_i| \;\ge\; \min_{\substack{w_1, \dots, w_\ell \in \mathbb{F}_q^n \\ \text{distinct}}}
    \left|\, \bigcup_{j=1}^{\ell} B_p(w_j, t) \,\right|
    \;=\; \left|\, \bigcup_{j=1}^{\ell} B_p(v_j, t) \,\right|,
    \]
    where $v_1, \dots, v_\ell$ achieve the minimum. As the $U_i$ are disjoint subsets of $\mathbb{F}_q^n$,
    \[
    E \cdot \left|\, \bigcup_{j=1}^{\ell} B_p(v_j, t) \,\right| \;\le\; \sum_{i=1}^{E} |U_i| \;\le\; q^n,
    \]
    Rearranging the above inequality gives the condition on $n$, and the smallest such $n$ is the stated bound.
    \end{proof}


    \begin{theorem}[Sphere-packing bound for $(f:d_d,d_f)_p$-FCSPC]
    \label{thm:sphere-pack-dp}
    Let $f : \mathbb{F}_q^k \to \mathrm{Im}(f)$ with $E = |\mathrm{Im}(f)|$ and $\ell = \min_{i \in [E]} |f^{-1}(f_i)|$. Then every $(f : d_d, d_f)_p$-FCSPC of length $n$ satisfies
    \[
    n \;\ge\; \min\Bigl\{\, n' \,:\,
    E \cdot \min_{\substack{v_1, \dots, v_\ell \in \mathbb{F}_q^{n'}\\
    d_p(v_i, v_j) \ge d_d \ \forall\, i \ne j}}
    \Bigl|\,\bigcup_{j=1}^{\ell} B_p(v_j, \lfloor (d_f - 1)/2 \rfloor)\Bigr|
    \;\le\; q^{n'} \Bigr\}.
    \]
    Equivalently,
    \[
    r_p^f(k, d_d, d_f) \;\ge\; \min\Bigl\{\, n' \,:\,
    E \cdot \min_{\substack{v_1, \dots, v_\ell\\
    d_p(v_i, v_j) \ge d_d}}
    \Bigl|\,\bigcup_{j=1}^{\ell} B_p(v_j, \lfloor (d_f - 1)/2 \rfloor)\Bigr|
    \;\le\; q^{n'} \Bigr\} \;-\; k.
    \]
    \end{theorem}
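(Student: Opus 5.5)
The argument mirrors the proof of Theorem~\ref{thm:sphere-pack-fcspc}; the only new ingredient is that data protection forces the ball centres within one class to be pairwise far apart. Let $C_f$ be an $(f:d_d,d_f)_p$-FCSPC-DP of length $n$, and put $t_f=\lfloor (d_f-1)/2\rfloor$. For each $f_i\in\mathrm{Im}(f)$ define
\[
U_i=\bigcup_{x\in f^{-1}(f_i)} B_p\bigl(C_f(x),t_f\bigr).
\]

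\textbf{Step 1: the sets $U_i$ are pairwise disjoint.} Suppose $y\in U_i\cap U_{i'}$ with $i\neq i'$. Then there are $x,x'$ with $f(x)=f_i\neq f_{i'}=f(x')$ and $d_p(y,C_f(x))\le t_f$, $d_p(y,C_f(x'))\le t_f$. The triangle inequality for $d_p$ holds, since $d_p$ is a Hamming distance on the pair vectors $\pi(\cdot)$. Hence
\[
d_p\bigl(C_f(x),C_f(x')\bigr)\le 2t_f\le d_f-1,
\]
which contradicts $d_p^f(C_f)\ge d_f$.

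\textbf{Step 2: lower-bound each $|U_i|$.} Choose any $\ell$ messages in $f^{-1}(f_i)$; this is possible because $|f^{-1}(f_i)|\ge \ell$. Their codewords $v_1,\dots,v_\ell\in\mathbb{F}_q^n$ are distinct and pairwise satisfy $d_p(v_a,v_b)\ge d_p(C_f)\ge d_d$. Unions are monotone, so $|U_i|$ is at least the size of the union of the $\ell$ balls around these points. That size is in turn at least the minimum of $\bigl|\bigcup_j B_p(v_j,t_f)\bigr|$ over all $\ell$-tuples in $\mathbb{F}_q^n$ with pairwise pair-distance at least $d_d$. Call this minimum $\mu(n)$; it is well defined because such a tuple exists at length $n$.

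\textbf{Step 3: combine.} Summing the disjoint sets gives $E\cdot\mu(n)\le\sum_i|U_i|\le q^n$. So $n$ belongs to the set $\{n' : E\cdot\mu(n')\le q^{n'}\}$, and therefore $n$ is at least the minimum of that set. Setting $n=k+r$ and taking an optimal code yields the stated bound on $r_p^f(k,d_d,d_f)$.

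The only delicate point is Step 2. Compared with Theorem~\ref{thm:sphere-pack-fcspc}, the minimisation now ranges over a smaller family of tuples, those with pairwise distance at least $d_d$. Restricting the family can only increase the minimum, which is why the bound is stronger than the earlier one. One must still check that the restricted family is nonempty at the relevant length, and the code itself supplies a witness. No further estimate beyond this monotonicity is needed.
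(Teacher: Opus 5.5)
Your proposal is correct and follows essentially the same route as the paper: you form the same pairwise-disjoint class unions $U_i$ of radius-$\lfloor (d_f-1)/2\rfloor$ pair-balls, and you use the data-protection clause only to restrict the minimisation to $\ell$-tuples with pairwise pair-distance at least $d_d$. You also make explicit two points the paper only sketches: disjointness via the triangle inequality ($d_p \le 2t_f \le d_f-1$), and the observation that the code itself shows the constrained family is nonempty at length $n$.
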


    \begin{proof}
    The proof follows the same arguments as that of Theorem~\ref{thm:sphere-pack-fcspc} with one difference. In an $(f : d_d, d_f)_p$-FCSPC, the data-protection clause guarantees $d_p(C(u_1), C(u_2)) \ge d_d$ for all $u_1 \ne u_2$. In particular, the centers $\{C(x) : u \in f^{-1}(f_i)\}$ of the balls forming each $U_i$ are pairwise at pair-distance at least $d_d$. The minimum over unconstrained configurations in the lower bound on $|U_i|$ may therefore be replaced by the minimum over configurations satisfying this separation, which is no smaller. The remainder of the argument is unchanged: the $U_i$ are pairwise disjoint because cross-class pair-distances satisfy $d_p \ge d_f \ge 2t_f + 1$, and summing over all $E$ classes gives the stated bound on $n$.
    \end{proof}

    \begin{corollary}[Explicit sphere-packing bound]\label{cor:sphere-pack-explicit}
    Let $f : \mathbb{F}_q^k \to \mathrm{Im}(f)$ with $E = |\mathrm{Im}(f)|$ and $\ell = \min_{i \in [E]} |f^{-1}(f_i)|$. Then every $(f : d_d, d_f)_p$-FCSPC of length $n$ satisfies
    \[
    r_p^f(k, d_d, d_f) \;\ge\; \min\Bigl\{\, n' \,:\,
    E \cdot \ell \cdot |B_p(\lfloor (d_d - 1)/2 \rfloor, n')| \;\le\; q^{n'} \Bigr\} \;-\; k,
    \]
    \end{corollary}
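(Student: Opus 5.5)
The plan is to weaken the lower bound of Theorem~\ref{thm:sphere-pack-dp} into something explicit. That bound already says every $(f:d_d,d_f)_p$-FCSPC of length $n$ needs $n\ge n'_0$, where $n'_0$ is the least $n'$ with
\[
E\cdot \min_{v_1,\dots,v_\ell:\ d_p(v_i,v_j)\ge d_d}\Bigl|\bigcup_{j=1}^{\ell} B_p(v_j,t_f)\Bigr|\le q^{n'}, \qquad t_f=\lfloor (d_f-1)/2\rfloor .
\]
It therefore suffices to bound the constrained union volume from below by $\ell\,V_p(t_d,n')$, where $t_d=\lfloor (d_d-1)/2\rfloor$ and $V_p(t_d,n')=|B_p(\cdot,t_d)|$ is the quantity written $|B_p(t_d,n')|$ in the statement. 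With that bound, any $n'$ meeting the condition of Theorem~\ref{thm:sphere-pack-dp} also satisfies $E\,\ell\,V_p(t_d,n')\le q^{n'}$. Hence the minimum in the corollary is at most $n'_0\le n$, and subtracting $k$ gives the redundancy bound.

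The volume estimate proceeds in two steps.

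\emph{Shrink the radius.} Since $d_d\le d_f$, we have $t_d\le t_f$, so $B_p(v_j,t_d)\subseteq B_p(v_j,t_f)$. The union of the larger balls therefore contains the union of the smaller ones.

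\emph{Use disjointness.} For centres with $d_p(v_i,v_j)\ge d_d\ge 2t_d+1$, the balls $B_p(v_i,t_d)$ and $B_p(v_j,t_d)$ are pairwise disjoint, by the triangle inequality for $d_p$ (it is a Hamming distance on $\pi(\cdot)$, hence a metric, and this is the same fact behind Lemma~\ref{lem:pair_correction}). So the union of the radius-$t_d$ balls has size exactly $\ell\,V_p(t_d,n')$, using translation invariance of the ball volume as stated at the start of the subsection. This holds for every admissible configuration, so it bounds the minimum.

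One must also make sure the minimum in Theorem~\ref{thm:sphere-pack-dp} is taken over a nonempty set at the relevant $n'$. Otherwise the condition is vacuous or ill-posed, and the comparison of the two minima could fail. At the actual length $n$, the encoded preimage of any class is a valid configuration, so the comparison $n'_0\le n$ is legitimate. For the monotone reading, I would argue directly at length $n$ rather than chasing minima: the disjoint sets $U_i$ each contain $|f^{-1}(f_i)|\ge\ell$ disjoint radius-$t_d$ balls, so $E\ell V_p(t_d,n)\le q^n$. Then $n$ belongs to the set whose minimum is taken, and the minimum is at most $n$.

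The main obstacle is only this bookkeeping: the disjointness argument itself is routine. The direct argument at length $n$ avoids any question of whether the set of admissible centre configurations is empty for smaller $n'$, so I would present it that way.
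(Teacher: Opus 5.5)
Your proposal is correct and follows the paper's proof. Shrink the radius from $\lfloor (d_f-1)/2\rfloor$ to $\lfloor (d_d-1)/2\rfloor$; the smaller balls around $d_d$-separated centres are then pairwise disjoint, which gives the lower bound $\ell\,V_p(\lfloor (d_d-1)/2\rfloor,n')$ on the union in Theorem~\ref{thm:sphere-pack-dp}. Running the estimate directly on the codewords at the actual length $n$, so that $n$ lies in the minimised set, is a sound way to avoid the empty-configuration convention left implicit in the paper's substitution, and applied to a code of optimal redundancy it gives the same bound.
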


    \begin{proof}
    Since $d_p(v_i, v_j) \ge d_d$, the balls $B_p(v_j, \lfloor (d_d - 1)/2 \rfloor)$ for $j = 1, \dots, \ell$ are pairwise disjoint. As $\lfloor (d_d - 1)/2 \rfloor \le \lfloor (d_f - 1)/2 \rfloor$,
    each is contained in $B_p(v_j, \lfloor (d_f - 1)/2 \rfloor)$, so
    \[
    \Bigl|\,\bigcup_{j=1}^{\ell} B_p(v_j, \lfloor (d_f - 1)/2 \rfloor)\Bigr|
    \;\ge\;
    \Bigl|\,\bigcup_{j=1}^{\ell} B_p(v_j, \lfloor (d_d - 1)/2 \rfloor)\Bigr|
    \;=\;
    \sum_{j=1}^{\ell} |B_p(v_j, \lfloor (d_d - 1)/2 \rfloor)|
    \;=\;
    \ell \cdot |B_p(\lfloor (d_d - 1)/2 \rfloor, n')|,
    \]
    The last equality follows from the centre-independence of the pair-ball volume. Substituting into Theorem~\ref{thm:sphere-pack-dp} gives the stated bound.
    \end{proof}

    \begin{remark}
    When $f$ is a bijection, $E = q^k$, $\ell = 1$, and $d_d = d_f$, so the bound
    reduces to the sphere-packing bound for ordinary symbol-pair codes \cite[Prop.~18]{cassuto2011codes} (a pair-error-correcting code of length $n$ satisfies $q^k \cdot |B_p(t, n)| \le q^n$).
    \end{remark}

\section{Conclusion}\label{sec:conclusion}

    We introduced function-correcting symbol-pair codes with data protection, in which a prescribed level of protection is guaranteed for the message, and the function value under symbol-pair reads. Building on two distance notions, the minimum symbol-pair distance and the function minimum pair-distance, we showed that the Cassuto--Blaum relation between the Hamming and symbol-pair metrics extends to the latter, providing a two-way passage between the metrics at the level of codes.

    For optimal redundancy, we developed joint pair-distance requirement matrices and established a relationship between the length of the irregular pair-distance code corresponding to these matrices and the optimal redundancy. We also adapted the two-step construction of the Hamming setting, along with the corresponding coded pair-requirement matrices. We introduced the pair-separation constant for a function and showed that, when it is sufficiently large, data protection is free. Using the symbol-pair analogue of the $\alpha$-distance graph, we introduced the generation profile and the disconnection threshold and characterised the strict parameters realisable by a given linear code, tracing the trade-off between the strength of function protection and the number of function values protected. We gave explicit constructions for pair-locally bounded functions and the symbol-pair weight function, and extended the Plotkin and sphere-packing bounds to this setting.

\ifCLASSOPTIONcaptionsoff
  \newpage
\fi



 \bibliographystyle{abbrv}
\bibliography{bibtex/bib/main}
%

%








\end{document}